\patchcmd\algocf@Vline{\vrule}{\vrule \kern-0.4pt}{}{}
\patchcmd\algocf@Vsline{\vrule}{\vrule \kern-0.4pt}{}{}
\DeclarePairedDelimiter\abs{\lvert}{\rvert}
\newcommand{\ie}{i.e.\@\xspace}
\newcommand{\eg}{e.g.\@\xspace}
\newcommand{\Next}{\LTLnext}
\newcommand{\Globally}{\LTLsquare}
\newcommand{\Eventually}{\LTLdiamond}
\newcommand{\Until}{\LTLuntil}
\newcommand{\values}{\mathcal{V}}
\newcommand{\functions}{\mathcal{F}}
\newcommand{\predicates}{\mathcal{P}}
\newcommand{\funcTerms}{\mathcal{T}_F}
\newcommand{\predTerms}{\mathcal{T}_P}
\newcommand{\allModels}[1]{\mathit{Models}(#1)}
\newcommand{\free}[1]{\operatorname{free}(#1)}
\newcommand{\true}{\mathit{true}}
\newcommand{\false}{\mathit{false}}
\newcommand{\update}[2]{\llbracket #1 \leftarrowtail #2 \rrbracket}
\newcommand{\predTerm}{\tau_p}
\newcommand{\funcTerm}{\tau_f}
\newcommand{\cell}[1]{{\normalfont{\texttt{#1}}}}
\newcommand{\cells}{\mathbb{C}}
\newcommand{\funcLiterals}{\Sigma_F}
\newcommand{\predLiterals}{\Sigma_P}
\newcommand{\assign}{{\langle \cdot \rangle}}
\newcommand{\finitary}[2]{\operatorname{fin}_{#1}(#2)}
\newcommand{\allUpdates}[1]{\operatorname{updates}(#1)}
\newcommand{\allPredicates}[1]{\operatorname{predicates}(#1)}
\newcommand{\allCells}[1]{\operatorname{cells}(#1)}
\newcommand{\upd}[2]{\update{\mathit{#1}}{#2}}
\newcommand{\updcell}[2]{\update{\mathit{#1}}{\mathit{#2}}}
\newcommand{\updconst}[2]{\update{\mathit{#1}}{\mathit{#2}()}}
\newcommand{\updappl}[3]{\update{\mathit{#1}}{#2(\mathit{#3})}}
\newcommand{\spec}[1]{\varphi_\mathit{#1}}
\newcommand{\predconst}[1]{p(\mathit{#1}())}
\newcommand{\predcell}[1]{p(\mathit{#1})}
\newcommand{\computation}{\varsigma}
\newcommand{\etaFunction}[3]{\eta(#1,#2,#3)}
\newcommand{\recursiveAssign}[1]{\chi_{\assign}(#1)}
\newcommand{\recursiveAssignFin}[1]{\chi_{\assign_\mathit{fin}}(#1)}
\newcommand{\TSLSatisfaction}[2]{\models_{#1,#2}}
\newcommand{\guards}{\mathcal{G}}
\newcommand{\updateTerms}{\mathcal{U}}
\newcommand{\fresh}{\bullet}
\newcommand{\atPos}[2]{{{#1}_{#2}}}
\newcommand{\projection}[2]{\pi_{#2}(#1)}
\newcommand{\symbolicLanguage}[1]{\mathcal{L}(#1)}
\newcommand{\theoryLanguage}[2]{\mathcal{L}_{#1}(#2)}
\newcommand{\symbolicExecution}{(\computation,\mathcal{V},\assign)}
\newcommand{\constraintTrace}{\varrho}
\newcommand{\executionEffect}[1]{\operatorname{effect}(#1)}
\newcommand{\compBSA}{\mathcal{R}}
\newcommand{\accBSA}[1]{\operatorname{accepting}(#1)}
\newcommand{\tu}{T_U}
\newcommand{\te}{T_E}
\newcommand{\tp}{T_\mathbb{N}}
\newcommand{\pref}{\mathit{pref}\!}
\newcommand{\rec}{\mathit{rec}}
\newcommand{\sat}{{\normalfont{SAT}}}
\newcommand{\unsat}{{\normalfont{UNSAT}}}
\newcommand{\while}{\textsc{While}}
\newcommand{\goto}{\textsc{Goto}}
\newcommand{\gotoNSC}{\normalfont{GOTO}}
\newcommand{\gotoifeq}{\textsc{GotoIfEq}}
\newcommand{\inc}{\textsc{Inc}}
\newcommand{\res}{\textsc{Reset}}
\newcommand{\dec}{\textsc{Dec}}
\newcommand{\encInc}{f}
\newcommand{\encEq}{\mathrel{\widehat{=}}}
\newcommand{\encRes}{z}
\newcommand{\tslgoto}[1]{\varphi^{#1}_\mathcal{G}}
\newcommand{\depth}[1]{\operatorname{depth}(#1)}
\newcommand{\maxDepth}[2]{\operatorname{maxDepth}_{#1}(#2)}
\newenvironment{proofsketch}{%
  \proof}{\endproof}
\begin{document}
\title{Temporal Stream Logic modulo Theories \\ (Full Version)\thanks{This work was partially supported by the German Research Foundation~(DFG) as part of the Collaborative Research Center ``Foundations of Perspicuous Software Systems'' (TRR 248, 389792660), and by the European Research Council (ERC) Grant OSARES (No. 683300). Philippe Heim and Noemi Passing carried out this work as PhD candidates at Saarland University, Germany.}\footnote[3]{This is an extended version of~\cite{FinalVersion}.}}
\titlerunning{Temporal Stream Logic modulo Theories (Full Version)}
%
\author{Bernd Finkbeiner \and 
Philippe Heim\and 
Noemi Passing
}
\authorrunning{B.\ Finkbeiner et al.}
%
\institute{CISPA Helmholtz Center for Information Security, Saarbrücken, Germany
\email{\{finkbeiner, philippe.heim, noemi.passing\}@cispa.de}
}
\maketitle              
\begin{abstract}
	Temporal stream logic (TSL) extends LTL with updates and predicates over arbitrary function terms. This allows for specifying data-intensive systems for which LTL is not expressive enough. In the semantics of TSL, functions and predicates are left uninterpreted. In this paper, we extend TSL with first-order theories, enabling us to specify systems using interpreted functions and predicates such as incrementation or equality. We investigate the satisfiability problem of TSL modulo the standard underlying theory of uninterpreted functions as well as with respect to Presburger arithmetic and the theory of equality: For all three theories, TSL satisfiability is highly undecidable. Nevertheless, we identify three fragments of TSL for which the satisfiability problem is (semi-)decidable in the theory of uninterpreted functions. Despite the high undecidability, we present an algorithm -- which is not guaranteed to terminate -- for checking the satisfiability of a TSL formula in the theory of uninterpreted functions and evaluate it: It scales well and is able to validate assumptions in a real-world system design.
\end{abstract}

\setcounter{footnote}{0}


\section{Introduction}

Linear-time temporal logic (LTL)~\cite{Pnueli77} is one of the standard specification languages to describe properties of reactive systems. The success of LTL is largely due to its ability to abstract from the detailed data manipulations and to focus on the change of control over time. In data-intensive applications, such as smartphone apps, LTL is, however, often not expressive enough to capture the relevant properties.
When specifying a music player app, for instance, we would like to state that if the user leaves the app, the track that is currently playing will be stored and will resume playing once the user returns to the app.

To specify data-intensive systems, extensions of LTL such as Constraint LTL (CLTL)~\cite{DemriD07} and, more recently, Temporal Stream Logic (TSL)~\cite{FinkbeinerKPS19} have been proposed. In CLTL, the atomic propositions of LTL are replaced with atomic constraints over a concrete domain~$D$ and an interpretation for relations. 
Relating variables with the equality relation, such as $x = y$, denoting that the value of~$x$ is equal to the value of $y$, allows for specifying assignment-like statements.
In this paper, however, we focus on the logic TSL to specify data-intensive systems.

TSL extends LTL with updates and predicates over arbitrary function terms. An update $\update{x}{f(y)}$ denotes that the result of applying function $f$ to variable~$y$ is assigned to variable~$x$. For the music player app, for instance, the update $\update{\mathit{paused}}{\mathit{track}(\mathit{current})}$ specifies that the track that is currently playing, obtained by applying function $\mathit{track}$ to variable $\mathit{current}$, is stored in variable $\mathit{paused}$.
Updates are the main characteristic of TSL that differentiates it from other first-order extensions of LTL:
They allow for specifying the evolution of variables over time. Thus, programs can be represented in TSL and therefore, for instance, the model checking problem can be encoded.

In the semantics of TSL, functions and predicates are left uninterpreted, \ie, a system satisfies a TSL formula if the formula evaluates to $\true$ for all possible interpretations of the function and predicate symbols. This semantics has proven especially useful in the synthesis of reactive programs~\cite{FinkbeinerKPS19,GeierH0F19}, where the synthesis algorithm builds a control structure, while the implementation of the functions and predicates is either done manually or provided by some library. One exemplary success story of TSL-based specification and synthesis of a reactive system is the arcade game Syntroids~\cite{GeierH0F19} realized on an FPGA.

In this paper, we define and investigate the satisfiability problem of TSL modulo the standard underlying theory of uninterpreted functions and with respect to other first-order theories such as the theory of equality and Presburger arithmetic.
Intuitively, a TSL formula~$\varphi$ is satisfiable in a theory $T$ if there is an execution satisfying~$\varphi$ that matches the function applications and predicate constraints of an interpretation in~$T$. TSL validity in $T$ is dual: A TSL formula~$\varphi$ is valid in a theory $T$ if, and only if, $\neg \varphi$ is unsatisfiable in $T$.

For LTL, satisfiability is decidable~\cite{SistlaC85} and efficient algorithms for checking the satisfiability of an LTL formula have been implemented in tools like Aalta~\cite{LiZPVH13}. Satisfiability checking has numerous applications in the specification and analysis of reactive systems, such as identifying inconsistent system requirements during the design process, comparing different formalizations of the same requirements, and various types of vacuity checking.
TSL satisfiability checking extends these applications to data-intensive systems.

We present an algorithm for checking the satisfiability of a TSL formula in the theory of uninterpreted functions. It is based on \emph{Büchi stream automata}~(BSAs), a new kind of $\omega$-automata that we introduce in this paper. BSAs can handle the predicates and updates occurring in TSL formulas. Similar to the relationship between LTL formulas and nondeterministic Büchi automata, BSAs are an automaton representation of TSL formulas, \ie, there exists an equivalent BSA for every TSL formula.
Given a TSL formula $\varphi$, our algorithm constructs an equivalent BSA $\mathcal{B}_\varphi$ and then tries to prove satisfiability and unsatisfiability in parallel: For proving satisfiability, it searches for a lasso that ensures consistency of the function terms in an accepting run of $\mathcal{B}_\varphi$. If such a lasso is found, $\varphi$ is satisfiable.
For proving unsatisfiability, the algorithm discards inconsistent runs of $\mathcal{B}_\varphi$. If no accepting run is left, $\varphi$ is unsatisfiable.

The algorithm does not always terminate. In fact, we show that TSL satisfiability is neither semi-decidable nor co-semi-decidable in the theory of uninterpreted functions. Thus, no complete algorithm exists. The undecidability result extends to the theory of equality and Presburger arithmetic.
There exist, however, (semi-)decidable fragments of TSL in the theory of uninterpreted functions: For satisfiable formulas with a single variable as well as satisfiable reachability formulas, our algorithm is guaranteed to terminate. For slightly more restricted reachability formulas, satisfiability is decidable.

We have implemented the algorithm and evaluated it, clearly illustrating its applicability:~It terminates within one second on many randomly generated formulas and scales particularly well for satisfiable formulas. Moreover, it is able to check realistic benchmarks for consistency and to (in-)validate their assumptions.
Most notably, we successfully validate the assumptions of a Syntroids module.

A preliminary version of this paper has been published on arXiv~\cite{arxiv}. This already lead to further research on TSL modulo theories: Maderbacher and Bloem show that the synthesis problem for TSL modulo theories is undecidable in general and present a synthesis procedure for TSL modulo theories based on a counter-example guided LTL synthesis loop~\cite{MaderbacherB21}.

\section{Preliminaries}

We assume time to be discrete.
A \emph{value} can be of arbitrary type and we denote the set of all values by~$\values$.
The Boolean values are denoted by $\mathbb{B} \subseteq \values$.
Given $n$ values, an $n$-ary function $f: \values^n \rightarrow \values$ computes a new value. An $n$-ary predicate $p: \values^n \rightarrow \mathbb{B}$ determines whether a property over~$n$ values is satisfied.
The sets of all functions and predicates are denoted by~$\functions$ and $\predicates \subseteq \functions$, respectively.
Constants are both functions of arity zero and values.
Starting from $0$, we denote the $i$-th position of an infinite word $\sigma$ by $\atPos{\sigma}{i}$ and the $i$-th component of a tuple $t$ by $\projection{t}{i}$.

To argue about functions and predicates, we use a term based notation. 
\emph{Function terms~$\funcTerm$} are constructed from variables and functions, recursively applied to a set of function terms.
\emph{Predicate terms~$\predTerm$} are constructed by applying a predicate to function terms.
The sets of all function and predicate terms are denoted by $\funcTerms$ and $\predTerms \subseteq \funcTerms$, respectively.
Given sets $\Sigma_F$, $\Sigma_P$ of function and predicate symbols with $\Sigma_P \subseteq \Sigma_F$, a set $V$ of variables, and a set $\mathcal{V}$ of values, let $\assign : V \cup \Sigma_F \to \mathcal{V} \cup \mathcal{F}$ be an \emph{assignment function} assigning a concrete function (predicate) to each function (predicate) symbol and an initial value to each variable. 
We require $\langle v \rangle \in \mathcal{V}$, $\langle f \rangle \in \mathcal{F}$, and $\langle p \rangle \in \mathcal{P}$ for $v \in V$, $f \in \Sigma_F$, $p \in \Sigma_P$.
The evaluation $\chi_{\assign} : \funcTerms \to \mathcal{V} \cup \mathbb{B}$ of function terms is defined by $\recursiveAssign{v} := \langle v \rangle$ for $v \in V$, and by $\recursiveAssign{f(\tau_0, \dots, \tau_n)} := \langle f \rangle (\recursiveAssign{\tau_0}, \dots, \recursiveAssign{\tau_n})$ for $f \in \Sigma_F \cup \Sigma_P$.

Functions and predicates are not tied to a specific interpretation. To restrict the possible interpretations, we utilize \emph{first-order theories}.
A first-order theory~$T$ is a tuple $(\Sigma_F, \Sigma_P, \mathcal{A})$, where $\Sigma_F$ and $\Sigma_P$ are sets of function and predicate symbols, respectively, and $\mathcal{A}$ is a set of closed first-order logic formulas over~$\Sigma_F$,~$\Sigma_P$, and a set of variables $V$. For an introduction to first-order logic, we refer to \Cref{app:fol}.
The elements of~$\mathcal{A}$ are called the \emph{axioms} of~$T$ and $\Sigma_F \cup \Sigma_P$ is called the \emph{signature} of~$T$.
A \emph{model} $\mathcal{M}$ for a theory $T=(\Sigma_F, \Sigma_P, \mathcal{A})$ is a tuple $(\mathcal{V}, \assign)$, where~$\mathcal{V}$ is a set of values and $\assign$ is an assignment function as introduced above. Furthermore, $(\mathcal{V}, \assign)$ is required to entail $\varphi_A$ for each axiom $\varphi_A \in \mathcal{A}$. The set of all models of a theory $T$ is denoted by $\allModels{T}$.

In the remainder of this paper, we focus on the following three theories:
The \emph{theory of uninterpreted functions $\tu$} is a theory without any axioms, \ie, every symbol is uninterpreted. It allows for arbitrarily many function and predicate symbols. 
The \emph{theory of equality $\te$} additionally includes equality, \ie, its axioms enforce the equality symbol $=$ to indeed represent equality. 
The \emph{theory of Presburger arithmetic $\tp$} implements the idea of numbers. Its axioms define the constants $0$ and $1$ as well as equality and addition.

\section{Temporal Stream Logic modulo Theories}

In this section, we introduce \emph{Temporal Stream Logic modulo theories}, an extension of the recently introduced logic Temporal Stream Logic (TSL)~\cite{FinkbeinerKPS19} with first-order theories. First, we recap the main idea of TSL as well as its syntax and semantics. Afterwards, we extend TSL with first-order theories and define the basic notions of satisfiability and validity for TSL formulas modulo theories.

\subsection{Temporal Stream Logic}
Temporal Stream Logic (TSL)~\cite{FinkbeinerKPS19} is a temporal logic that separates temporal control and pure data. Data is represented as infinite streams of arbitrary type. TSL allows for checks and manipulations of streams on an abstract level: It focuses on the control flow and abstracts away concrete implementation details. The temporal structure of the data is expressed by temporal operators as in~LTL~\cite{Pnueli77}.
TSL is especially designed for reactive synthesis and thus distinguishes between uncontrollable input streams and controllable out\-put streams, so-called \emph{cells}.
In this paper, this distinction is not necessary since we consider TSL independent of its usage in synthesis. 
Thus, we use the notions of streams and cells as synonyms. The finite set of all cells is denoted by $\cells$.

In TSL, we use functions $f\in\functions$ to modify cells and predicates $p\in\predicates$ to perform checks on cells. 
The cells $\cell{c} \in \cells$ serve as variables for function terms.
The sets of all function and predicate terms over $\funcLiterals$, $\predLiterals$, and $\cells$ are denoted by $\funcTerms$ and $\predTerms$.
TSL formulas are built according to the following grammar:
\[ \varphi, \psi := \true \mid \neg \varphi \mid \varphi \land \psi \mid \Next \varphi \mid \varphi \Until \psi \mid \predTerm \mid \update{\cell{c}}{\funcTerm} \]
where $\cell{c} \in \cells$, $\predTerm \in \predTerms$, and $\funcTerm \in \funcTerms$.
An \emph{update} $\update{\cell{c}}{\tau_f}$ denotes that the value of the function term $\tau_f$ is assigned to cell $\cell{c}$. 
The value of $\tau_f$ may depend on the value of the cells occurring in $\tau_f$.
The temporal operators $\Next \varphi$ and $\varphi \Until \psi$ are similar to the ones in LTL.
We define $\Eventually \varphi = \true \Until \varphi$ and $\Globally \varphi = \neg \Eventually \neg \varphi$. 

Since functions and predicates are represented symbolically, they are not tied to a specific implementation. 
To assign an interpretation to them, we use an assignment function $\assign: \cells \cup \Sigma_F \to \mathcal{V} \cup \mathcal{F}$, where $\mathcal{V}$ is a set of values. We require $\langle \cell{c} \rangle \in \mathcal{V}$, $\langle f \rangle \in \mathcal{F}$ and $\langle p \rangle \in \mathcal{P}$ for $\cell{c} \in \cells$, $f \in \Sigma_F$, and $p \in \Sigma_P$
Note that~$\assign$ also assigns an initial value to each cell.
Terms can be compared syntactically with the equivalence relation $\equiv$.
The set of all assignments of cells $\cell{c} \in \cells$ to function terms $\funcTerm \in \funcTerms$ is denoted by $\mathcal{C}$.
A \emph{computation} $\computation \in \mathcal{C}^\omega$ is an infinite sequence of assignments of cells to function terms, capturing the behavior of cells over time.
The satisfaction of a TSL formula $\varphi$ with respect to $\computation$, a set of values $\mathcal{V}$, an assignment function $\assign$, and a time step $t$ is defined by:\footnote{Note that we use a slightly different, but equivalent, definition than~\cite{FinkbeinerKPS19}: Instead of evaluating the function and predicate symbols on the fly, we construct the whole term first and then evaluate it recursively using the evaluation function~$\chi_\assign$.}
\begin{alignat*}{3}
	&\computation, t \TSLSatisfaction{\mathcal{V}}{\assign} \neg \varphi ~ &:\Leftrightarrow & ~ \computation, t \not\TSLSatisfaction{\mathcal{V}}{\assign} \varphi\\
	&\computation, t \TSLSatisfaction{\mathcal{V}}{\assign} \varphi \land \psi ~ &:\Leftrightarrow & ~ \computation, t \TSLSatisfaction{\mathcal{V}}{\assign} \varphi \land \computation, t \TSLSatisfaction{\mathcal{V}}{\assign} \psi\\
	&\computation, t \TSLSatisfaction{\mathcal{V}}{\assign} \Next \varphi ~ &:\Leftrightarrow & ~ \computation, t+1 \TSLSatisfaction{\mathcal{V}}{\assign} \varphi\\
	&\computation, t \TSLSatisfaction{\mathcal{V}}{\assign} \varphi \Until \psi ~ &:\Leftrightarrow & ~ \exists t'' \geq t. \forall t \leq t' < t''. ~ \computation, t' \TSLSatisfaction{\mathcal{V}}{\assign} \varphi ~ \land ~ \computation, t'' \TSLSatisfaction{\mathcal{V}}{\assign} \psi\\
    &\computation, t \TSLSatisfaction{\mathcal{V}}{\assign} \update{\cell{c}}{\tau} ~ &:\Leftrightarrow & ~ \atPos{\computation}{t}(\cell{c}) \equiv \tau\\
	&\computation, t \TSLSatisfaction{\mathcal{V}}{\assign} p(\tau_0,\dots,\tau_{m}) ~ &:\Leftrightarrow & ~ \recursiveAssign{\etaFunction{\computation}{t}{p(\tau_0,\dots,\tau_{m})}},
\end{alignat*}
where $\eta: \mathcal{C}^\omega \times \mathbb{N} \times \funcTerms \rightarrow \funcTerms$ is a symbolic evaluation function defined by
\begin{align*}
	\etaFunction{\computation}{t}{\cell{c}} &= \begin{cases}
		\cell{c} & \text{if } t = 0\\
		\etaFunction{\computation}{t-1}{\atPos{\computation}{t-1}(\cell{c})} & \text{if } t > 0
	\end{cases}\\
	\etaFunction{\computation}{t}{f(\tau_0, \dots, \tau_{m})} &= f(\etaFunction{\computation}{t}{\tau_0}, \dots, \etaFunction{\computation}{t}{\tau_{m}})
\end{align*}

We call $\symbolicExecution$ an \emph{execution}. 
The satisfaction of a predicate depends on the current and the past steps in the computation. For updates, the satisfaction depends solely on the current step.
While updates are only checked syntactically, the satisfaction of predicates depends on the given assignment~$\assign$.
An execution $\symbolicExecution$ \emph{satisfies} a TSL formula~$\varphi$, denoted $\computation \TSLSatisfaction{\mathcal{V}}{\assign} \varphi$, if $\computation, 0 \TSLSatisfaction{\mathcal{V}}{\assign} \varphi$ holds.

\begin{example}\label{ex:tsl}
	Suppose that we have a single cell $\cell{x}$, \ie, $\cells = \{ \cell{x} \}$.
	Consider the computation $\computation = (\{ \lambda \cell{c}.f(\cell{x}) \})^\omega$, \ie, $f(\cell{x})$ is assigned to cell $\cell{x}$ in every time step. Let $\mathcal{V}=\mathbb{N}$ be the set of values and let $\assign$ be an assignment function such that the initial value of $\cell{x}$ is $1$, function $f$ corresponds to incrementation, and predicate~$p$ determines whether its argument is even ($\true)$ or odd ($\false$).
	Consider the TSL formula $\varphi := \update{\cell{x}}{f(\cell{x})} \land \neg p(\cell{x}) \land \Next p(\cell{x})$. By the semantics of TSL, we have $\computation, 0 \TSLSatisfaction{\mathcal{V}}{\assign} \varphi$ if, and only if, $(\atPos{\computation}{0}(\cell{x}) = f(\cell{x})) \land (\neg \langle p \rangle ( \langle \cell{x} \rangle)) \land (\langle p \rangle(\langle f \rangle(\langle \cell{x} \rangle)))$ holds.
	The first conjunct clearly holds by construction of $\computation$. Since $1$ is odd and $1+1 = 2$ is even, the other two conjuncts hold as well for the chosen assignment function. Hence, $(\computation, \mathcal{V}, \assign)$ satisfies $\varphi$ for $\computation = (\{ \lambda \cell{c}.f(\cell{x}) \})^\omega$, $\mathcal{V} = \mathbb{N}$ and $\assign$.
\end{example}

A computation $\computation$ is called \emph{finitary} with respect to~$\varphi$, denoted $\finitary{\varphi}{\computation}$, if for all cells $\cell{c} \in \cells$ and for all points in time~$t$, either $\atPos{\computation}{t}(\cell{c}) \equiv \cell{c}$ holds, or there is an update $\update{\cell{c}}{\tau}$ in $\varphi$ such that $\atPos{\computation}{t}(\cell{c}) \equiv \tau$, \ie, a finitary computation only contains updates occurring in~$\varphi$ and self-updates. For $\computation$ and $\varphi$ from \Cref{ex:tsl}, for instance, $\computation$ is finitary with respect to $\varphi$.

\subsection{Extending TSL with Theories}
In this paper, we extend TSL with first-order theories. That is,  we restrict the possible interpretations of predicate and function symbols to a theory.
Hence, we define the notions of satisfiability and validity of a TSL formula \emph{modulo a theory} $T$.
Intuitively, a TSL formula $\varphi$ is satisfiable in a theory $T$ if there exists an execution satisfying $\varphi$ whose domain and assignment function represent a model in $T$, \ie, that entail all axioms of $T$. Formally:

\begin{definition}[TSL Satisfiability]
	Let $T = (\Sigma_F, \Sigma_P, \mathcal{A})$ be a theory and let~$\varphi$ be a TSL formula over $\Sigma_F$, $\Sigma_P$, and $\cells$. 
    We call $\varphi$ \emph{satisfiable in~$T$} if, and only if, there exists an execution $\symbolicExecution$, such that $\computation \TSLSatisfaction{\mathcal{V}}{\assign} \varphi$ and $(\mathcal{V}, \assign) \in \allModels{T}$ hold.
	If additionally $\finitary{\varphi}{\computation}$ holds, then $\varphi$ is called \emph{finitary satisfiable in~$T$}.
\end{definition}

Intuitively, a formula $\varphi$ is valid in a theory $T$, if for all executions and all matching models of the theory the formula is satisfied. Formally:

\begin{definition}[TSL Validity]
	Let $T = (\Sigma_F, \Sigma_P, \mathcal{A})$ be a theory and let $\varphi$ be a TSL formula over $\Sigma_F$, $\Sigma_P$, and $\cells$. 
    The formula $\varphi$ is called \emph{valid in~$T$} if, and only if, for all
    executions $\symbolicExecution$ with $(\mathcal{V},\assign) \in \allModels{T}$, we have $\computation \TSLSatisfaction{\mathcal{V}}{\assign} \varphi$.
    If $\computation \TSLSatisfaction{\mathcal{V}}{\assign} \varphi$ holds for all executions $\symbolicExecution$ with both $(\mathcal{V},\assign) \in \allModels{T}$ and $\finitary{\varphi}{\computation}$, then $\varphi$ is called \emph{finitary valid in $T$}.
\end{definition}

It follows directly from their definitions that (finitary) TSL satisfiability and (finitary) TSL validity are dual. Thus, we focus on TSL satisfiability in the remainder of this paper as the results can easily be extended to TSL validity.

\begin{theorem}[Duality of TSL Satisfiability and Validity]\label{thm:duality}
	Let $\varphi$ be a TSL formula over $\Sigma_F$, $\Sigma_P$, and $\cells$ and let $T = (\Sigma_F, \Sigma_P, \mathcal{A})$ be a theory. Then, $\varphi$ is (finitary) satisfiable in $T$ if, and only if, $\neg \varphi$ is not (finitary) valid in $T$. 
\end{theorem}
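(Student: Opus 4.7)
The plan is to prove both directions simultaneously by unfolding the definitions of satisfiability and validity and then invoking the semantic clause for negation. Since the statement is an \emph{if, and only if} claim, I would state the chain of equivalences that rewrites ``$\neg \varphi$ is not valid in $T$'' into the definition of ``$\varphi$ is satisfiable in $T$''.

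First, I would unfold the definition of validity: $\neg \varphi$ is valid in $T$ means that for every execution $\symbolicExecution$ with $(\mathcal{V}, \assign) \in \allModels{T}$, we have $\computation \TSLSatisfaction{\mathcal{V}}{\assign} \neg \varphi$. Negating this universal statement yields: $\neg \varphi$ is not valid in $T$ iff there exists an execution $\symbolicExecution$ with $(\mathcal{V}, \assign) \in \allModels{T}$ such that $\computation \not\TSLSatisfaction{\mathcal{V}}{\assign} \neg \varphi$. Next I would apply the semantic clause for negation given in the definition of $\TSLSatisfaction{\mathcal{V}}{\assign}$, namely that $\computation, t \TSLSatisfaction{\mathcal{V}}{\assign} \neg \varphi$ holds if and only if $\computation, t \not\TSLSatisfaction{\mathcal{V}}{\assign} \varphi$. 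Instantiated at $t=0$, this gives $\computation \not\TSLSatisfaction{\mathcal{V}}{\assign} \neg \varphi$ iff $\computation \TSLSatisfaction{\mathcal{V}}{\assign} \varphi$. Plugging this back in yields precisely the definition of $\varphi$ being satisfiable in $T$.

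For the finitary case I would additionally observe that the set of updates syntactically occurring in $\varphi$ is identical to the set of updates occurring in $\neg \varphi$, since negation does not introduce or remove any update subterms. Consequently, for every computation $\computation$, the predicate $\finitary{\varphi}{\computation}$ holds if and only if $\finitary{\neg \varphi}{\computation}$ holds. Adding the finitary side-condition to every step of the equivalence chain above therefore preserves the argument, so $\varphi$ is finitary satisfiable in $T$ iff $\neg \varphi$ is not finitary valid in $T$.

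There is essentially no obstacle here beyond bookkeeping; the result is an unfolding exercise, with the only subtle point being the remark that the finitary side-condition is invariant under negation, which must be stated explicitly to conclude the finitary half of the theorem.
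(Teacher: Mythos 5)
Your proof is correct and takes the only natural route: the paper itself gives no explicit proof, stating just before the theorem that the duality ``follows directly from their definitions,'' and your argument is precisely that unfolding (negate the universal in the definition of validity, apply the semantic clause for $\neg$ at $t=0$, and recover the definition of satisfiability). Your explicit remark that $\finitary{\varphi}{\computation}$ holds iff $\finitary{\neg\varphi}{\computation}$, because negation does not change the set of updates occurring in the formula, is the one non-trivial bookkeeping point and is correctly handled.
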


\section{TSL modulo $\tu$ Satisfiability Checking}

In this section, we investigate the satisfiability of TSL modulo the theory of uninterpreted functions $\tu$.
Since $\tu$ has no axioms, there are no restrictions on how a model for $\tu$ evaluates the function and predicate symbols. The only condition is that the evaluated symbols are indeed functions. Therefore, we have $(\computation, \mathcal{V}, \assign) \in \allModels{\tu}$ for all executions. Thus, finding some execution satisfying a TSL formula $\varphi$ is sufficient for showing that $\varphi$ is satisfied in~$\tu$:

\begin{lemma}\label{lem:canonical_model_tu}
    Let $\varphi$ be a TSL formula over $\Sigma_F$, $\Sigma_P$, and $\cells$. If there exists an execution $(\computation, \mathcal{V}, \assign)$ with $\computation \TSLSatisfaction{\mathcal{V}}{\assign} \varphi$, then $\varphi$ is satisfiable in $\tu$. If additionally $\finitary{\varphi}{\computation}$ holds, then $\varphi$ is finitary satisfiable in $\tu$.
\end{lemma}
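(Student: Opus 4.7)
The plan is to observe that the lemma reduces essentially to unpacking the definitions, with the heart of the argument being the emptiness of the axiom set of $\tu$. Given the hypothesized execution $(\computation,\mathcal{V},\assign)$ with $\computation \TSLSatisfaction{\mathcal{V}}{\assign} \varphi$, I will take this very same execution as the witness for satisfiability in $\tu$. To discharge the definition of TSL satisfiability modulo $\tu$, I only need to verify the additional condition $(\mathcal{V},\assign) \in \allModels{\tu}$.

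To establish $(\mathcal{V},\assign) \in \allModels{\tu}$, I would recall that a model of a theory $T = (\Sigma_F, \Sigma_P, \mathcal{A})$ is a pair $(\mathcal{V},\assign)$ whose assignment function maps symbols to functions, predicates, and values of the right kinds, and which entails every formula in $\mathcal{A}$. By definition of $\tu$, the axiom set $\mathcal{A}$ is empty, so the entailment requirement is vacuously satisfied. The well-typedness conditions $\langle \cell{c} \rangle \in \mathcal{V}$, $\langle f \rangle \in \mathcal{F}$, and $\langle p \rangle \in \mathcal{P}$ are already built into the notion of an assignment function and are therefore guaranteed by the fact that $(\computation,\mathcal{V},\assign)$ is given as an execution. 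Hence $(\mathcal{V},\assign) \in \allModels{\tu}$.

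For the finitary strengthening, no further work is required: if the hypothesized execution additionally satisfies $\finitary{\varphi}{\computation}$, then it already serves as a witness for finitary satisfiability in $\tu$, since the witnessing execution in the definition of finitary satisfiability may simply be the same execution.

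I do not anticipate a genuine obstacle here; the statement is essentially a sanity check that for the axiom-free theory $\tu$ the extra semantic burden imposed by modulo-theory satisfiability collapses, making plain TSL satisfiability and TSL satisfiability modulo $\tu$ coincide (and likewise in the finitary case). The only care needed is to keep straight that $\allModels{\tu}$ consists of all admissible $(\mathcal{V},\assign)$ pairs by vacuity of $\mathcal{A}$, rather than being restricted in any nontrivial way.
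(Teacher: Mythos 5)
Your proposal is correct and matches the paper's own justification: the paper argues exactly that, since $\tu$ has no axioms, every pair $(\mathcal{V},\assign)$ arising from an execution is a model of $\tu$, so the given execution itself witnesses (finitary) satisfiability. No gaps.
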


In the following, we introduce an (incomplete) algorithm for checking the satisfiability of a TSL formula $\varphi$ in the theory of uninterpreted functions. By \Cref{lem:canonical_model_tu}, it suffices to find an execution satisfying $\varphi$ to prove its satisfiability in~$\tu$. To search for such an execution, we introduce \emph{Büchi stream automata}~(BSAs), a new kind of $\omega$-automata that reads executions and allows for dealing with predicates and updates. BSAs are, similar to the connection between LTL and Büchi automata, an automaton representation for TSL.
Then, we present the algorithm for checking satisfiability in $\tu$ based on~BSAs.

\subsection{Büchi Stream Automata}
Intuitively, a \emph{Büchi stream automaton} (BSA) is an $\omega$-automaton with Büchi acceptance condition that reads infinite executions instead of infinite words. Furthermore, it is able to deal with predicates and updates occurring in TSL formulas.
To do so, the transitions of a BSA are labeled with \emph{guards} and \emph{update terms}. Intuitively, the former define which predicates need to hold when taking the transition. The latter define how the corresponding cells are updated when taking the transition. Formally, a BSA is defined as follows:

\begin{definition}[Büchi Stream Automaton]
	Let $\Sigma_F$, $\Sigma_P$ be sets of function and predicate symbols, respectively, and let $\cells$ be a finite set of cells.
	A \emph{Büchi Stream automaton} $\mathcal{B}$ over $\Sigma_F$, $\Sigma_P$, and $\cells$ is a tuple $(Q, Q_0, F, \fresh, \guards, \updateTerms, \delta)$, where $Q$ is a finite set of states, $Q_0 \subseteq Q$ is a set of initial states, $F \subseteq Q$ is a set of accepting states, $\fresh$ is a fresh term symbol such that $\fresh \not\in \cells \cup \Sigma_F \cup \Sigma_P$, $\guards \subseteq \predTerms$ is a finite set of predicate terms over $\Sigma_F$, $\Sigma_P$, and $\cells$, called \emph{guards}, $\updateTerms \subseteq \funcTerms \cup \{\fresh\}$ is a finite set of function terms over $\Sigma_F$, $\Sigma_P$, and $\cells$, called \emph{update terms}, and $\delta \subseteq Q \times (\guards \to \mathbb{B}) \times (\cells \to \updateTerms) \times Q$ is a finite transition relation.
\end{definition}

Note that by requiring the update terms $\updateTerms$ to be a \emph{finite} set of function terms, not all executions can be read by a BSA: Non-finitary executions contain updates with function terms that do not occur in the given TSL formula. Thus, they may require infinitely many update terms.
Therefore, we introduce the fresh term symbol $\fresh \not \in \cells \cup \Sigma_F \cup \Sigma_P$. If a transition in a BSA assigns $\fresh$ to a cell $\cell{c} \in \cells$, then any function term can be assigned to $\cell{c}$. This allows for reading non-finitary executions while maintaining finite representability of BSAs.

\tikzstyle{label}=[align=center, above, scale = 0.9]
\tikzstyle{aState}=[draw, color=black, rectangle, minimum size=25]
\tikzstyle{trans}=[color=black, ->, >=stealth]

\colorlet{guard}{black!40!red}
\colorlet{updateTerm}{blue!70!black}

\begin{figure}[t]
\centering
\begin{subfigure}[b]{0.36\textwidth}
\centering
\begin{tikzpicture}[scale=0.9, every node/.style={scale=1},initial text=]
    \node[state,initial] 	(A) at (0,0) {$q_0$};
    \node[state,accepting]	(B) at (3.2,0) {$q_1$};
    \path
        (A) edge [trans, loop below] node[label, below] {$\textcolor{guard}{\neg p(\cell{x})}$:\\$\textcolor{updateTerm}{\update{\cell{x}}{f(\cell{x})}}$} (A)
        (A) edge [trans, loop above] node[label, above] {$\textcolor{guard}{p(\cell{x})}$:\\$\textcolor{updateTerm}{\update{\cell{x}}{f(\cell{x})}}$} (A)
        (A) edge [trans, bend left=20] node[label,above] {$\textcolor{guard}{p(\cell{x})}$:\\$\textcolor{updateTerm}{\update{\cell{x}}{f(\cell{x})}}$} (B)
        (B) edge [trans, bend left=20] node[label, below] {$\textcolor{guard}{\neg p(\cell{x})}$:\\$\textcolor{updateTerm}{\update{\cell{x}}{f(\cell{x})}}$} (A);
\end{tikzpicture}
\caption{BSA $\mathcal{B}_1$}\label{fig:bsa_example_1}
\end{subfigure}
\hfill
\begin{subfigure}[b]{0.44\textwidth}
\centering
\begin{tikzpicture}[scale=0.9, every node/.style={scale=1},initial text=]
    \node[state,initial] (A) at (0,0) {$q_0$};
    \node[state,accepting] (B) at (3.2,0) {$q_1$};
    \path
        (A) edge [trans, loop below] node[label, below] {$\textcolor{guard}{p(\cell{x}), p(f(\cell{x}))}$:\\$\textcolor{updateTerm}{\update{\cell{x}}{f(\cell{x})}}$} (A)
        (A) edge [trans] node[label] {$\textcolor{guard}{p(\cell{x}), \neg p(f(\cell{x}))}$:\\$\textcolor{updateTerm}{\update{\cell{x}}{f(\cell{x})}}$} (B)
        (B) edge [trans, loop above] node[label] {$\textcolor{guard}{p(\cell{x}), p(f(\cell{x}))}$:\\$\textcolor{updateTerm}{\update{\cell{x}}{f(\cell{x})}}$} (B)
        (B) edge [trans, loop below] node[label, below] {$\textcolor{guard}{p(\cell{x}), \neg p(f(\cell{x}))}$:\\$\textcolor{updateTerm}{\update{\cell{x}}{f(\cell{x})}}$} (B);
\end{tikzpicture}
\caption{BSA $\mathcal{B}_2$}\label{fig:bsa_example_2}
\end{subfigure}
\hfill
\begin{subfigure}[b]{0.15\textwidth}
\centering
\begin{tikzpicture}[scale=0.9, every node/.style={scale=1},initial text=]
    \node[state,accepting,initial] (A) at (0,0) {$q_0$};
    \path
        (A) edge [trans, loop below] node[label, below] {$\textcolor{guard}{p(\cell{x})}$:\\$\textcolor{updateTerm}{\update{\cell{x}}{\fresh}}$} (A);
\end{tikzpicture}
\caption{BSA $\mathcal{B}_3$}\label{fig:bsa_example_3}
\end{subfigure}
\caption{Three exemplary Büchi stream automata. Accepting states are marked with double circles. Guards are highlighted in red, update terms in blue.}\label{fig:bsa_example}
\end{figure}
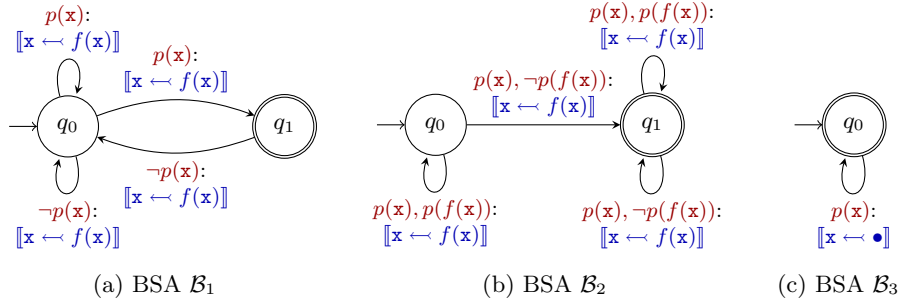

\begin{example}\label{ex:bsa}
	Consider the three BSAs depicted in \Cref{fig:bsa_example}.	
	If~$\mathcal{B}_1$ is in state $q_0$ and $p(\cell{x})$ holds, then cell~$\cell{x}$ is updated with $f(\cell{x})$ and $\mathcal{B}_1$ chooses nondeterministically to either stay in~$q_0$ or to move to the accepting state $q_1$. In contrast, $\mathcal{B}_2$ is deterministic. Yet, it is incomplete: In both~$q_0$ and~$q_1$, no guard is satisfied if $\neg p(\cell{x})$ holds. Hence, $\mathcal{B}_2$ gets stuck, preventing satisfaction of the Büchi winning condition for any execution containing $\neg p(\cell{x})$. The BSA $\mathcal{B}_3$ makes use of the fresh term symbol $\fresh$: If $p(\cell{x})$ holds, any function term can be assigned to~$\cell{x}$.
\end{example}

Given sets $\Sigma_F$, $\Sigma_P$, $\cells$ and a BSA $\mathcal{B} = (Q, Q_0, F, \fresh, \guards, \updateTerms, \delta)$ over $\Sigma_F$, $\Sigma_P$,~$\cells$, an infinite word $c \in (Q \times (\guards \to \mathbb{B}) \times (\cells \to \updateTerms) \times Q)^\omega$ is called \emph{run of $\mathcal{B}$} if, and only if, the first state of $c$ is an initial state, \ie, $\projection{\atPos{c}{0}}{1} \in Q_0$, and both $\atPos{c}{t} \in \delta$ and $\projection{\atPos{c}{t}}{4} = \projection{\atPos{c}{t+1}}{1}$ hold for all points in time $t \in \mathbb{N}_0$.
Intuitively, a run $c$ is an infinite sequence of tuples $(q,g,u,q')$ encoding transitions in the BSA: $q$ is the source state, $q'$ is the target state, $g$ determines which predicate terms hold, and $u$ defines which updates are performed when taking the transition. 
A run $c$ is called \emph{accepting} if it contains infinitely many accepting states, \ie, for all points in time $t \in \mathbb{N}_0$, there exists a $t' > t$ such that $\projection{\atPos{c}{t'}}{1} \in F$ holds.

\begin{example}\label{ex:run}
	Let $g_1(p(\cell{x})) = \true$, $g_2(p(\cell{x})) = \false$, and $u(\cell{x}) = f(\cell{x})$.
	The infinite word $c = (q_0,g_1,u,q_1) (q_1,g_2,u,q_0) (q_0,g_1,u,q_1) (q_1,g_2,u,q_0) \dots$ is a run of BSA~$\mathcal{B}_1$ from \Cref{fig:bsa_example_1}. It is accepting as it visits $q_1$ infinitely often.
\end{example}

The characteristics of a BSA are its predicates and updates. Thus, it is not sufficient to solely consider accepting runs since the constraints produced by the predicates might be inconsistent. Therefore, we define the \emph{execution of a BSA} that only permits consistent accepting runs.
Intuitively, given a run $c$ of a BSA~$\mathcal{B}$, an \emph{execution of c} consists of a computation $\computation \in \mathcal{C}^\omega$, a domain $\mathcal{V}$, and an assignment $\assign$ such that the updates in $\computation$ match the updates in $c$ and such that the recursive evaluation of a predicate term using $\assign$ matches its truth value in~$\computation$.
To capture the constraints accumulated in $\computation$ as well as their truth values, we define the \emph{constraint trace $\constraintTrace: (\predTerm \times \mathbb{B})^\omega$ of $\computation$ and $c$}: Formally, $\constraintTrace$ for $\computation$ and $c$ is defined by $\atPos{\constraintTrace}{t} := \emptyset$ if $t=0$, and $\atPos{\constraintTrace}{t} := \atPos{\constraintTrace}{t-1} \cup \{ (\etaFunction{\computation}{t-1}{\predTerm}, \projection{\atPos{c}{t-1}}{2}(\predTerm) ) \mid \predTerm \in \guards \}$ if $t>0$.
As an example, reconsider the computation $\computation$ from \Cref{ex:tsl} and the run~$c$ of BSA $\mathcal{B}_1$ from \Cref{ex:run}. The constraint trace of $\computation$ and~$c$ is given by $\constraintTrace = \emptyset \{ (p(\cell{x}), \true) \} \{ (p(\cell{x}), \true), (p(f(\cell{x})),\false) \} \dots$.
A constraint trace~$\constraintTrace$ is called \emph{consistent} if there is no predicate term $\predTerm\in\predTerms$ such that both $(\predTerm,\true)$ and $(\predTerm,\false)$ occur in $\bigcup_{t \in \mathbb{N}_0} \atPos{\constraintTrace}{t}$.
$\constraintTrace$ from the example above is consistent.
Using constraint traces, we now formally define the execution of a BSA:

\begin{definition}[Execution of a BSA]
	Let $\Sigma_F$ and $\Sigma_P$ be sets of function and predicate symbols, respectively, and let $\cells$ be a finite set of cells. Let~$\mathcal{B}$ be a BSA over $\Sigma_F$, $\Sigma_P$, and~$\cells$ and let $c$ be a run of $\mathcal{B}$.
	Let $\computation \in \mathcal{C}^\omega$ be an infinite computation and let $\assign: \cells \cup \Sigma_F \to \mathcal{V} \cup \mathcal{F}$ be an assignment function. Let $\constraintTrace$ be the constraint trace of $\computation$ and $c$. We call $\symbolicExecution$ \emph{execution for $c$} if (1) for all points in time~$t\in\mathbb{N}_0$ and all cells $\cell{c} \in \cells$, we have either $\projection{\atPos{c}{t}}{3}(\cell{c}) = \atPos{\computation}{t}(\cell{c})$ or $\projection{\atPos{c}{t}}{3}(\cell{c}) = \fresh$, and (2) for all $(\predTerm, b) \in \bigcup_{t \in \mathbb{N}_0} \atPos{\constraintTrace}{t}$, we have $\recursiveAssign{\predTerm} = b$.
\end{definition}

Note that the second requirement can only be fulfilled if the constraint trace is consistent.
Consider the computation $\computation$ and the assignment function $\assign$ from \Cref{ex:tsl}, the run~$c$ of~$\mathcal{B}_1$ from \Cref{ex:run}, and the constraint trace $\constraintTrace$ of $\computation$ and $c$ given above. Then, $(\computation, \mathbb{N}, \assign)$ is an execution for~$c$: 
Since in both $\computation$ and $c$, cell $\cell{x}$ is always updated with $f(\cell{x})$, the updates in $\computation$ and $c$ coincide at every point in time. Furthermore, by construction of $\assign$, the constraints of~$\constraintTrace$ match the truth values obtained by recursively evaluating $\assign$ for all predicate terms.

We define two languages of a BSA $\mathcal{B}$: The \emph{symbolic language} $\symbolicLanguage{\mathcal{B}}$ is the set of all executions that have a respective accepting run, \ie, $\symbolicExecution \in \mathcal{L}(\mathcal{B})$ if, and only if, there exists an accepting run $c$ such that $\symbolicExecution$ is an execution for~$c$. The \emph{language $\theoryLanguage{T}{\mathcal{B}}$ in a theory $T$} is the set of all executions whose domain and assignment function additionally form a model in $T$, \ie, $\symbolicExecution \in \theoryLanguage{T}{\mathcal{B}}$ if, and only if, $\symbolicExecution \in \symbolicLanguage{\mathcal{B}}$ and $(\mathcal{V},\assign)\in\allModels{T}$.

We call a BSA $\mathcal{B} = (Q, Q_0, F, \fresh, \guards, \updateTerms, \delta)$ \emph{finitary} if $\fresh \not\in \updateTerms$ holds. 
Hence, every run~$c$ of a finitary BSA, has a unique computation~$\computation$ and thus a unique constraint trace~$\constraintTrace$.
Therefore, for a finite prefix $c_p$ of $c$, we can compute its \emph{execution effect} $\executionEffect{c_p} := (\lambda \cell{c}.~\etaFunction{\computation}{|c_p|}{\cell{c}},\atPos{\constraintTrace}{|c_p|})$ from $c_p$ itself, \ie, without considering $\computation$ and $\constraintTrace$ explicitly. Intuitively,~$c_p$'s execution effect consists of the function terms assigned to the cells during the execution of~$c_p$ as well as the constraints and their truth values on the transitions taken with~$c_p$ in the BSA. 
The BSAs $\mathcal{B}_1$ and $\mathcal{B}_2$, depicted in \Cref{fig:bsa_example}, are finitary while~$\mathcal{B}_3$ is not.
Since $\mathcal{B}_1$ is finitary, consider the prefix $c_p = (q_0,g_1,u,q_1) (q_1,g_2,u,q_0)$ of the run $c$ of $\mathcal{B}_1$ presented in \Cref{ex:run}. Its execution effect is given by $\executionEffect{c_p} = (\lambda \cell{c}.f(f(\cell{x})), \{ (p(\cell{x}), \true), (p(f(\cell{x})),\false) \})$.

An LTL formula $\varphi$ can be translated into a nondeterministic Büchi automaton~(NBA)~$\mathcal{A}_\varphi$ with $\mathcal{L}(\varphi) = \mathcal{L}(\mathcal{A}_\varphi)$~\cite{VardiW94}.
An analogous relation exists between TSL formulas and BSAs:
A TSL formula $\varphi$ can be translated into an equivalent BSA~$\mathcal{B}_\varphi$: First, we approximate $\varphi$ by an LTL formula $\varphi_\mathit{LTL}$, similarly to the approximation described in~\cite{FinkbeinerKPS19}. The main idea of the approximation is to represent every function and predicate term as well as every update occurring in $\varphi_\mathit{LTL}$ by an atomic proposition and to add conjuncts that ensure that exactly one update is performed for every cell in every time step. Second, we build an equivalent NBA $\mathcal{A}_{\varphi_\mathit{LTL}}$ from $\varphi_\mathit{LTL}$. Third, we construct a BSA $\mathcal{B}_\varphi$ from $\mathcal{A}_{\varphi_\mathit{LTL}}$ by, intuitively, translating the atomic propositions back into predicate terms and updates and by dividing them into guards and update terms, while maintaining the structure of the NBA $\mathcal{A}_{\varphi_\mathit{LTL}}$.
The full construction of an equivalent BSA $\mathcal{B}_\varphi$ from a TSL formula $\varphi$ is given in \Cref{app:partial_decider}.

\begin{theorem}[TSL to BSA Translation]\label{thm:tsl_to_bsa}
	Given a TSL formula $\varphi$, there exists an equivalent (finitary) Büchi stream automaton~$\mathcal{B}$ such that for all theories~$T$, $\theoryLanguage{T}{\mathcal{B}} \neq \emptyset$ holds if, and only if, $\varphi$ is (finitary) satisfiable in~$T$.	
\end{theorem}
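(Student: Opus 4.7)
The plan is to realize the three-step construction sketched just before the theorem and to verify each step preserves the relevant semantic equivalence. First, I would construct an LTL abstraction $\varphi_\mathit{LTL}$ by introducing a fresh atomic proposition $a_{\predTerm}$ for each predicate term $\predTerm$ and $a_{\update{\cell{c}}{\tau}}$ for each update that occurs as a subformula of $\varphi$, replacing these TSL atoms by their proxies, and conjoining the mutual-exclusion constraint
\[
\Globally \bigwedge_{\cell{c} \in \cells} \Bigl( \bigvee_{\tau} a_{\update{\cell{c}}{\tau}} \wedge \bigwedge_{\tau \neq \tau'} \neg(a_{\update{\cell{c}}{\tau}} \wedge a_{\update{\cell{c}}{\tau'}}) \Bigr),
\]
where $\tau$ ranges over the update terms of $\varphi$ together with $\fresh$ (general case) or with $\cell{c}$ itself as a self-update (finitary case). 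Second, I would apply the standard Vardi--Wolper translation to obtain an NBA $\mathcal{A}_{\varphi_\mathit{LTL}}$ whose language is exactly the models of $\varphi_\mathit{LTL}$. Third, I would lift $\mathcal{A}_{\varphi_\mathit{LTL}}$ to a BSA $\mathcal{B}_\varphi$ with the same state set, initial states and accepting states; take $\guards$ to be the predicate terms and $\updateTerms$ the update terms of $\varphi$ (plus $\fresh$ or self-updates), and convert each NBA transition labeled by a set $L$ of propositions into a BSA transition whose guard function maps each $\predTerm \in \guards$ to the Boolean $[a_{\predTerm} \in L]$ and whose update function maps each $\cell{c}$ to the unique $\tau$ with $a_{\update{\cell{c}}{\tau}} \in L$ (unique by the mutual-exclusion conjunct).

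For correctness I would argue both inclusions, uniformly in $T$. For $(\Leftarrow)$, given $\symbolicExecution$ with $\computation \TSLSatisfaction{\mathcal{V}}{\assign} \varphi$ and $(\mathcal{V},\assign) \in \allModels{T}$, I define an infinite propositional word by letting $a_{\predTerm}$ hold at step $t$ iff $\recursiveAssign{\etaFunction{\computation}{t}{\predTerm}} = \true$ and $a_{\update{\cell{c}}{\tau}}$ hold at step $t$ iff $\atPos{\computation}{t}(\cell{c}) \equiv \tau$. A structural induction on $\varphi$ shows this word models $\varphi_\mathit{LTL}$, so it induces an accepting run of $\mathcal{A}_{\varphi_\mathit{LTL}}$ and hence of $\mathcal{B}_\varphi$, and $\symbolicExecution$ meets both execution conditions for that run. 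For $(\Rightarrow)$, an accepting run $c$ of $\mathcal{B}_\varphi$ with execution $\symbolicExecution$ yields, via its guard and update labels, a word accepted by $\mathcal{A}_{\varphi_\mathit{LTL}}$, which then satisfies $\varphi_\mathit{LTL}$; consistency of the constraint trace together with condition~(2) of the execution definition lets this back-translate to $\computation \TSLSatisfaction{\mathcal{V}}{\assign} \varphi$.

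The main obstacle is reconciling the flat propositional view of $\varphi_\mathit{LTL}$ with the substitution mechanism encoded in TSL through $\eta$ and $\chi_{\assign}$: the proposition $a_{\predTerm}$ records only whether $\predTerm$ holds now, while TSL asks for the value of $\predTerm$ after unrolling the entire computation so far. I would dispense with this gap by proving a substitution lemma which, by induction on $t$ using the update clause $\etaFunction{\computation}{t}{\cell{c}} = \etaFunction{\computation}{t-1}{\atPos{\computation}{t-1}(\cell{c})}$, shows $\recursiveAssign{\etaFunction{\computation}{t}{\predTerm}}$ equals the Boolean that the guard of the $t$-th transition of $c$ assigns to $\predTerm$; this holds because the updates along $c$ fix $\atPos{\computation}{t}$ pointwise and the execution condition forces $\assign$ to validate the constraint trace, whose entries are exactly the unrolled terms. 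Finally, to obtain the finitary variant, omit $\fresh$ from $\updateTerms$ (using the self-update option in the mutual-exclusion conjunct); the same argument applies, and only finitary executions remain, matching the definition of finitary satisfiability.
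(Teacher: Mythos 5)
Your proposal matches the paper's own proof essentially step for step: the same LTL abstraction with proxy propositions for predicate terms and updates plus an exactly-one-update-per-cell conjunct ($\varphi_\mathit{least} \land \varphi_\mathit{most}$ in the paper), the same Vardi--Wolper translation, and the same structure-preserving lift back to a BSA, with the finitary variant obtained by swapping $\fresh$ for self-updates. If anything, your sketch of the correctness argument (the substitution lemma relating $\eta$/$\chi_{\assign}$ to the transition guards, and the remark that NBA transitions violating the exactly-one constraint only occur on rejecting runs) is more explicit than the paper's, which largely asserts that correctness follows from the LTL-to-NBA construction.
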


For instance, the TSL formula $\varphi_1 := \Globally \update{\cell{x}}{f(\cell{x})} \land \Globally\Eventually (p(\cell{x}) \land \Next \neg p(\cell{x}))$ is finitary satisfiable in a theory $T$ if, and only if, $\theoryLanguage{T}{\mathcal{B}_1} \neq \emptyset$ holds for the BSA~$\mathcal{B}_1$ from \Cref{fig:bsa_example_1}. Analogously, $\varphi_2 := \Globally (\update{\cell{x}}{f(\cell{x})} \land p(\cell{x})) \land \Eventually \neg p(f(\cell{x}))$, and $\varphi_3 := \Globally p(\cell{x})$ correspond to the BSAs $\mathcal{B}_2$ and~$\mathcal{B}_3$ from \Cref{fig:bsa_example_2} and \Cref{fig:bsa_example_3}.


\subsection{An Algorithm for TSL modulo $\tu$ Satisfiability Checking}\label{sec:partialdecider}

Utilizing BSAs, we present an algorithm for checking the satisfiability of a TSL formula in the theory of uninterpreted functions $\tu$ in the following.
First, recall that finitary computations only perform self-updates or updates that occur in the given TSL formula. Since there are only finitely many cells, the behavior of finitary computations is thus restricted to a finite set of possibilities in each step. Hence, reasoning with finitary computations is easier than reasoning with non-finitary ones.
In the algorithm, we make use of the fact that satisfiability can be reduced to finitary satisfiability in the theory of uninterpreted functions, enabling us to focus on finitary computations.
The main idea of the reduction is to introduce a new cell for each cell of a given TSL formula. The new cells then capture the values that are constructed by the non-finitary parts of a computation. The proof is given in \Cref{app:partial_decider}.

\begin{lemma}\label{lem:reduction_to_finitary}
    Let $\varphi$ be a TSL formula. Then, there is a TSL formula~$\varphi_\mathit{fin}$ such that $\varphi$ is satisfiable in $\tu$ if, and only if, $\varphi \land \varphi_\mathit{fin}$ is finitary satisfiable in $\tu$.
\end{lemma}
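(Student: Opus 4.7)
The plan is to prove each direction of the biconditional separately. The backward direction is immediate: any finitary execution witnessing satisfiability of $\varphi\land\varphi_\mathit{fin}$ in $\tu$ is in particular an execution satisfying~$\varphi$ in~$\tu$.

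For the forward direction, I define $\varphi_\mathit{fin}$ by introducing, for each cell $\cell{c}\in\cells$, a fresh ``shadow'' cell $\cell{c'}$ together with one fresh unary function symbol $g$ not appearing in~$\varphi$, and setting
\[
\varphi_\mathit{fin} \;:=\; \bigwedge_{\cell{c}\in\cells}\Bigl(\Globally\,\update{\cell{c'}}{g(\cell{c'})}\ \land\ \Globally\,(\update{\cell{c}}{\cell{c'}}\lor\true)\Bigr).
\]
The first conjunct pins down every shadow cell so that $\etaFunction{\computation'}{t}{\cell{c'}}=g^{t}(\cell{c'})$, producing a distinct, time-indexed placeholder term for every moment; the second, trivially true conjunct serves only to make the literal $\update{\cell{c}}{\cell{c'}}$ syntactically appear, so that it is admissible in any finitary computation w.r.t.~$\varphi\land\varphi_\mathit{fin}$. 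Given any satisfying execution $(\computation,\mathcal{V},\assign)$ of~$\varphi$, I would build a finitary execution $(\computation',\mathcal{V}',\assign')$ by keeping every finitary update of each original cell verbatim, replacing every non-finitary update of $\cell{c}$ by $\update{\cell{c}}{\cell{c'}}$, and updating each $\cell{c'}$ at every step with $g(\cell{c'})$. By construction, $\computation'$ is finitary w.r.t.~$\varphi\land\varphi_\mathit{fin}$ and trivially satisfies~$\varphi_\mathit{fin}$.

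The main obstacle is arguing that $\computation'\TSLSatisfaction{\mathcal{V}'}{\assign'}\varphi$. For the update subformulas of~$\varphi$ the property transfers directly: finitary updates are copied verbatim, and each fresh cell $\cell{c'}$ is syntactically distinct from every term occurring in~$\varphi$, so the inserted entries $\cell{c'}$ never accidentally match a required update. For the predicate subformulas, one shows by induction on term structure and on time that $\etaFunction{\computation'}{t}{\predTerm}$ is obtained from $\etaFunction{\computation}{t}{\predTerm}$ by substituting each maximal non-finitary subterm generated at step~$s$ by the placeholder $g^{s}(\cell{c'})$ for the relevant cell. Accordingly, one extends $\assign$ to $\assign'$ by choosing $\langle\cell{c'}\rangle$ and $\langle g\rangle$ so that each placeholder $g^{s}(\cell{c'})$ evaluates to the same value that the original non-finitary subterm it replaces had under $\assign$, and by extending the interpretations of all other function and predicate symbols consistently on any new domain elements. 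The delicate point is to rule out that two syntactically identical placeholders are forced to carry inconsistent truth values; since each pair $(s,\cell{c'})$ yields a syntactically unique term $g^{s}(\cell{c'})$, and since $\tu$ imposes no axioms, a Herbrand-style expansion of~$\mathcal{V}$ in which all placeholders denote distinct fresh values makes the extension~$\assign'$ well defined. Hence $(\mathcal{V}',\assign')\in\allModels{\tu}$ and $\varphi\land\varphi_\mathit{fin}$ is finitary satisfiable in~$\tu$, concluding the forward direction.
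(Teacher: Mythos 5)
Your overall strategy is the paper's strategy: the backward direction is immediate from the semantics of conjunction, and for the forward direction you introduce fresh ``clock'' cells whose symbolic value at time $t$ is a syntactically unique placeholder term, replace every non-finitary update by an update with that placeholder, and extend the assignment so that predicates cannot tell the difference. The paper realizes this with a single clock cell $\cell{n}$ driven by $\Globally\,\update{\cell{n}}{\mathit{new}(\cell{n})}$ and one fresh unary symbol $\mathit{pick}_\cell{c}$ per cell; the placeholder $\mathit{pick}_\cell{c}(\mathit{new}^t(\cell{n}))$ is made to evaluate \emph{directly} to the original value $\recursiveAssign{\etaFunction{\computation}{t+1}{\cell{c}}}$ by the choice of $\langle\mathit{pick}_\cell{c}\rangle$, so no other interpretation has to change. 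Your variant (per-cell shadow cells, one global $g$, the bare update term $\cell{c'}$, and a vacuous disjunct to inject the literal $\update{\cell{c}}{\cell{c'}}$ into the formula) is workable, and the vacuous-disjunct trick is legitimate because finitariness is a purely syntactic condition on which updates occur in the formula.

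The one place where your write-up is wrong as stated is the construction of $\assign'$. You first ask for $\langle\cell{c'}\rangle$ and $\langle g\rangle$ such that $g^{s}(\cell{c'})$ evaluates to the original value $v_s$ of the term it replaces; this is impossible in general, since it forces $\langle g\rangle(v_{s-1})=v_s$ for every $s$, which is inconsistent whenever $v_{s-1}=v_{s'-1}$ but $v_s\neq v_{s'}$. The delicate point is therefore not, as you claim, that syntactically identical placeholders might receive conflicting truth values (they cannot, as they replace the same term), but that a single iterated unary function cannot realize an arbitrary value sequence. Your Herbrand fix repairs this, but it silently changes the invariant: the placeholders now denote pairwise distinct fresh values rather than the original values, so preservation of predicate evaluations rests entirely on the clause you dispatch in half a sentence --- every original function and predicate symbol must be re-interpreted on the fresh values so that $w_s$ behaves exactly like $v_s$ (e.g.\ $\langle f\rangle'(w_s):=\langle f\rangle(v_s)$, and componentwise for higher arities), followed by an induction on term structure and time showing that $\etaFunction{\computation'}{t}{\predTerm}$ evaluates as $\etaFunction{\computation}{t}{\predTerm}$ did. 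That extension is well defined and suffices, but it is the actual content of the forward direction and needs to be carried out; it plays the role of \Cref{lem:correct_construction_fin} in the paper's proof, where the $\mathit{pick}_\cell{c}$ device reduces the corresponding step to a one-line computation.
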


\begin{algorithm}[t]
	\SetVlineSkip{1.5pt}
    \SetKwFunction{SMT}{SMT} 
    \SetKwProg{Fn}{Function}{}{}
    \KwIn{$\varphi$: TSL Formula}
    \KwOut{\sat, \unsat}
    $\mathcal{B}$ := Finitary BSA for $\varphi$ as defined in \Cref{thm:tsl_to_bsa}\;
    $\compBSA$ := Set of runs of $\mathcal{B}$\; 
    \Fn{SatSearch}{
        \For{$\mathit{pref}.\mathit{rec}^\omega \in \{ c \mid c \in \compBSA \land \accBSA{c} \}$}{
        ($v_p$, \_):=$\executionEffect{\mathit{pref}}$\;
        ($v_r$, $P$):=$\executionEffect{\mathit{pref}.\mathit{rec}}$\;
        \If{\SMT $\!\left(\!\!\left(\bigwedge\limits_{(t_p, v) \in P} \begin{cases} t_p&\text{if } v = \true\\\neg t_p&\text{if } v = \false\end{cases} \right)\! 
            \land \bigwedge\limits_{\cell{c} \in \cells} v_p(\cell{c}) = v_r(\cell{c})\!\right)\! =\!$ \sat}{
            \Return~\sat
        }
    }
    }
    \Fn{UnsatSearch}{
    \For{$n \in \mathbb{N}_0$}{
        \For{$c \in \{ c \mid c \in \operatorname{finiteSubwords}(\compBSA) \land |c| = n\}$}{
            (\_, $P$):=$\executionEffect{c}$\;
            \If{$\exists t_p.~(t_p, \true), (t_p, \false) \in P$}{
                $\compBSA$ := $\compBSA \setminus \{c' \mid \exists m \in \mathbb{N}_0.~\forall 0 \leq i < n.~\atPos{c'}{i + m} = \atPos{c}{i} \}$
            }
            \If{$\{ c \mid c \in \compBSA \land \accBSA{c} \}$ = $\emptyset$}{
                \Return~\unsat
            }
        }
    }}
    \Return{parallel(SatSearch, UnsatSearch)}
    \caption{Algorithm for Checking TSL modulo $\tu$ Satisfiability}
    \label{alg:TSL_TU_SAT_idea}
\end{algorithm}

\Cref{alg:TSL_TU_SAT_idea} shows the algorithm for checking TSL modulo $\tu$ satisfiability. It directly works on Büchi stream automata. First, an equivalent BSA~$\mathcal{B}$ is generated for the input formula $\varphi$. Then, in parallel, $\mathit{SatSearch}$ tries to prove that~$\varphi$ is satisfiable in $\tu$ while $\mathit{UnsatSearch}$ tries to prove unsatisfiability of $\varphi$.

$\mathit{SatSearch}$ enumerates all lasso-shaped accepting runs $\pref.\rec^\omega$ of $\mathcal{B}$, \ie, accepting runs consisting of a finite prefix $\pref$ and a finite recurring part $\rec$ that is repeated infinitely often. Both $\pref$ and $\rec$ need to end in the same state of $\mathcal{B}$.
Then, the execution effects of $\pref$ and $\pref.\rec$ are computed. $\mathit{SatSearch}$ checks if it is possible to satisfy all predicate constraints induced by $\pref.rec$ under the condition that, for each cell, $\pref$ and $\pref.\rec$ construct equal function terms. For this, it utilizes an SMT solver to check the satisfiability of a quantifier-free first-order logic formula, encoding the consistency requirement, in the theory of equality. If the check succeeds, adding $\rec$ to $\pref$ des not create an inconsistency and hence repeating $\rec$ infinitely often is consistent. Therefore, there exists an execution for $\pref.\rec^\omega$ and thus $\varphi$ is finitary satisfiable in $\tu$ by~\Cref{lem:canonical_model_tu}.

$\mathit{UnsatSearch}$ computes the execution effect of finite sub\-words of runs of~$\mathcal{B}$ and checks whether they are consistent. If a subword is inconsistent, then every run that contains this subword is inconsistent. Hence, there do not exist executions for these runs and therefore they are removed from the set of candidate runs.
If there is no accepting candidate run left, then $\mathcal{B}$ has an empty symbolic language and thus, by \Cref{thm:tsl_to_bsa}, $\varphi$ is unsatisfiable in~$\tu$.

\begin{example}
Consider the finitary BSAs $\mathcal{B}_1$ and $\mathcal{B}_2$ from \Cref{fig:bsa_example_1,fig:bsa_example_2} as well as their respective TSL formulas $\varphi_1 := \Globally \update{\cell{x}}{f(\cell{x})} \land \Globally\Eventually (p(\cell{x}) \land \Next \neg p(\cell{x}))$ and $\varphi_2 := (\Globally (\update{\cell{x}}{f(\cell{x})} \land p(\cell{x})) \land \Eventually \neg p(f(\cell{x}))$.
If we execute \Cref{alg:TSL_TU_SAT_idea} on~$\varphi_1$, $\mathit{SatSearch}$ considers the accepting lasso $q_0 \to q_1 \to q_0$ in $\mathcal{B}_1$ at some point. 
Then, $\pref = \varepsilon$ and $\rec = (q_0,g_1,u,q_1) (q_1,g_2,u,q_0)$. 
Note that $\pref.\rec$ is the finite prefix $c_p$ of a run of $\mathcal{B}_1$ from \Cref{ex:run}. 
Thus, $\executionEffect{\pref.\rec}$ is given by $(\lambda \cell{c}.f(f(\cell{x})), \{ (p(\cell{x}), \true), (p(f(\cell{x})),\false) \})$. 
Since $\executionEffect{\pref} = (\lambda \cell{c}.c, \emptyset)$ holds, $\mathit{SatSearch}$ generates the query $p(\cell{x}) \land \lnot p(f(\cell{x})) \land \cell{x} = f(f(\cell{x}))$ which is satisfiable in $\te$. 
Hence, we can repeat the lasso $q_0 \to q_1 \to q_0$ infinitely often without getting any inconsistent constraints and thus $\varphi_1$ is satisfiable. 

If we execute \Cref{alg:TSL_TU_SAT_idea} on $\varphi_2$, $\mathit{UnsatSearch}$ checks at some point wether in $\mathcal{B}_2$ the transition sequence $q_0 \to q_1$ followed by the upper self-loop is consistent. This is not the case as it requires $p(f(\cell{x}))$ to be true (first transition) and false (second transition): We have $\constraintTrace_1 = \{ (p(\cell{x}),\true),(p(f(\cell{x})),\false) \}$ and $\constraintTrace_2 = \constraintTrace_1 \cup \{ (p(f(\cell{x})),\true),(p(f(f(\cell{x}))),\true) \}$ by definition of the constraint trace.
$\mathit{UnsatSearch}$ also checks the transition sequence $q_0 \to q_1$ followed by the lower self-loop which is also inconsistent. Hence, there is no consistent transition after $q_0 \to q_1$ and thus there is no valid accepting run. Hence, $\varphi_2$ is unsatisfiable.
\end{example}

Note that the presentation of \Cref{alg:TSL_TU_SAT_idea} omits implementation details such as the enumeration of accepting loops and the implementation of the infinite set~$\compBSA$. A more detailed description addressing these issues is given in \Cref{app:fulldescription}.

\Cref{alg:TSL_TU_SAT_idea} is correct. 
Intuitively, it terminates with \sat~if the constraint trace $\constraintTrace$ of the unique computation $\computation$ of $\pref.\rec^\omega$ is consistent. Hence,~$\constraintTrace$ defines an assignment~$\assign$ such that $\symbolicExecution$ is an execution of $\pref.\rec^\omega$, implying satisfiability of $\varphi$ in $\tu$.
If the algorithm terminates with \unsat, then all accepting runs of the BSA are inconsistent and thus no finitary execution satisfying~$\varphi$ exists.
For the proof, we refer the reader to \Cref{app:partial_decider}.

\begin{theorem}[Correctness of \Cref{alg:TSL_TU_SAT_idea}] \label{thm:correctness_algorithm}
	Let $\varphi$ be a TSL formula.
	If \Cref{alg:TSL_TU_SAT_idea} terminates on 	$\varphi$ with~\sat, then there exists an execution $\symbolicExecution$ such that both $\computation \TSLSatisfaction{\mathcal{V}}{\assign} \varphi$ and $\finitary{\varphi}{\computation}$ hold.
	If \Cref{alg:TSL_TU_SAT_idea} terminates with \unsat, then for all executions $\symbolicExecution$ with $\finitary{\varphi}{ctr}$, $\computation \not\TSLSatisfaction{\mathcal{V}}{\assign} \varphi$ holds.
\end{theorem}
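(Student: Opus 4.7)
The plan is to prove both directions separately, leveraging \Cref{lem:canonical_model_tu} (so that any satisfying execution in $\tu$ suffices) and \Cref{thm:tsl_to_bsa} (so that the symbolic language of $\mathcal{B}_\varphi$ coincides with the set of finitary satisfying executions of $\varphi$). Throughout, I will exploit the fact that the BSA $\mathcal{B}$ constructed in line~1 is finitary, which guarantees that every run $c$ has a \emph{unique} computation $\computation$ and thus a unique constraint trace $\constraintTrace$.

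For the \sat~case, suppose \textit{SatSearch} returns \sat~after discovering a lasso $\pref.\rec^\omega$ for which the SMT solver witnessed satisfiability of the query combining the constraint set $P$ of $\pref.\rec$ with the cell-wise equalities $v_p(\cell{c}) = v_r(\cell{c})$. I would proceed as follows. First, I would use the finitariness of $\mathcal{B}$ to observe that $\pref.\rec^\omega$ determines a unique computation $\computation$ and constraint trace $\constraintTrace$. Second, I would turn the SMT model into an assignment function $\assign$ with a suitable value domain $\mathcal{V}$: the constants in the solver's model populate $\mathcal{V}$, and function/predicate symbols are interpreted by their uninterpreted-function interpretations in that model. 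Third, I would argue that $\constraintTrace$ is globally consistent: the equalities $v_p(\cell{c}) = v_r(\cell{c})$ force that, when $\rec$ is traversed for the $k$-th time, the cell values at its start are equal (under $\assign$) to those at the start of the first traversal, so every predicate term arising in a later iteration evaluates to the same truth value $\assign$ already assigned it in the first iteration. Consequently, $(\computation,\mathcal{V},\assign)$ is an execution for $\pref.\rec^\omega$, the run is accepting, and $\finitary{\varphi}{\computation}$ follows from finitariness of $\mathcal{B}$. Applying \Cref{thm:tsl_to_bsa} and \Cref{lem:canonical_model_tu} yields $\computation \TSLSatisfaction{\mathcal{V}}{\assign} \varphi$.

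For the \unsat~case, suppose \textit{UnsatSearch} returns \unsat. The loop maintains the invariant that every run removed from $\compBSA$ contains a finite subword whose execution effect contains both $(\predTerm,\true)$ and $(\predTerm,\false)$ for some predicate term $\predTerm$. I would show that, in a finitary BSA, the presence of such a subword makes the entire run's constraint trace inconsistent: any putative assignment $\assign$ would be forced to satisfy both $\recursiveAssign{\predTerm} = \true$ and $\recursiveAssign{\predTerm} = \false$, which is impossible. Hence no execution exists for any removed run. When the algorithm returns \unsat, no accepting candidate remains in $\compBSA$, so the symbolic language $\symbolicLanguage{\mathcal{B}}$ is empty. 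By \Cref{thm:tsl_to_bsa}, there is no finitary execution satisfying $\varphi$ in $\tu$.

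The main obstacle is the \sat~direction, specifically lifting consistency of the constraints of a single unfolding $\pref.\rec$ to consistency of the infinite unfolding $\pref.\rec^\omega$. This is where the role of the cell-wise equalities $v_p(\cell{c}) = v_r(\cell{c})$ is essential: without them, later iterations of $\rec$ would in general produce predicate applications on genuinely new function terms whose truth values are not constrained by the SMT check. Proving that these new terms are, under the SMT-provided model, equal to terms already constrained in the first iteration—and hence do not introduce fresh, potentially inconsistent, obligations—is the technical heart of the argument and will require a careful induction on the iteration index of $\rec$ using the symbolic evaluation function $\eta$.
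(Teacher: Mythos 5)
Your proposal is correct and matches the paper's proof in both structure and key ideas: the \unsat{} direction is argued identically (an inconsistent finite subword rules out every run containing it, so an empty set of accepting candidates means $\symbolicLanguage{\mathcal{B}} = \emptyset$ and \Cref{thm:tsl_to_bsa} applies), and the \sat{} direction rests on the same central observation that the cell-wise equalities $v_p(\cell{c}) = v_r(\cell{c})$ propagate by congruence through every further unfolding of $\rec$, so later iterations create no fresh predicate obligations. The only divergence is in how the witnessing assignment is materialized: where you reuse the SMT solver's model as $\assign$, the paper first proves the constraint trace of $\pref.\rec^\omega$ consistent by induction on time (using $\eta(\computation,t,\cell{c}) = \eta(\computation,t+\abs{\rec},\cell{c})$ for $t > \abs{\pref\,}$) and then builds an explicit Herbrand-style term model with $\mathcal{V} := \funcTerms$ and predicate values read off the constraint trace -- the induction you flag as the technical heart is exactly the one the paper carries out.
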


\section{Undecidability of TSL modulo $\tu$ Satisfiability}\label{sec:undecidability}

The algorithm for TSL satisfiability checking in $\tu$ presented in the previous section does not necessarily terminate.
In this section, we show that no complete algorithm exists: The satisfiability of a TSL formula in the theory of uninterpreted functions $\tu$ (TSL-$\tu$-SAT) is neither semi-decidable nor co-semi-decidable:

\begin{theorem}[Undecidability of TSL-$\tu$-SAT]\label{thm:undecidability_tsl_modulo_tu}
	The satisfiability (validity) problem of TSL in $\tu$ is neither semi-decidable nor co-semi-decidable.
\end{theorem}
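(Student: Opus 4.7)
The plan is to exhibit two computable reductions into TSL-$\tu$-SAT---one from a $\Sigma_1^0$-hard problem and one from a $\Pi_1^0$-hard problem---which together rule out both semi- and co-semi-decidability. Fix deterministic two-counter Minsky machines as the target model; the halting problem on a fixed input is $\Sigma_1^0$-complete and its complement, non-halting, is $\Pi_1^0$-complete. Given such a machine $M$, I build a simulation TSL formula $\varphi_M^{\mathit{sim}}$ that forces every satisfying execution to trace $M$'s unique run step by step, and then set $\varphi_M^{\mathit{halt}} := \varphi_M^{\mathit{sim}} \land \Eventually \mathit{atHalt}$ and $\varphi_M^{\mathit{nohalt}} := \varphi_M^{\mathit{sim}} \land \Globally \neg \mathit{atHalt}$. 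By determinism of $M$, $\varphi_M^{\mathit{halt}}$ is satisfiable in $\tu$ iff $M$ halts, and $\varphi_M^{\mathit{nohalt}}$ is satisfiable in $\tu$ iff $M$ does not halt. So $M \mapsto \varphi_M^{\mathit{halt}}$ reduces halting to TSL-$\tu$-SAT (ruling out co-semi-decidability) and $M \mapsto \varphi_M^{\mathit{nohalt}}$ reduces non-halting (ruling out semi-decidability). The analogous statement for validity follows from \Cref{thm:duality}.

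For the construction of $\varphi_M^{\mathit{sim}}$ I use cells $\cell{c_1}, \cell{c_2}$ for the two counters, a cell $\cell{pc}$ encoding the program counter over a bounded alphabet, the function symbol $\encInc$ for increment, and a predicate symbol $p$ for the zero test. A counter value $n$ is represented by the syntactic term $\encInc^n(\cell{c_i})$ currently stored in $\cell{c_i}$; increment is the update $\update{\cell{c_i}}{\encInc(\cell{c_i})}$, while decrement and zero test are expressed by guards over $p(\cell{c_i})$ together with a standard stack-style pattern---peeling one $\encInc$ off the top via auxiliary cells maintained as ghosts of the counter. For each instruction $\ell$ of $M$, a conjunct $\tslgoto{\ell}$ enforces that whenever $\cell{pc}$ encodes $\ell$ the prescribed updates are performed and $\cell{pc}$ moves to the correct successor line, branching via $p(\cell{c_i})$ for conditional instructions. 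A further conjunct asserts, at every step inspecting a counter, that $p(\cell{c_i})$ holds exactly when $\cell{c_i}$ still carries its unchanged initial term, \ie~the counter is zero.

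The principal obstacle is \emph{faithfulness} of the simulation inside $\tu$: since $\encInc$ and $p$ carry no axioms, a spurious model could identify distinct terms $\encInc^n(\cell{c_i})$ and $\encInc^m(\cell{c_i})$ and thereby cheat on the zero test. The remedy has two ingredients. First, \Cref{lem:reduction_to_finitary} lets me restrict to finitary executions, so the content of $\cell{c_i}$ at every step is evaluated by $\eta$ into an explicit syntactic tower $\encInc^n(\cell{c_i})$ drawn from the finite set of update terms. Second, the $p$-conjuncts above force $\langle p \rangle$ to return $\true$ on $\recursiveAssign{\cell{c_i}}$ and $\false$ on $\recursiveAssign{\encInc^n(\cell{c_i})}$ for every $n \geq 1$ that actually occurs along the trace, which in turn forces the corresponding values in $\values$ to be pairwise distinct, so the simulation cannot collapse. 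Combined with determinism of $M$, this means the literal transcript of $M$'s computation is the unique finitary execution compatible with $\varphi_M^{\mathit{sim}}$, yielding the two desired equivalences and thus the theorem.
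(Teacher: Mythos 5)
Your high-level logical skeleton is sound and differs from the paper's in an acceptable way: you give two many-one reductions (halting, which is $\Sigma_1^0$-complete, ruling out co-semi-decidability; non-halting, which is $\Pi_1^0$-complete, ruling out semi-decidability), whereas the paper gives a single reduction from the \emph{universal} halting problem for \goto~programs, which is $\Pi_2^0$-complete and hence already neither semi- nor co-semi-decidable; that is why the paper's $\tslgoto{T}$ simulates $\mathcal{G}$ on input $0$, then $1$, then $2,\dots$ under a $\Globally\Eventually$ obligation rather than on a single fixed input. Either skeleton proves the theorem, provided the simulation is faithful.

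The genuine gap is in the simulation itself, at exactly the point where the paper invests all of its technical effort. You represent counter value $n$ as the syntactic tower $\encInc^n(\cdot)$ and propose to implement decrement by ``peeling one $\encInc$ off the top via auxiliary cells maintained as ghosts of the counter.'' This does not work in TSL: updates can only wrap further function applications around the current term, never strip one off, and a ghost cell holding the predecessor term becomes stale after a single decrement, so consecutive decrements --- which a two-counter Minsky machine certainly performs --- cannot be supported with finitely many cells. The only viable route is to rebuild the predecessor by counting up from zero with an auxiliary counter, but that requires an \emph{equality test between two arbitrary towers} $\encInc^a(\cdot)$ and $\encInc^b(\cdot)$, i.e., pairwise distinctness of all towers under an uninterpreted predicate. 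Your faithfulness argument establishes much less: forcing $p$ to be $\true$ on the base term and $\false$ on every $\encInc^n(\cdot)$ with $n\geq 1$ only separates zero from nonzero; it does not prevent a model from identifying $\encInc^2(\cdot)$ with $\encInc^5(\cdot)$, which is precisely the ``cheat'' you worry about. The paper closes this hole with \Cref{lem:tu_numbers}, whose two-cell chasing construction ($\varphi_\mathit{eq}$ and $\varphi_\mathit{neq}$) forces the binary predicate $\encEq$ to hold on $(\encInc^a(\encRes),\encInc^b(\encRes))$ exactly when $a=b$, and it deliberately works with a \goto~model whose only primitives are \inc, \res, and \gotoifeq, simulating \dec~via reset, count-up, and that equality test. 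Without an analogue of \Cref{lem:tu_numbers} (or a reworked instruction set that avoids decrement), your reduction does not establish either direction of the claimed equivalences, so the proof as proposed does not go through. A secondary, fixable issue: invoking \Cref{lem:reduction_to_finitary} conjoins $\varphi_\mathit{fin}$, whose $\mathit{pick}_{\cell{c}}(\cell{n})$ updates may still be applied to your counter cells in a finitary execution, so your formula must additionally force one of its own updates at every step to keep the ``explicit syntactic tower'' claim true.
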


The main intuition behind the undecidability result is that we can encode numbers with TSL in the theory of uninterpreted functions. That is, we are able to encode incrementation, resetting some variable to zero, and equality. We only give the encoding here, for the proof of its correctness we refer to \Cref{app:undecidability}.

\begin{lemma}\label{lem:tu_numbers}
	Let $\encInc$ be a unary function, let $\encEq$ be a binary predicate, and let $\encRes$ be a constant. Let $\encInc^x(\encRes)$ correspond to applying $\encInc$ $x$-times to $\encRes$. There exists a TSL formula $\varphi_\mathit{num}$ such that every execution entailing $\varphi_\mathit{num}$ requires from its models that for all $a,b \in \mathbb{N}_0$, $a = b$ holds if, and only if, $\encInc^a(\encRes) \encEq \encInc^b(\encRes)$ holds.
\end{lemma}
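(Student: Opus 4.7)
\begin{proofsketch}
The plan is to assemble $\varphi_\mathit{num}$ as a conjunction of three blocks that jointly pin down the interpretation of $\encEq$ in any model. A \emph{counter} block $\spec{ctr} := \update{\cell{x}}{\encRes()} \land \Next \Globally \update{\cell{x}}{\encInc(\cell{x})}$ introduces a cell $\cell{x}$ whose $\eta$-image satisfies $\etaFunction{\computation}{t}{\cell{x}} = \encInc^{t-1}(\encRes())$ for every $t \geq 1$, so that the \emph{reflexivity} block $\Globally (\cell{x} \encEq \cell{x})$ forces $\langle \encEq \rangle(\encInc^a(\encRes), \encInc^a(\encRes)) = \true$ for all $a \geq 0$. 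This already delivers the ``if'' direction of the biconditional.

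For the distinctness direction, I introduce a second cell $\cell{y}$ controlled by the update invariant $\Globally (\update{\cell{y}}{\encRes()} \lor \update{\cell{y}}{\encInc(\cell{y})})$ and the fairness condition $\Globally \Eventually \update{\cell{y}}{\encRes()}$. I assert \emph{base non-equality} in both orientations via $\Globally (\update{\cell{y}}{\encRes()} \Rightarrow \Next (\neg (\cell{x} \encEq \cell{y}) \land \neg (\cell{y} \encEq \cell{x})))$, which (after a mild initial-offset tweak to avoid colliding with reflexivity at time~$0$) yields $\neg \encEq(\encInc^d(\encRes), \encRes)$ and $\neg \encEq(\encRes, \encInc^d(\encRes))$ for every $d \geq 1$. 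To lift these base inequalities along the orbit of $\encInc$, I conjoin the TSL-encoded congruence $\Globally ((\cell{x} \encEq \cell{y}) \leftrightarrow (\encInc(\cell{x}) \encEq \encInc(\cell{y})))$ in both orientations (needed because $\encEq$ is uninterpreted in $\tu$ and thus not a priori symmetric). Arranging the reset schedule for $\cell{y}$ so that $(\etaFunction{\computation}{t}{\cell{x}}, \etaFunction{\computation}{t}{\cell{y}})$ dovetails through every pair $(\encInc^k(\encRes), \encInc^{k+d}(\encRes))$ with $k \geq 0$ and $d \geq 1$ then lets the congruence instances chain the base inequalities into $\neg \encEq(\encInc^a(\encRes), \encInc^b(\encRes))$ for every $a \neq b$.

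The correctness argument has two halves. For non-vacuity, I exhibit a satisfying execution by taking $\mathcal{V} = \mathbb{N}$, interpreting $\encRes$ as $0$, $\encInc$ as successor, and $\encEq$ as equality, and choosing any computation whose updates realise $\spec{ctr}$ together with the prescribed dovetailing reset schedule for $\cell{y}$. For the ``requires from its models'' part, I fix an arbitrary $\symbolicExecution$ with $\computation \TSLSatisfaction{\mathcal{V}}{\assign} \varphi_\mathit{num}$, unfold the TSL semantics along $\computation$ to read off the enumerated predicate constraints, and conclude that $\langle \encEq \rangle$ must agree with equality on $\{\langle \encInc \rangle^a(\langle \encRes \rangle) \mid a \geq 0\}$; the iterates are pairwise distinct by a short argument, since otherwise reflexivity would collide with some base non-equality constraint. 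The main obstacle I anticipate is expressing the dovetailing reset schedule as a \emph{finite} TSL formula: a naive schedule (reset whenever allowed) only covers a sparse set of difference values, so an additional auxiliary cell or nested temporal pattern -- essentially a counter of its own -- will likely be required to make the enumeration exhaustive along both the offset $k$ and the difference $d$.
\end{proofsketch}
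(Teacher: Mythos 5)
There is a genuine gap, and you have in fact pointed at it yourself without closing it. The lemma demands that \emph{every} execution entailing $\varphi_\mathit{num}$ forces the model to satisfy $\neg(\encInc^a(\encRes) \encEq \encInc^b(\encRes))$ for \emph{all} pairs $a \neq b$. Your distinctness block controls $\cell{y}$ by a disjunctive update invariant plus a fairness condition $\Globally \Eventually \update{\cell{y}}{\encRes()}$, so the reset schedule is chosen nondeterministically by the computation. In TSL there are no quantifiers and in $\tu$ no axioms: your ``congruence'' $\Globally ((\cell{x} \encEq \cell{y}) \leftrightarrow (\encInc(\cell{x}) \encEq \encInc(\cell{y})))$ only generates the finitely-many-per-step instances at the concrete term pairs that $(\cell{x},\cell{y})$ actually visit. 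A computation that resets $\cell{y}$ at, say, times $1,2,4,8,\dots$ satisfies all your conjuncts yet only constrains a sparse set of pairs; between two resets the chain of congruence instances walks a single fixed difference for only finitely many offsets, so for any unvisited pair one can build a model interpreting $\encEq$ as $\true$ there. Your closing remark that ``an additional auxiliary cell or nested temporal pattern --- essentially a counter of its own --- will likely be required'' concedes exactly the step that constitutes the heart of the lemma; as written, the proposed $\varphi_\mathit{num}$ does not satisfy the statement.

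The paper closes this gap with a different mechanism that you should compare against: it makes the schedule \emph{deterministic and self-driving} by branching on the predicate itself. With $\update{x}{z} \land \update{b}{z}$ initially, the conjunct $\varphi_\mathit{eq}$ fires exactly when $x \encEq b$ holds (which reflexivity from $\varphi_1$ forces whenever $x$ and $b$ carry the same term), resetting $x$ to $z$, advancing $b$ to $\encInc(b)$, and directly asserting $\neg(b \encEq \encInc(b))$ and its mirror; $\varphi_\mathit{neq}$ fires otherwise, incrementing $x$ and directly asserting $\neg(x \encEq \encInc(b))$ and its mirror. An induction on $n$ shows that in \emph{every} satisfying execution $b$ marches through $\encRes, \encInc(\encRes), \encInc^2(\encRes), \dots$ and that by the time $b$ reaches $\encInc^{n+1}(\encRes)$ all inequalities $\neg(\encInc^k(\encRes) \encEq \encInc^{n+1}(\encRes))$ for $k \leq n$ have been asserted as explicit negated predicate terms --- no congruence lifting and no fairness are needed, because exhaustiveness is a consequence of the forced update pattern rather than a property of a favorable schedule. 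If you want to salvage your route, you would have to replace the disjunction-plus-fairness by a comparable predicate-guarded deterministic sweep; at that point you have essentially reconstructed the paper's $\varphi_2$ and the congruence block becomes unnecessary.
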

\begin{proofsketch}
	We construct $\varphi_\mathit{num} = \varphi_1 \land \varphi_2$ as follows: The first conjunct is defined by $\varphi_1 := \update{e}{\encRes} \land \Next \Globally\left(\update{e}{\encInc(e)} \land e \encEq e\right)$.	
	Let
	\begin{align*}
		\varphi_\mathit{eq} &:= \left( x \encEq b \right) \rightarrow \left( \update{x}{z} \land \update{b}{\encInc(b)} \land \neg\left( b \encEq \encInc(b) \right) \land \neg\left( \encInc(b) \encEq b \right) \right)\\
		\varphi_\mathit{neq} &:= \neg\left( x \encEq b \right) \rightarrow \left( \update{x}{\encInc(x)} \land \update{b}{b} \land \neg\left( x \encEq \encInc(b) \right) \land \neg\left( \encInc(b) \encEq x \right) \right). 
	\end{align*}
	Then, $\varphi_2$ is defined by $\varphi_2 := \update{x}{z} \land \update{b}{z} \land \Next \Globally (\varphi_\mathit{eq} \land \varphi_\mathit{neq})$.
\end{proofsketch}

Intuitively, $\encInc$ corresponds to incrementation, $\encRes$ to resetting a variable to zero, and $\encEq$ to equality: 
$\varphi_1$ ensures that $\encInc^n(\encRes) \encEq \encInc^n(\encRes)$ holds for all $n \in \mathbb{N}_0$. In contrast,~$\varphi_2$ ensures that if $a \neq b$ holds, then $\neg (\encInc^a(\encRes) \encEq \encInc^b(\encRes))$: Starting with $x = b = z$, $\varphi_1$ ensures that $x \encEq b$ holds initially. Then, $\varphi_\mathit{eq}$ resets $x$ to~$z$ and ``increments'' $b$, while ensuring that $\neg(f^k(z) \encEq f^{k+1}(z))$ holds, where $b = f^k(z)$. Then, $\neg(x \encEq b)$ holds and thus $\varphi_\mathit{neq}$ ``increments'' $x$ until it reaches $b = f^{k+1}(z)$, while ensuring that $\neg(f^{k+1}(z) \encEq f^\ell(z))$ holds for all $\ell < k+1$.

Using this encoding in TSL modulo $\tu$, we can construct a TSL formula $\varphi_\mathcal{G}$ for every \goto-program $\mathcal{G}$ such that $\varphi_\mathcal{G} \land \varphi_\mathit{num}$ is satisfiable in $\tu$ if, and only if,~$\mathcal{G}$ terminates on every input. 
Intuitively, $\varphi_\mathcal{G}$ ``simulates''~$\mathcal{G}$ on different inputs by starting with input zero and incrementing the input if the halting location was reached. The temporal operators of TSL allow for requiring that~$\mathcal{G}$ terminates infinitely often. 
The construction of $\varphi_\mathcal{G}$ is given in \Cref{app:undecidability}. 
Since the universal halting problem for \goto~programs is neither semi-decidable nor co-semi-decidable, the same undecidability result follows for the satisfiability of a TSL formula modulo~$\tu$, proving \Cref{thm:undecidability_tsl_modulo_tu}.

Since the theory of Presburger arithmetic $\tp$ allows for incrementation, resetting a variable to zero, and equality, we can reuse the TSL formula $\varphi_\mathcal{G}$ from above to reduce the universal halting problem for \goto~programs to TSL satisfiability modulo $\tp$ (TSL-$\tp$-SAT), proving undecidability of TSL-$\tp$-SAT. Note that this result holds for other theories that can express incrementation, reset, and equality, for instance Peano Arithmetic, as well.

\begin{theorem}[Undecidability of TSL-$\tp$-SAT]\label{thm:undecidability_tsl_modulo_tp}
	The satisfiability (validity) problem of TSL in $\tp$ is neither semi-decidable nor co-semi-decidable.
\end{theorem}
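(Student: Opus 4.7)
The plan is to adapt the reduction from the universal halting problem for \goto-programs used in the proof of undecidability of TSL-$\tu$-SAT. The key observation is that in $\tp$ the three symbols employed by the encoding of \Cref{lem:tu_numbers} already have their intended interpretation: the successor operation $\lambda x.\, x + 1$ (definable from $0$, $1$, and $+$) plays the role of $\encInc$, the equality predicate of $\tp$ plays the role of $\encEq$, and the constant $0$ plays the role of $\encRes$. After this syntactic identification, the same formula $\varphi_\mathcal{G}$ constructed from a \goto-program $\mathcal{G}$ can be re-used verbatim.

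First, I would argue that the auxiliary formula $\varphi_\mathit{num}$ is redundant over $\tp$: by the axioms of Presburger arithmetic, every model $\mathcal{M} \in \allModels{\tp}$ automatically satisfies $\encInc^a(\encRes) \encEq \encInc^b(\encRes)$ iff $a = b$, which is precisely what $\varphi_\mathit{num}$ was designed to enforce in the uninterpreted setting. Hence it suffices to reduce universal halting directly to the satisfiability of $\varphi_\mathcal{G}$ in $\tp$, and the conjunction with $\varphi_\mathit{num}$ can be dropped without consequence.

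Next, I would verify both directions of the reduction. For the forward direction, if $\mathcal{G}$ halts on every input, then the standard model of $\tp$ (natural numbers with usual arithmetic), equipped with the simulation computation constructed in the $\tu$ case, constitutes an execution satisfying $\varphi_\mathcal{G}$ in $\tp$. For the backward direction, any execution witnessing satisfiability of $\varphi_\mathcal{G}$ in $\tp$ induces a sequence of simulations of $\mathcal{G}$ on successively larger inputs that all reach the halting location; since the symbols $\encInc$, $\encEq$, $\encRes$ now have their standard $\tp$-interpretation, this really is a genuine simulation of $\mathcal{G}$ on all of $\mathbb{N}_0$.

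The main obstacle is to confirm that the structural reasoning of $\varphi_\mathcal{G}$ about incrementation, reset, and equality—originally carried out purely at the term level under the crutch of $\varphi_\mathit{num}$—remains faithful once these symbols are forced by the $\tp$-axioms. Since $\tp$ is strictly stronger than the syntactic discipline enforced by $\varphi_\mathit{num}$, the simulation argument actually becomes easier: every distinction between numerals that $\varphi_\mathit{num}$ imposed is now a consequence of the theory. As the universal halting problem for \goto-programs is $\Pi_2$-complete and in particular neither semi-decidable nor co-semi-decidable, the same follows for TSL-$\tp$-SAT; the analogous statement for validity is then immediate from \Cref{thm:duality}.
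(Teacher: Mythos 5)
Your proposal is correct and follows essentially the same route as the paper: reuse the \goto-program reduction $\varphi_\mathcal{G}$, observe that the axioms of $\tp$ already provide incrementation, reset, and equality so that the auxiliary encoding formula $\varphi_\mathit{num}$ becomes superfluous (the paper phrases this as instantiating its general \Cref{thm:high_undecidability} with a trivial $\varphi^\mathit{encnum}_{\tp}$), and conclude from the high undecidability of the universal halting problem for \goto~programs. The level of detail in your two directions of the reduction matches the paper's own argument in \Cref{lem:correctness_translation_goto_tsl}.
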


Furthermore, equality allows for encoding incrementation and resetting a variable to zero. Hence, similarly to $\tu$, there exists a TSL formula $\varphi_\mathit{enc}$ that, if entailed, enforces a binary function and a constant to behave as incrementation and a reset, respectively. The construction of $\varphi_\mathit{enc}$ is given in \Cref{app:undecidability}. Thus, the TSL formula $\varphi_\mathcal{G}$ constructed as above for a \goto~program $\mathcal{G}$ ensures that $\varphi_\mathcal{G} \land \varphi_\mathit{enc}$ is satisfiable in the theory of equality~$\te$ if, and only if, $\mathcal{G}$ terminates on every input. Hence, undecidability of TSL-$\te$-SAT follows:

\begin{theorem}[Undecidability of TSL-$\te$-SAT]\label{thm:undecidability_tsl_modulo_te}
	The satisfiability (validity) problem of TSL in $\te$ is neither semi-decidable nor co-semi-decidable.
\end{theorem}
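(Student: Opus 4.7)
The plan is to reduce the universal halting problem for \goto-programs to TSL satisfiability in $\te$, mirroring the proof of \Cref{thm:undecidability_tsl_modulo_tu}. Since the universal halting problem is $\Pi^0_2$-complete, and hence neither semi-decidable nor co-semi-decidable, an effective reduction immediately yields the undecidability claim for satisfiability; the statement about validity then follows by \Cref{thm:duality}.

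First I would construct $\varphi_\mathit{enc}$ as a streamlined version of the formula $\varphi_\mathit{num}$ from \Cref{lem:tu_numbers}. Because $\te$ fixes the interpretation of $=$ to be genuine equality, the conjunct $\varphi_1$ of $\varphi_\mathit{num}$, whose role is to force reflexivity of the otherwise uninterpreted predicate $\encEq$, becomes redundant and can be dropped. Replacing every occurrence of $\encEq$ by $=$, the conjunct $\varphi_2$ of $\varphi_\mathit{num}$ still suffices to force distinctness of the ground terms $\encInc^n(\encRes)$: its alternation between resetting $x$ to $\encRes$ while advancing $b$ by one $\encInc$-step (asserting $\neg (b = \encInc(b))$) and advancing $x$ by $\encInc$ until it catches up with $b$ (asserting $\neg (x = \encInc(b))$ for each intermediate value) accumulates, step by step, all of the pairwise disequalities $\encInc^i(\encRes) \neq \encInc^j(\encRes)$ for $i \neq j$. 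Hence $\varphi_\mathit{enc}$ forces every satisfying model of $\te$ to interpret $\encInc$ and $\encRes$ faithfully as a successor-like function and a zero-like constant.

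Second, I would invoke the construction of $\varphi_\mathcal{G}$ from the proof of \Cref{thm:undecidability_tsl_modulo_tu} essentially unchanged: it relies only on incrementation, reset, and equality, all of which $\varphi_\mathit{enc}$ provides in $\te$. Exactly as in the $\tu$ case, $\varphi_\mathcal{G} \land \varphi_\mathit{enc}$ is then satisfiable in $\te$ if, and only if, $\mathcal{G}$ terminates on every input, completing the reduction.

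The main obstacle is the correctness proof for $\varphi_\mathit{enc}$ in $\te$: one has to argue that neither the congruence axiom of equality nor any other consequence of $\te$ inadvertently identifies $\encInc^a(\encRes)$ with $\encInc^b(\encRes)$ for some $a \neq b$. I expect this to follow by a straightforward induction on the time step $t$ that tracks the current values of $x$ and $b$ and shows that, after $t$ steps, every disequality $\encInc^i(\encRes) \neq \encInc^j(\encRes)$ with $0 \leq i < j \leq t$ has been explicitly forced by the computation, so no congruence-induced collapse is possible. This mirrors, and in fact simplifies, the argument given for \Cref{lem:tu_numbers} in \Cref{app:undecidability}.
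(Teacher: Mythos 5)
Your reduction is correct and the overall architecture (reuse $\varphi_\mathcal{G}$, supply a $\te$-specific number encoding, conclude via the universal halting problem and \Cref{thm:duality}) matches the paper. However, your key lemma -- the construction of $\varphi_\mathit{enc}$ -- takes a genuinely different route. You port the nested-loop gadget $\varphi_2$ from \Cref{lem:tu_numbers}: two cells $x$ and $b$ race each other, and each comparison step explicitly asserts one of the quadratically many disequalities $\neg(\encInc^i(\encRes) = \encInc^j(\encRes))$, with $\varphi_1$ dropped because reflexivity of $=$ is an axiom of $\te$. The paper instead exploits the congruence property of genuine equality: it introduces an auxiliary unary ``predecessor'' function $g$ and uses the single-cell formula $\update{e}{z} \land \Next\Globally(\update{e}{f(e)} \land e = g(f(e)) \land \neg(f(e)=z))$, which asserts only linearly many facts; any collapse $f^a(z)=f^b(z)$ with $a<b$ then yields $z = f^{b-a}(z)$ by applying $g$ $a$ times, contradicting the asserted $f^n(z)\neq z$. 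Your version buys uniformity with the $\tu$ proof and avoids the extra function symbol, at the cost of a more involved induction (you must argue that the guard $x=b$ is correctly determined at every step, either by reflexivity or by a previously forced disequality, exactly as in the proof of \Cref{lem:tu_numbers}); the paper's version buys a much shorter correctness argument by leaning on congruence. One small clarification for your write-up: the worry that congruence might make the asserted disequalities jointly unsatisfiable is best dispatched not by your induction but by exhibiting the standard model $(\mathbb{N}, \mathit{succ}, 0)$, which is needed anyway for the ``$\mathcal{G}$ universally halts implies satisfiable'' direction; the induction only handles the converse direction that every model keeps the terms $\encInc^n(\encRes)$ pairwise distinct.
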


\section{(Semi-)Decidable Fragments}

In \Cref{sec:undecidability}, we showed that TSL satisfiability is undecidable in~$\tu$.
In this section, however, we identify fragments of TSL on which \Cref{alg:TSL_TU_SAT_idea} terminates for certain inputs. In fact, we present one fragment for which TSL-$\tu$-SAT is decidable and two fragments for which TSL-$\tu$-SAT is semi-decidable.

First, we consider the TSL reachability fragment, \ie, the fragment of TSL that only permits the next operator and the eventually operator as temporal operators.
In our applications, this fragment corresponds to finding counterexamples to safety properties. For satisfiable reachability formulas, \Cref{alg:TSL_TU_SAT_idea} terminates. 
The main idea behind the termination is that the BSA of a reachability formula has an accepting lasso-shaped run and since $\varphi$ is satisfiable, this run is consistent.
For the proof, we refer to \Cref{app:fragments}.

\begin{lemma}\label{lem:semi-decidability_reachability}
    Let $\varphi$ be a TSL formula in the reachability fragment.
    If $\varphi$ is finitary satisfiable in $\tu$, then \Cref{alg:TSL_TU_SAT_idea} terminates~on~$\varphi$.
\end{lemma}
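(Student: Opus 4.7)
The plan is to show that the $\mathit{SatSearch}$ branch of \Cref{alg:TSL_TU_SAT_idea} terminates on $\varphi$; since $\mathit{SatSearch}$ and $\mathit{UnsatSearch}$ run in parallel, this suffices for termination of the algorithm. By \Cref{thm:tsl_to_bsa} I obtain the equivalent finitary BSA $\mathcal{B}$ for $\varphi$. Finitary satisfiability of $\varphi$ in $\tu$ gives a finitary execution $\symbolicExecution$ satisfying $\varphi$ that corresponds to an accepting run $c$ of $\mathcal{B}$.

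The core of the argument is to construct from $c$ a consistent accepting lasso $\pref \cdot \rec^\omega$ of $\mathcal{B}$. I exploit the restricted structure of the reachability fragment: because $\varphi$ uses only $\Next$ and $\Eventually$ as temporal operators, all pending eventualities in $\varphi$ are witnessed by some finite prefix of $\computation$, say by step $t_0$. I then modify $\computation$ from step $t_0$ onward so that every cell is self-updated via $\update{\cell{c}}{\cell{c}}$ and the guard valuation chosen at each step matches the one taken by $c$ at step $t_0$. Since all cells retain their step-$t_0$ function terms and all guards retain their step-$t_0$ evaluations, the modified execution $(\computation', \mathcal{V}, \assign)$ remains finitary and still satisfies $\varphi$ under $\assign$: all eventualities are already witnessed within the first $t_0$ steps, which are unchanged, and every local requirement on an accepting step of $c$ continues to hold because the local snapshot is repeated verbatim. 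By \Cref{thm:tsl_to_bsa}, $\computation'$ corresponds to an accepting run of $\mathcal{B}$ that by construction is lasso-shaped: $\pref$ consists of the first $t_0$ transitions of $c$ and ends in the accepting state $q_f := \projection{\atPos{c}{t_0}}{1}$, and $\rec$ is the single self-loop at $q_f$ labeled with all-self-updates and the chosen guard valuation.

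It remains to verify that $\pref \cdot \rec^\omega$ passes the SMT query in $\mathit{SatSearch}$. Writing $(v_p, \_) := \executionEffect{\pref}$ and $(v_r, P) := \executionEffect{\pref \cdot \rec}$, the equality $v_p(\cell{c}) \equiv v_r(\cell{c})$ holds syntactically for every cell $\cell{c}$, because $\rec$ performs only self-updates and therefore does not change the function terms stored in cells. The constraint set $P$ is consistent since all its constraints originate from the real execution $(\computation', \mathcal{V}, \assign)$ and are jointly satisfied by $\assign$; moreover, iterating $\rec$ forever adds no further constraints, since guards are repeated verbatim on unchanged cells. Consequently the SMT query reduces to a satisfiable ground formula over the theory of equality, $\mathit{SatSearch}$ returns \sat on this lasso, and since it enumerates all accepting lassos of $\mathcal{B}$ it eventually considers $\pref \cdot \rec^\omega$ and terminates.

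The main obstacle is verifying two non-trivial facts about the construction of $\computation'$. First, that replacing the tail of $\computation$ with a self-update tail reusing the step-$t_0$ guard valuation actually preserves satisfaction of $\varphi$; this requires a structural induction over $\varphi$ exploiting that no genuine $\Until$ or $\Globally$ appears and that the post-$t_0$ portion of the execution is a stationary repetition of a single snapshot already realised by $c$. Second, that the finitary BSA produced by \Cref{thm:tsl_to_bsa} indeed exposes the desired self-update self-loop at $q_f$; this follows because $\computation'$ is a satisfying finitary execution and hence, by \Cref{thm:tsl_to_bsa}, is captured by an accepting run of $\mathcal{B}$ whose shape is exactly $\pref \cdot \rec^\omega$.
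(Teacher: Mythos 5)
Your proposal is correct and follows essentially the same route as the paper's own (much terser) argument: pick a consistent transition sequence reaching an accepting state, extend it with a self-update loop so that the cell terms are syntactically unchanged and the equality check in $\mathit{SatSearch}$ is trivially satisfied, and rely on the lasso enumeration to find it. Your version merely spells out in more detail why the self-update tail preserves satisfaction in the reachability fragment and why the resulting SMT query is satisfiable, which the paper leaves implicit.
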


Restricting the reachability fragment further, we consider TSL formulas with updates, predicates, logical operators, next operators, and at most one top-level eventually operator. Such formulas are either completely time-bounded or they are of the form $\varphi = \Eventually \varphi'$, where~$\varphi'$ is time-bounded.
In the dual validity problem, such formulas correspond to invariants on a fixed time window, a useful property for many applications. \Cref{alg:TSL_TU_SAT_idea} is guaranteed to terminate for satisfiable and unsatisfiable formulas of the above form if a \emph{suitable} BSA is constructed.
Such a suitable BSA has a single accepting state $q$ indicating that the time-bounded part has been satisfied. Intuitively, a suitable BSA ensures that all runs reaching $q$ are accepting and that only finitely many transition sequences lead to $q$.
Then, if~$\varphi$ is unsatisfiable, \Cref{alg:TSL_TU_SAT_idea} is able to exclude all transition sequences leading to $q$ and thus to terminate. A BSA with infinitely many transition sequences leading to $q$, in contrast, may cause the algorithm to diverge as it may consider infinitely many consistent subsequences before finding the inconsistent one yielding the exclusion of the sequences leading to $q$.
A suitable BSA exists for every TSL formula in the considered fragment.
For the proof, including a more detailed description of suitable BSAs, we refer to \Cref{app:fragments}.

\begin{lemma}\label{lem:TSL-TU-SAT_termination_top-level_eventually}
    Let $\varphi$ be a TSL formula with only logical operators, predicates, updates, next operators, and at most one top-level eventually operator.
    \Cref{alg:TSL_TU_SAT_idea} terminates on $\varphi$ if it picks a suitable respective BSA.\end{lemma}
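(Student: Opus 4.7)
The plan is to first justify that a suitable BSA exists for every $\varphi$ in the fragment, and then analyze the two parallel branches of \Cref{alg:TSL_TU_SAT_idea} separately, showing both terminate regardless of whether $\varphi$ is satisfiable or unsatisfiable.

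For the existence of a suitable BSA, I would exploit the fact that $\varphi$ uses only next operators together with at most a single top-level eventually, so its semantics is fixed by a bounded-length witness: either $\varphi$ constrains exactly the first $d = \depth{\varphi}$ steps, or such a bounded obligation must be discharged at some finite point. From the standard next-unrolled construction underlying \Cref{thm:tsl_to_bsa}, I would extract a BSA with a single accepting state $q$ reached precisely when the bounded obligation has been discharged; the part of the automaton reachable before $q$ is acyclic, yielding finitely many prefix paths $\pi_1, \dots, \pi_N$ into $q$. At $q$ I would install self-loops that use only self-updates $\update{\cell{c}}{\cell{c}}$ and that cover every truth assignment to the finite guard set $\guards$, so that every accepting run has the form $\pi_i \cdot r^\omega$ for some $i$ and a single self-loop $r$ at $q$, and so that the self-loops introduce no fresh function terms.

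For \emph{SatSearch}, if $\varphi$ is satisfiable then some execution witnesses $\varphi$, and by construction of the BSA it corresponds to a lasso $\pi_i \cdot r^\omega$ whose accumulated constraint trace is consistent; the corresponding SMT query is thus satisfiable, so \emph{SatSearch} returns \sat{} after enumerating finitely many candidate lassos. For \emph{UnsatSearch}, if $\varphi$ is unsatisfiable then every $\pi_i \cdot r^\omega$ is inconsistent; since the self-loops at $q$ are self-updates and since we provide a variant for each truth assignment to $\guards$, one can always pick $r$ to be individually compatible with any constraints accumulated along $\pi_i$, so the inconsistency must already lie within $\pi_i$ itself. Writing $N_{\max} := \max_i |\pi_i|$, once the outer loop reaches $n = N_{\max}$ each $\pi_i$ is enumerated as a length-$n$ subword, detected as inconsistent, and the set of runs containing it as a factor is removed from $\compBSA$. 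Since every accepting run must begin with some $\pi_i$ (as $q$ is the unique accepting state and all paths to $q$ lie in $\{\pi_1, \dots, \pi_N\}$), after all of these are removed no accepting run remains and \emph{UnsatSearch} returns \unsat.

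The main obstacle is establishing that the self-loops at $q$ are truly ``consistency neutral'' --- that is, that inconsistency of $\pi_i \cdot r^\omega$ forces inconsistency of the finite word $\pi_i$ alone. This is the technical heart of the argument and relies on the self-loops generating no new function terms (so that no predicate term arises at $q$ that did not already appear in $\pi_i$) and on supplying enough self-loop variants to realize any required guard values compatible with the prefix. Once this structural property of the suitable BSA is proved, the rest reduces to the finiteness of the set of paths into $q$ and a routine enumeration argument for the two search procedures.
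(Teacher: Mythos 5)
Your overall strategy --- a single accepting state, a finite cycle-free core, ``harmless'' self-loops at the accepting state, and then letting $\mathit{UnsatSearch}$ exclude the finitely many core sequences --- is the same as the paper's, and your identification of the consistency-neutrality of the self-loops as the technical heart matches the paper's (equally terse) assertion that the source of unsatisfiability always lies in the time-bounded part. For the satisfiable case the paper simply observes that the fragment is contained in the reachability fragment and invokes \Cref{lem:semi-decidability_reachability}, whereas you re-derive that case by hand; that difference is harmless.

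The genuine gap is your claim that ``the part of the automaton reachable before $q$ is acyclic, yielding finitely many prefix paths $\pi_1,\dots,\pi_N$ into $q$.'' For $\varphi = \Eventually\varphi'$ the obligation $\varphi'$ may first be dischargeable at an arbitrarily late time, so an automaton whose pre-$q$ region is acyclic can only accept executions discharging $\varphi'$ within a fixed bound; it is therefore not a correct BSA for $\varphi$ in the sense of \Cref{thm:tsl_to_bsa} (nor, without an additional shifting argument showing that in $\tu$ one may always discharge the obligation immediately --- an argument you do not supply --- even emptiness-equivalent). The paper's suitable BSA instead places the same kind of self-loops on the initial state $q_0$; consequently there are \emph{infinitely} many paths from $q_0$ to $q_f$, and your step ``every accepting run must begin with some $\pi_i$, as all paths to $q$ lie in $\{\pi_1,\dots,\pi_N\}$'' is false for that automaton. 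What the paper actually uses is that the \emph{inner}, cycle-free transition sequences between the two self-looping states are finitely many, that each of them is inconsistent in isolation when $\varphi'$ is unsatisfiable, and that the exclusion step of \Cref{alg:TSL_TU_SAT_idea} removes every run containing an excluded subword at \emph{any} offset $m$, not merely as a prefix; that last feature is what disconnects $q_0$ from $q_f$ despite the initial self-loops, and it is the ingredient your argument is missing. Your $\mathit{SatSearch}$ analysis inherits the same defect, since a satisfying execution that discharges the obligation late corresponds to no run of your automaton.
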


Note that \Cref{alg:TSL_TU_SAT_idea} is only a formal decider for this fragment if we ensure that a suitable BSA is always generated. In practice, we experienced that this is usually the case even without posing restrictions on the BSA construction.
Lastly, we consider a fragment of TSL that does not restrict the temporal structure of the formula but the number of used cells. For TSL formulas with a single cell, \Cref{alg:TSL_TU_SAT_idea} always terminates on satisfiable inputs:

\begin{lemma}\label{lem:TSL-TU-SAT_termination_one_cell}
	Let $\varphi$ be a TSL formula such that $|\cells| = 1$. If $\varphi$ is finitary satisfiable in the theory of uninterpreted functions, then \Cref{alg:TSL_TU_SAT_idea} terminates on $\varphi$.	
\end{lemma}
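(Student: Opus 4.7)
The plan is to extract, from any finitary satisfying execution, an accepting lasso on which the SMT query of $\mathit{SatSearch}$ returns $\sat$; the one-cell hypothesis is what makes such a lasso exist unconditionally. By assumption and \Cref{thm:tsl_to_bsa}, fix a finitary satisfying execution $(\computation,\mathcal{V},\assign)$ with corresponding accepting run $c$ of $\mathcal{B}_\varphi$, whose constraint trace $\constraintTrace$ is consistent. Since $c$ is accepting, some accepting state $q$ is visited at two times $\tau_1<\tau_2$; set $\pref := c[0..\tau_1]$, $\rec := c[\tau_1..\tau_2]$, $v_p := \etaFunction{\computation}{\tau_1}{\cell{c}}$, and $v_r := \etaFunction{\computation}{\tau_2}{\cell{c}}$. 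This lasso is accepting, its constraint set $P$ is a consistent finite prefix of $\constraintTrace$, and the one-cell hypothesis yields $v_r = C[v_p]$ for a single one-hole context $C$ formed by composing the cycle's update terms outside-in.

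To witness $P \land v_p(\cell{c}) = v_r(\cell{c})$ in $\tu$, I would take the Herbrand model of ground terms quotiented by the congruence $\equiv$ generated by $v_p = C[v_p]$, and define each predicate on $\equiv$-classes by the positive atoms of $P$. This is a valid UF-model of the SMT query provided no two atoms $p(\sigma[v_s]), p(\sigma[v_t])$ of opposite sign in $P$ lie in the same $\equiv$-class. The core step is a computation of the classes of the cell terms: for $t<\tau_1$ the term $v_t$ does not contain $v_p$ as a subterm, so $[v_t]_\equiv = \{v_t\}$; for $t\in[\tau_1,\tau_2)$ one has $v_t = C_{t-\tau_1}[v_p]$, and $[v_t]_\equiv = \{C_{t-\tau_1}[C^j[v_p]] : j\ge 0\}$. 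A syntactic depth and unique-readability argument on function-term trees then shows that two such classes can intersect only when the underlying cell terms are already syntactically identical, which forces the intervening updates to all be self-updates; in that case $\langle\cell{c}\rangle$ is constant between the two time points in $(\computation,\mathcal{V},\assign)$, so both atoms share a sign in $P$ and no contradiction arises.

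The main obstacle is the cycle-internal subcase of the non-collision claim: a putative identification $C_a[v_p] \equiv C_b[v_p]$ with $a<b$ in $[0,|\rec|)$ reduces via congruence to the syntactic equality $C_b = C_a \circ C^j$ for some $j\ge 0$, and ruling this out outside the degenerate self-update case crucially uses $|\cells|=1$, since with a single hole the outermost update symbols on each side can be compared directly by term induction, whereas multiple cells would permit more involved interleavings that break the depth estimate. Once the non-collision is established, the quotient Herbrand model witnesses $\sat$ for the SMT check on $\pref.\rec$, so $\mathit{SatSearch}$ enumerates this lasso, returns $\sat$, and \Cref{alg:TSL_TU_SAT_idea} terminates on $\varphi$ by \Cref{thm:correctness_algorithm}.
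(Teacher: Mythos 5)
Your overall strategy---extract a lasso from a satisfying run and witness the SMT query by a Herbrand model quotiented by the induced congruence---is the right kind of argument, but the proof breaks at its first step: you take an \emph{arbitrary} pair of times $\tau_1<\tau_2$ at which an accepting state repeats, and for such a pair the non-collision claim is false. Your analysis presupposes that every update term in the cycle contains the cell exactly once, so that $v_r=C[v_p]$ for a one-hole context $C$ and the $\equiv$-classes have the iterated-context shape you describe. Update terms need not contain the cell at all: a constant reset $\update{\cell{x}}{a()}$ is a legitimate update occurring in $\varphi$. Concretely, take a finitary satisfiable formula whose satisfying computation alternates $\update{\cell{x}}{a()}$ and $\update{\cell{x}}{f(\cell{x})}$, with $p$ true on $a()$ and false on $f(a())$; the constraint trace is consistent, but if the accepting state happens to repeat at consecutive times the query contains $p(a())\land\neg p(f(a()))\land a()=f(a())$, which is unsatisfiable in $\te$. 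The algorithm still terminates---it eventually finds the period-two lasso whose endpoints carry the identical term $a()$---but your argument, which commits to one particular repetition and claims its query is satisfiable, does not establish this. For the same reason your auxiliary claim that $[v_t]_\equiv$ is a singleton for $t<\tau_1$ fails: under resets an earlier cell term can literally recur as $v_p$.

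What is missing is precisely the content of the paper's proof: the lasso must be \emph{chosen} so that the equation $v_p=v_r$ cannot create collisions. The paper splits on whether the depth of $\etaFunction{\computation}{t}{\cell{x}}$ is bounded along the run. If it is bounded, only finitely many cell terms arise, and a pigeonhole on pairs (state, cell term) yields $\tau_1,\tau_2$ with \emph{syntactically identical} cell terms, so the equation is trivially satisfied and the query inherits consistency from the constraint trace. If it is unbounded, $\tau_2$ is taken at a point where the cell term is strictly deeper than every term seen before, so that a collision could only involve predicate terms from a fixed finite set of bounded depth; a finite-index equivalence $\sim$ recording the current state together with the truth values of these finitely many ``collision candidates'' then lets one pick $\tau_1\sim\tau_2$ with an accepting state in between, and $\tau_1\sim\tau_2$ rules out the remaining collisions. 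The single-cell hypothesis is what makes this depth bookkeeping work (there is one term whose depth either stabilizes or exceeds every bound); it is not needed, as you suggest, to compare outermost update symbols. To repair your proof you would have to add exactly this selection argument, at which point it coincides with the paper's.
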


Intuitively, restricting the TSL formula to use only a single cell prevents us from simulating arbitrary computations and thus from reducing from the universal halting problem of \goto~programs as in the general undecidability proof.
The formal proof, given in \Cref{app:fragments}, however, is unrelated to the above intuition.
%
Combining the three observations, we obtain the following (semi-)decidability results for the satisfiability of fragments of TSL modulo $\tu$:

\begin{theorem}
The satisfiability problem of TSL formulas in $\tu$ is (1) semi-decidable for the reachability fragment of TSL, (2) decidable for formulas consisting of only logical operators, predicates, updates, next operators, and at most one top-level eventually operator, and (3) semi-decidable for formulas with one~cell.
\end{theorem}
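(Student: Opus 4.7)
The plan is to assemble the theorem directly from its three immediately preceding lemmas together with the soundness statement in \Cref{thm:correctness_algorithm}. Each of the three parts corresponds to exactly one of \Cref{lem:semi-decidability_reachability}, \Cref{lem:TSL-TU-SAT_termination_top-level_eventually}, and \Cref{lem:TSL-TU-SAT_termination_one_cell}; no new construction is needed, and the proof amounts to showing in each case that a termination guarantee, combined with the soundness of \Cref{alg:TSL_TU_SAT_idea}, yields the corresponding (semi-)decidability result.

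For parts (1) and (3), I would first recall that by \Cref{thm:correctness_algorithm} the algorithm never lies: any \sat{} output witnesses a finitary satisfying execution, and any \unsat{} output witnesses the absence of one. For part~(1), I would invoke \Cref{lem:semi-decidability_reachability} to conclude that the algorithm terminates on every finitary satisfiable reachability formula, and then use \Cref{lem:reduction_to_finitary} to bridge from finitary to general satisfiability in $\tu$, yielding a semi-decision procedure for the reachability fragment. For part~(3), \Cref{lem:TSL-TU-SAT_termination_one_cell} provides the analogous termination guarantee for single-cell inputs, and the same pairing with soundness (and, if needed, \Cref{lem:reduction_to_finitary}) gives semi-decidability.

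For part~(2), the stronger claim of full decidability comes from \Cref{lem:TSL-TU-SAT_termination_top-level_eventually}: if \Cref{alg:TSL_TU_SAT_idea} is instantiated so that it always picks a suitable BSA for an input from this fragment, it terminates on every such input, regardless of satisfiability. Combined with \Cref{thm:correctness_algorithm} this yields a genuine decision procedure. To make this formal I would also observe that a suitable BSA is guaranteed to exist for every formula in the considered fragment (as noted after \Cref{lem:TSL-TU-SAT_termination_top-level_eventually}), so the choice can be made effectively, and the algorithm together with this selection rule constitutes the decider.

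The main point I would expect to need care about is ensuring that the fragment restriction is preserved by any auxiliary constructions used in the argument, in particular that \Cref{lem:reduction_to_finitary} does not leave the reachability fragment or the one-cell restriction when converting satisfiability to finitary satisfiability. If $\varphi_\mathit{fin}$ respects these shapes, the three parts follow immediately; otherwise the theorem is to be read via finitary satisfiability, with \Cref{lem:semi-decidability_reachability,lem:TSL-TU-SAT_termination_one_cell} serving directly as the semi-decidability witnesses. Either way the technical content lives in the three lemmas, and the theorem itself is their joint packaging.
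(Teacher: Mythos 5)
Your proposal matches the paper's own treatment: the theorem carries no separate proof there beyond ``combining the three observations,'' which is exactly the pairing of \Cref{lem:semi-decidability_reachability}, \Cref{lem:TSL-TU-SAT_termination_top-level_eventually}, and \Cref{lem:TSL-TU-SAT_termination_one_cell} with the soundness guarantee of \Cref{thm:correctness_algorithm} that you describe. Your closing caveat is well taken --- $\varphi_\mathit{fin}$ indeed leaves both the reachability fragment (it introduces $\Globally$) and the one-cell restriction (it adds the cell $\cell{n}$), so the lemmas strictly certify the finitary reading of the problem, a point the paper likewise leaves implicit.
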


\section{Evaluation}

We implemented the algorithm for checking TSL modulo $\tu$ satisfiability\footnote{\url{https://github.com/reactive-systems/tsl-satisfiability-modulo-theories}}.
We used \emph{TSL~tools}\footnote{\url{https://github.com/reactive-systems/tsltools}} to handle TSL, \emph{spot}~\cite{DBLP:conf/atva/Duret-LutzLFMRX16} to transform the approximated LTL formulas into NBAs, \emph{SyFCo}~\cite{JacobsFS16} for LTL transformations, and \emph{z3}~\cite{MouraB08} to solve SMT queries.
The implementation follows the extended algorithm description in \Cref{app:fulldescription}.
Since in some cases the default optimizations of \emph{spot} produce a large overhead in computation time, we first execute it with these and if this does not terminate within 20s, we execute it without optimizations.
We evaluated the implementation on three benchmark classes and a machine with an AMD Ryzen~7 processor, using a virtual machine with two logical cores and 6 GB of~RAM.


\begin{figure}[t]
\centering
\begin{minipage}[b]{0.48\textwidth}
\centering
\begin{tikzpicture}
	\path[use as bounding box] (-1.15,-0.9) rectangle (4.3,2.8);

    \begin{axis}[ymode=log, xmin =0, xlabel=Scaling Factor $n$, ylabel=Time(ms), width=\textwidth, height=4.205cm, ylabel near ticks]
        \addplot+[black!40!red, mark options={fill=black!40!red,scale=0.7},semithick] table [x=ID, y=Time, col sep=comma] {satscale.csv};
        \addplot+[black!70!blue, mark options={fill=black!70!blue,scale=0.7},semithick] table [x=ID, y=Time, col sep=comma] {unsatscale.csv};
        \addplot [red, thick] coordinates {(0, 60000) (15, 60000)} node[below] at (axis cs:3, 60000){TO (60s)};
        \addlegendentry{$\varphi_{sat}$};
        \addlegendentry{$\varphi_{unsat}$};
    \end{axis}
\end{tikzpicture}
\captionof{figure}{Execution times in milliseconds of the scalability benchmark series.}\label{fig:scalability}
\end{minipage}
\hfill
\begin{minipage}[b]{0.48\textwidth}
\centering
\begin{tikzpicture}
	\path[use as bounding box] (-1.15,-0.9) rectangle (4.3,2.8);
	
    \begin{axis}[xmin=0, ymode=log, xlabel=Accumulated Instances\vphantom{g}, ylabel=Time(ms), width=\textwidth, height=4.2cm, ylabel near ticks]
        \addplot+[black!70!blue,mark=.,semithick] table [x=ID, y=Time, col sep=comma] {fuzzed.csv};
        \addplot [red, thick] coordinates {(0, 30000) (570, 30000)} node[below] at (axis cs:120, 30000){TO (30s)};
    \end{axis}
\end{tikzpicture}
\captionof{figure}{Execution times in milliseconds of the random benchmark series.}\label{fig:fuzzed}
\end{minipage}
\end{figure}

\paragraph*{Scalability Benchmark Series.} 
We test the scalability of the algorithm with parameterized decidable benchmarks. The timeout is one minute.
Note that \emph{spot} can always perform its optimizations.
The satisfiable benchmarks are defined~by
$\varphi_{sat}(n) := (\LTLglobally \update{\cell{x}}{f(\cell{x})}) \land (\LTLfinally \neg p(\cell{x})) \land \left(\bigwedge_{i=0}^n p(f^i(\cell{x}))\right)$.
The parameter~$n$~corresponds to the number of updates that have to be performed to find a satisfiable lasso.
By \Cref{lem:TSL-TU-SAT_termination_one_cell}, the algorithm always terminates.
The TSL formula
$\varphi_{unsat}(n) := (\LTLglobally (q(\cell{x}) \leftrightarrow \neg q(f^n(\cell{x})))) \land (\LTLglobally \update{\cell{x}}{f(\cell{x})}) \land \LTLfinally (q(\cell{x}) \land \LTLnext^n q(\cell{x}))$
defines the unsatisfiable benchmarks.
The parameter $n$ corresponds to the ``distance'' in time and number of updates of the conflict causing unsatisfiability. The algorithm always terminates. 
The results are shown in \Cref{fig:scalability}. The algorithm scales particularly well for the satisfiable formulas. However, the experiments indicate an exponential complexity of the algorithm for the unsatisfiable formulas.

\paragraph*{Random Benchmark Series.} 
We implemented a random TSL formula generator that uses \emph{spot}'s \texttt{ltlrand} to generate random LTL formulas and then substitutes the atomic propositions with random updates and predicates.
The generated TSL formulas have one to three cells, one to three different updates and one to three different predicates. 
For the LTL formulas generated by \texttt{ltlrand} we use tree sizes from 5 to 95 in steps of five.
For each of the tree sizes, we generate 30 formulas; 570 in total. The execution times are shown in \Cref{fig:fuzzed}. On many formulas, the algorithm terminates within one second.
The implementation returns \sat~for 513 of the 570 formulas. 
It times out after 30s on 29 formulas. However, the timeouts already occur in the automaton construction, both with and without \emph{spot}'s optimizations. 
Only 28 formulas are unsatisfiable. For 25 of these unsatisfiable formulas, the intermediate LTL approximation formula is already unsatisfiable, \ie, only for three formulas there is some conflict due to updates and predicate evaluation.

\paragraph*{Applications Benchmark Series.}
These benchmarks correspond to checking consistency of a specification and validating assumptions of a system. Hence, they illustrate how satisfiability results can aid the system design.
The results are presented in \Cref{tab:evaluation_application}.
We introduce two of the benchmarks in more detail here. The other, slightly larger, ones, including different kinds of arbiters, a scheduler, and modules of the Syntroids~\cite{GeierH0F19} arcade game, are described in \Cref{app:other_benchmarks}.

\begin{table}[t]
	\caption{Execution times in seconds of the application benchmark series.}\label{tab:evaluation_application}
	\centering
	\begin{minipage}[b]{0.52\textwidth}
	\centering
	\begin{tabular}[t]{p{3.5cm}|>{\centering}p{1.6cm}c}
		Benchmark & Result & Time \\\hline\hline
		Chain&\sat&7.06\\
		Filter&\unsat&0.33\\
		Gamemodechooser Ass.&\unsat&35.55\\
	    Holding Arbiter&\sat&11.75\\
	    Small Holding Arbiter&\sat&36.69\\
	    P.\ T.\ Arbiter&\unsat&56.03\\
	    Approx.\ P.\ T.\ Arbiter&\unsat&940.03
	\end{tabular}
	\end{minipage}%
	\hfill
	\begin{minipage}[b]{0.45\textwidth}
	\centering
	\begin{tabular}[t]{p{2.6cm}|>{\centering}p{1.6cm}c}
		Benchmark & Result & Time \\\hline\hline
		Inductive Ass.&\unsat&0.25\\
	    One Of Two&\unsat&1.20\\
	    One Of Three&\unsat&4.25\\
	    Injector&\unsat&1.52\\
	    Invariant Holding&\unsat&2.33\\
	    Scheduler&\unsat&3.87
	\end{tabular}
	\end{minipage}
\end{table}

The \emph{Chain} benchmark considers a compound system of two chained modules $m_1$ and $m_2$ that receive an input value, store it, and forward it to the next system: $\varphi_i := \LTLglobally \LTLfinally (\update{\cell{mem}_i}{\cell{in}_i} \land \LTLnext \update{\cell{in}_{i + 1}}{\cell{mem}_i})$ for $i \in \{1,2\}$.
To simulate the input of the first module, we use an update with an uninterpreted function: $\spec{inp} := \LTLglobally \update{\cell{in}_1}{f(\cell{in}_1)}$.
We require that if some property $p$ holds on $m_1$'s input, $p$ also needs to hold hold on $m_2$'s output: $\spec{spec} := \LTLglobally (p(\cell{in}_1) \to \LTLfinally p(\cell{in}_3))$.
Our algorithm determines within~8s that $(\spec{inp} \land \varphi_1 \land \varphi_2) \land \lnot \spec{spec}$ is satisfiable, detecting an inconsistency:
If $m_1$ stores some value on which $p$ holds, it may overwrite it before $m_2$ copies it, preventing the value to reach $m_2$'s output.

The \emph{Filter} benchmark studies a system that ``passes through'' an input value to a cell if it fulfills a certain property $p$ and holds the previous value otherwise:
$\spec{filter} := \update{\cell{out}}{\mathit{d}()} \land \LTLnext\LTLglobally ((p(\cell{in}) \to \update{\cell{out}}{\cell{in}}) \land (\lnot p(\cell{in}) \to \update{\cell{out}}{\cell{out}}))$, where $d$ is a constant representing an initial default value. The default value fulfills $p$, \ie, $\spec{fact} := \LTLglobally \predconst{d}$.
As for the chain, $\spec{inp} := \LTLglobally \update{\cell{in}}{f(\cell{in})}$ simulates the input.
The filter is valid if $p$ holds on all outputs after the initialization: $\spec{spec} := \LTLnext\LTLglobally p(\cell{out})$. 
Within 400ms, the algorithm confirms that $(\spec{inp} \land \spec{fact} \land \spec{filter}) \land \lnot \spec{spec}$ is unsatisfiable, validating the filter.

\section{Related Work}

Linear-time temporal logic (LTL)~\cite{Pnueli77} is one of the most popular specification languages for reactive systems. It is based on an underlying assertion logic, such as propositional logic, which is extended with temporal modalities.
Satisfiability of propositional LTL has long known to be decidable~\cite{SistlaC85} and there are efficient tools for LTL satisfiability checking~\cite{RozierV07,LiZPVH13}.

While propositional LTL is very common, especially in hardware verification, LTL with richer assertion logics, such as first-order logic and various theories, have long been used in verification (cf.~\cite{Manna+Pnueli/81/Verification}).
Temporal Stream Logic (TSL)~\cite{FinkbeinerKPS19} was introduced as a new temporal logic for reactive synthesis. In the original TSL semantics, all functions and predicates are uninterpreted. TSL synthesis is undecidable in general, even without inputs or equality, but can be under-approximated by the decidable LTL synthesis problem~\cite{FinkbeinerKPS19}. TSL has been used to specify and synthesize an arcade game realized on an FPGA~\cite{GeierH0F19}.

Constraint LTL (CLTL)~\cite{DemriD07} extends LTL with the possibility of expressing constraints between variables at bounded distance. A constraint system $\mathcal{D}$ consists of a concrete domain and an interpretation of relations on the domain. In Constraint LTL over $\mathcal{D}$ (CLTL($\mathcal{D}$)), one can relate variables with relations defined in $\mathcal{D}$.  
Similar to updates in TSL, CLTL can specify assignment-like statements by utilizing the equality relation.
Like for all constraints allowing for a counting mechanism, LTL with Presburger constraints, \ie, CLTL($\mathbb{Z},=,+$), is undecidable~\cite{DemriD07}. However, there exist decidable fragments such as LTL with finite constraint systems~\cite{Demri06b} and LTL with integer periodicity constraints~\cite{Demri06a}.
Permitting constraints between variables at an unbounded distance leads to undecidability even for constraint systems that only allow equality checks on natural numbers. Restricting such systems to a finite number of constraints yields decidability again~\cite{DemriLN05}.
In TSL modulo theories, a theory is given from which a model can be chosen. In CLTL, in contrast, the concrete model is fixed. Therefore, TSL modulo theories cannot be encoded into CLTL in general.

LTL has been extended with the \emph{freeze operator}~\cite{DemriL06,DemriDG07}, allowing for storing an input in a register.
 Then, the stored value can be compared with a current value for equality. Freeze LTL with two registers is undecidable~\cite{LisitsaP05,FreezeLTL} . For flat formulas, \ie, formulas where the possible occurrences of the freeze operator are restricted, decidability is regained~\cite{FreezeLTL}.
Similar to the freeze operator, updates in TSL allow for storing values in cells and in TSL modulo the theory of equality the equality check can be performed. 
In TSL, we can perform computations on the stored values which is not possible in freeze LTL. Hence, freeze LTL can be seen as a special case of TSL.
Constraint LTL has been augmented with the freeze operator as well~\cite{FreezeLTL}. For an infinite domain equipped with the equality relation, it is undecidable. For flat formulas, decidability is regained~\cite{FreezeLTL}.

The temporal logic of actions (TLA)~\cite{Lamport94} is designed to model computer systems. States are assignments of values to variables and actions relate states. Actions can, similar to updates in TSL, describe assignments of variables. A TLA formula may contain state functions and predicates. Actions and state functions are combined with the temporal operators $\Globally$ and $\Eventually$. In contrast to TSL, $\Next$ and $\Until$ are not permitted.
The validity problem for the propositional fragment of TLA, \ie, with uninterpreted functions and predicates, is PSPACE complete~\cite{Ramakrishna95}.

Similar to temporal logics, dynamic logic~\cite{Pratt76,HarelMP77} is an extension of modal logic to reason about computer programs. Dynamic logic allows for stating that after action $a$, it is necessarily the case that $p$ holds, or it is possible that $p$ holds. Compound actions can be build up from smaller actions. In propositional dynamic logic (PDL)~\cite{FischerL79}, data is omitted, \ie, its terms are actions and propositions. PDL satisfiability is decidable in EXPTIME~\cite{Pratt78}.
First-order dynamic logic (FODL)~\cite{Harel79} allows for including data: First-order quantification over a first-order structure, the so-called domain of computation, is allowed.
Dynamic logic does not contain temporal operators such as $\Globally$ or $\Eventually$.
Since we consider reactive systems, \ie, systems that continually interact with their environment, temporal logics are better suited than dynamic logics for our setting.

\emph{Symbolic automata}~(see \eg\cite{DAntoniV17,DAntoniV21}) and \emph{register automata}~\cite{RegisterAutomata} are extensions of finite automata that are capable of handling large or infinite alphabets. 
Register automata have additionally been considered over infinite words in some works (see \eg\cite{DemriL06,KhalimovMB18,ExibardFR21}).
Similar to BSAs, transitions of symbolic automata are labeled with predicates over a domain of alphabet symbols.
Register automata are equipped with a finite amount of registers that, similar to cells in BSAs, can store input values.
\emph{Symbolic register automata}~(SRAs)~\cite{DAntoniFS019} combine the features of both automata models.
BSAs have the additional ability to modify the stored values and thus to perform actual computations on them. Moreover, they read infinite instead of finite words.
Thus, SRAs can be seen as a special case of BSAs.

More recently, the verification of uninterpreted programs has been investigated~\cite{MathurMV19}. Uninterpreted programs are similar to \while-programs with equality and uninterpreted functions and predicates. They are annotated with assumptions. The verification of uninterpreted programs is undecidable in general; for the subclass of coherent uninterpreted programs, however, it is decidable~\cite{MathurMV19}.
The verification problem has been extended with theories, \ie, with axioms over the functions and predicates~\cite{MathurM020}. Adding axioms to coherent uninterpreted programs preserves decidability for some axioms, \eg, idempotence, while it yields undecidability for others, \eg, associativity. The synthesis problem for uninterpreted programs is undecidable in general, but decidable for coherent ones~\cite{Krogmeier0MM020}.

\section{Conclusion}

We have extended Temporal Stream Logic (TSL) with first-order theories and formalized the satisfiability and validity of a TSL formula in a theory.
While we show that TSL satisfiability is neither semi-decidable nor co-semi-decidable in the theory of uninterpreted functions~$\tu$, the theory of equality~$\te$, and Presburger arithmetic~$\tp$, we identify three fragments for which satisfiability in $\tu$ is (semi-)decidable: For reachability formulas as well as for formulas with a single cell, TSL satisfiability in $\tu$ is semi-decidable. For slightly more restricted reachability formulas, it is decidable.
Moreover, we have presented an algorithm for checking the satisfiability of a TSL formula in the theory of uninterpreted functions that is based on Büchi stream automata, an automaton representation of TSL formulas introduced in this paper.
Satisfiability checking has various applications in the specification and analysis of reactive systems such as identifying inconsistent requirements during the design process. We have implemented the algorithm and evaluated it on three different benchmark series, including consistency checks and assumption validations: 
The algorithm~terminates on many randomly generated formulas within one second and scales particularly well for satisfiable formulas. Moreover, it is able to prove or disprove consistency of realistic benchmarks and to validate or invalidate their assumptions.

\bibliographystyle{splncs04}
\bibliography{bib}


\appendix
\section{Preliminaries: First-Order Logic}\label{app:fol}

In our context, first-order logic (FOL) consists of classical logical operators, existential and universal quantification of variables, and predicate terms. 
This allows, on the one hand, to express statements using functions and predicates and, on the other hand, to make statements on functions and predicates. 

Let $\Sigma_F$, $\Sigma_P$, and $V$ be sets of function symbols, and predicate symbols, and variables, respectively. Then, FOL is defined by 
\[ \normalfont{FOL} \ni \varphi, \psi ::= \top \mid \neg \varphi \mid \varphi \land \psi \mid \exists x.\psi \mid \forall x.\psi \mid \predTerm \] 
where $x \in V$. A formula $\varphi$ built according to this grammar is called \emph{first-order logic formula over $\Sigma_F$, $\Sigma_P$, and $V$}. Variables that are not bound by a quantifier are called \emph{free} and the free variables of a FOL formula $\varphi$ are denoted by $\free{\varphi}$. A FOL formula $\varphi$ is \emph{closed} if $\free{\varphi} = \emptyset$. Given a domain $\mathcal{V}$ and an assignment function $\assign : V \cup \Sigma_F \to \mathcal{V} \cup \mathcal{F}$, entailment of a FOL formula is defined by:
\begin{alignat*}{3}
    &\mathcal{V},\assign \vdash \top ~ & & \\
    &\mathcal{V},\assign \vdash \neg \varphi ~ &:\Leftrightarrow & ~ \mathcal{V}, \assign \not\vdash \varphi\\
	&\mathcal{V},\assign \vdash \varphi \land \psi ~ &:\Leftrightarrow & ~ \mathcal{V}, \assign \vdash \varphi ~ \land ~ \mathcal{V}, \assign \vdash \psi\\
	&\mathcal{V},\assign \vdash \exists x. \varphi ~ &:\Leftrightarrow & ~ \exists v \in X.~ \mathcal{V}, \assign[x:=v] \vdash \varphi\\
	&\mathcal{V},\assign \vdash \forall x. \varphi ~ &:\Leftrightarrow & ~ \forall v \in X.~ \mathcal{V}, \assign[x:=v] \vdash \varphi\\
	&\mathcal{V},\assign \vdash \predTerm ~ &:\Leftrightarrow & ~ \recursiveAssign{\predTerm} = \true,
\end{alignat*}
where $\assign[x:=v] := z \mapsto \begin{cases} v &\text{ if } z = x \\ \langle z \rangle &\text{ otherwise} \end{cases}$ which corresponds to overwriting the value of a variable.


\section{TSL modulo $\tu$ Satisfiability Checking}\label{app:partial_decider}


\subsection{Translating a TSL formula into a BSA (Proof of \Cref{thm:tsl_to_bsa})}

In this section, we present a translation from TSL formulas to Büchi stream automata that ensures that (finitary) satisfiability of a TSL formula $\varphi$ in a theory~$T$ directly corresponds to the emptiness of the language of the corresponding (finitary) Büchi stream automaton $\mathcal{B}_\varphi$ with respect to $T$. That is, we prove \Cref{thm:tsl_to_bsa} in the following.

First, we present a construction of the corresponding BSA $\mathcal{B}_\varphi$ from a TSL formula $\varphi$ such that $\mathcal{B}_\varphi$ accepts exactly those executions that satisfy~$\varphi$:

\begin{lemma}\label{lem:tsl_to_bsa_non-fini}
	Given a TSL formula $\varphi$, there exists an equivalent BSA~$\mathcal{B}_\varphi$ such that for all executions $\symbolicExecution$, we have $\symbolicExecution \in \mathcal{L}(\mathcal{B}_\varphi)$ if, and only if, $\computation \TSLSatisfaction{\mathcal{V}}{\assign} \varphi$.
\end{lemma}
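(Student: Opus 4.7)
\begin{proofsketch}
The plan is to follow the three-step pipeline described before the lemma: first approximate $\varphi$ by an LTL formula, then translate that to a nondeterministic Büchi automaton, and finally reinterpret the atomic propositions to obtain a BSA. Let $P$ be the (finite) set of predicate terms and $U$ the set of updates $\update{\cell{c}}{\funcTerm}$ syntactically occurring in $\varphi$. For every $\predTerm\in P$ introduce an atomic proposition $a_{\predTerm}$, and for every $u\in U$ an atomic proposition $a_u$. Obtain $\varphi_\mathit{LTL}$ from $\varphi$ by replacing each $\predTerm$ and each $\update{\cell{c}}{\funcTerm}$ by its atomic proposition, and conjoining with the ``well-formedness'' constraint $\Globally\bigwedge_{\cell{c}\in\cells}\bigl(\bigvee_{u\in U,\,u=\update{\cell{c}}{\cdot}} a_u\bigr)$ together with at-most-one clauses for each cell, relaxed so that if no $a_u$ for cell $\cell{c}$ fires then the transition is allowed to produce the fresh symbol $\fresh$ for $\cell{c}$.

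Convert $\varphi_\mathit{LTL}$ to an equivalent NBA $\mathcal{A}_{\varphi_\mathit{LTL}}$ via the standard construction~\cite{VardiW94}. Turn $\mathcal{A}_{\varphi_\mathit{LTL}}$ into a BSA $\mathcal{B}_\varphi$ by keeping the same state space, initial states and Büchi condition, setting $\guards := P$ and $\updateTerms := \{\funcTerm \mid \update{\cell{c}}{\funcTerm}\in U\}\cup\{\fresh\}$, and rewriting every transition label $\ell$ (a Boolean formula over the atomic propositions) as a pair $(g,u)$: the guard $g\in(\guards\to\mathbb{B})$ records, for each $\predTerm\in P$, the truth value of $a_{\predTerm}$ under the satisfying assignment of $\ell$, and $u\in(\cells\to\updateTerms)$ sets $u(\cell{c})=\funcTerm$ if $a_{\update{\cell{c}}{\funcTerm}}$ is true under $\ell$ and $u(\cell{c})=\fresh$ otherwise. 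Since the Boolean labels factor into a guard part and an update part by construction, one transition of $\mathcal{A}_{\varphi_\mathit{LTL}}$ yields one transition of $\mathcal{B}_\varphi$ per consistent Boolean assignment.

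For the equivalence, I would show both inclusions by exhibiting a bijection between accepting runs of $\mathcal{A}_{\varphi_\mathit{LTL}}$ on a certain trace of atomic propositions and accepting runs of $\mathcal{B}_\varphi$ consistent with a given execution. From left to right, given $\computation\TSLSatisfaction{\mathcal{V}}{\assign}\varphi$, define $w\in(2^{AP})^{\omega}$ by $a_{\predTerm}\in \atPos{w}{t}$ iff $\recursiveAssign{\etaFunction{\computation}{t}{\predTerm}}=\true$, and $a_u\in\atPos{w}{t}$ iff $\atPos{\computation}{t}(\cell{c})\equiv\funcTerm$ for $u=\update{\cell{c}}{\funcTerm}$; the TSL semantics and the at-most-one clauses give $w\models\varphi_\mathit{LTL}$, and the accepting NBA run on $w$ lifts to an accepting BSA run whose updates match $\computation$ (using $\fresh$ on cells whose current assignment lies outside $U$), while the guard-induced constraint trace is consistent with $\assign$ by the very definition of $w$. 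From right to left, any accepting run $c$ of $\mathcal{B}_\varphi$ together with an execution $(\computation,\mathcal{V},\assign)$ for $c$ induces a word $w$ of atomic propositions satisfied on $(\computation,\mathcal{V},\assign)$; projecting $c$ back to $\mathcal{A}_{\varphi_\mathit{LTL}}$ yields $w\models\varphi_\mathit{LTL}$, and executing the back-substitution of atomic propositions by predicates/updates together with the constraint trace condition $\recursiveAssign{\predTerm}=b$ gives $\computation\TSLSatisfaction{\mathcal{V}}{\assign}\varphi$ by induction on $\varphi$.

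The main obstacle I anticipate is the mismatch between the TSL semantics of predicates, which evaluates $p(\funcTerm_0,\dots,\funcTerm_m)$ after recursive unfolding through $\eta$, and the BSA semantics, which only records the \emph{syntactic} guard $\predTerm$ at each step while the constraint trace stores $\etaFunction{\computation}{t}{\predTerm}$. The inductive step for predicate subformulas therefore has to show that, for every time $t$, the truth value stored at position $t$ of the constraint trace indeed coincides with $\recursiveAssign{\etaFunction{\computation}{t}{\predTerm}}$; this follows from the definition of an execution for $c$ once one checks that the updates in $c$ and $\computation$ agree wherever $c$ does not emit $\fresh$, but has to be threaded carefully through the construction to handle non-finitary computations uniformly via the fresh-symbol convention.
\end{proofsketch}
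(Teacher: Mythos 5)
Your proposal follows essentially the same three-step construction as the paper (LTL approximation via atomic propositions for predicates and updates, translation to an NBA, back-translation to a BSA that keeps the automaton structure), and your correctness sketch --- in particular the observation that the constraint-trace condition is what bridges the syntactic guards with the $\eta$-unfolded predicate evaluation --- is actually more explicit than the paper's one-sentence justification. The only cosmetic difference is that the paper introduces an explicit atomic proposition $\update{\cell{c}}{\fresh}$ for each cell and then demands \emph{exactly} one update proposition per cell per step, whereas you encode the fresh case implicitly as the absence of any update proposition; as literally stated your at-least-one clause contradicts that relaxation and should be rephrased (e.g.\ by adding the fresh update to the disjunction, as the paper does), but this does not affect the substance of the argument.
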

\begin{proof}
Let $\varphi$ be a TSL formula over $\Sigma_F$, $\Sigma_P$, and $\cells$. We construct a Büchi Stream automaton~$\mathcal{B}_\varphi$ in three steps: First, we approximate $\varphi$ by an LTL formula $\varphi_{LTL}$. Second, we construct a nondeterministic Büchi automaton $\mathcal{A}_{\varphi_{LTL}}$ with $\mathcal{L}(\varphi_{LTL}) = \mathcal{L}(\mathcal{A}_{\varphi_{LTL}})$. Third, we construct the desired BSA $\mathcal{B}_\varphi$ from $\mathcal{A}$.

\begin{itemize}
	\item[(1)] LTL~\cite{Pnueli77} has the same boolean and temporal operators as LTL. Yet, it is defined over boolean atomic propositions instead of predicates and updates. In $\varphi_{LTL}$, we represent predicates and updates by atomic propositions: We fix the set of atomic propositions for $\varphi_{LTL}$ to be \[AP := \allUpdates{\varphi} \cup \{ \update{\cell{c}}{\fresh} \mid \cell{c} \in \cells \} \cup \allPredicates{\varphi}, \] where $\fresh \not\in \Sigma_F \cup \Sigma_P \cup \cells$. Let $\varphi'_{LTL}$ be a copy of $\varphi$, where the predicates and updates are interpreted as atomic propositions. It remains to ensure that exactly one update is performed for every cell in every time step. Hence, we define $\varphi_{LTL} := \varphi'_{LTL} \land \varphi_\mathit{least} \land \varphi_\mathit{most}$, where \[ \varphi_\mathit{least} := \Globally \bigwedge_{\cell{c} \in \cells} \left( \bigvee_{\update{\cell{c}}{t_f} \in AP} \update{\cell{c}}{t_f} \right) \] \[ \varphi_\mathit{most} :=  \Globally \bigwedge_{\substack{\update{\cell{c}}{t_f},\update{\cell{c}}{t_g} \in AP, \\ t_f \neq t_g}} \neg \left( \update{\cell{c}}{t_f} \land \update{\cell{c}}{t_g} \right) \] Note that a similar approximation is described in \cite{FinkbeinerKPS19}.
	\item[(2)] For $\varphi_{LTL} = \varphi'_{LTL} \land \varphi_\mathit{least} \land \varphi_\mathit{most}$, we construct an equivalent nondeterministic Büchi automaton $\mathcal{A}_{\varphi_{LTL}} = (Q,Q_0,\delta,F)$ with $\mathcal{L}(\varphi_{LTL}) = \mathcal{L}(\mathcal{A}_{\varphi_{LTL}})$~\cite{VardiW94}.
	\item[(3)] From $\mathcal{A}_{\varphi_{LTL}}$, we construct the desired BSA $\mathcal{B}_\varphi = (Q^\mathcal{B},Q^\mathcal{B}_0,F^\mathcal{B},\fresh,\guards,\updateTerms,\delta^\mathcal{B})$ as follows: We keep the structure of $\mathcal{A}_{\varphi_{LTL}}$, \ie, we define $Q^\mathcal{B} := Q$, $Q^\mathcal{B}_0 := Q_0$, and $F^\mathcal{B} := F$. The guards of $\mathcal{B}_\varphi$ are given by the predicates occurring in~$\varphi$, \ie, we have $\guards := \allPredicates{\varphi}$, and the update terms are given by the updates occurring in $\varphi$ as well as the updates of all cells with $\fresh$, \ie, we have $\updateTerms := \allUpdates{\varphi} \cup \{ \update{\cell{c}}{\fresh} \mid \cell{c} \in \cells \}$. The transition relation $\delta^\mathcal{B}$ accounts for having predicates and updates instead of atomic propositions in a BSA: We define $(q,P,U,q') \in \delta^\mathcal{B}$ if, and only if, $(q,\operatorname{ap}(P,U),q')\in\delta$, where \[\operatorname{ap}(P,U) := \{ p \in \allPredicates{\varphi} \mid P(p) = \true \} \cup \{ \update{\cell{c}}{U(\cell{c})} \mid \cell{c} \in \cells \}.\] That is, a predicate on a transition holds if, and only if, the respective atomic proposition is true. An update is executed if, and only if, the atomic proposition encoding it is true. Note that the transition structure of $\mathcal{B}_\varphi$ is a substructure of the one of $\mathcal{A}_{\varphi_{LTL}}$. Hence, a transition in $\mathcal{A}_{\varphi_{LTL}}$ does not necessarily have a counterpart in $\mathcal{B}_\varphi$. In fact, a transition of $\mathcal{A}_{\varphi_{LTL}}$ has no counterpart in $\mathcal{B}_\varphi$ if it allows for more than one update or no update at all for one cell. However, since $\varphi_\mathit{least}$ and $\varphi_\mathit{most}$ enforce exactly one update for each cell in every time step, such a transition inevitably leads to a rejecting run. More precisely: Every run containing such a transition is rejecting. Therefore, such transitions can be omitted in $\mathcal{B}_\varphi$.
\end{itemize}

The correctness of this construction directly follows from the correctness of the LTL to NBA construction since LTL and NBAs have the same control structure as TSL and BSAs. Furthermore, the construction of $\varphi_{LTL}$ ensures that the data manipulation properties, \ie, exactly one update for each cell at every time step, are transferred correctly.
\end{proof}

Second, we extend this result to finitary BSA and finitary satisfiability:

\begin{lemma}\label{lem:tsl_to_bsa_fini}
	Given a TSL formula $\varphi$, there exists an equivalent finitary BSA~$\mathcal{B}_\varphi$ such that for all executions $\symbolicExecution$, we have $\symbolicExecution \in \mathcal{L}(\mathcal{B}_\varphi)$ if, and only if, both $\computation \TSLSatisfaction{\mathcal{V}}{\assign} \varphi$ and $\finitary{\varphi}{ctr}$ hold.
\end{lemma}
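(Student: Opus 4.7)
The plan is to adapt the three-step construction from the proof of \Cref{lem:tsl_to_bsa_non-fini}, replacing the role of the fresh symbol $\fresh$ by the syntactic self-updates $\update{\cell{c}}{\cell{c}}$. Recall that a computation $\computation$ is finitary with respect to $\varphi$ if, at every time step and for every cell $\cell{c}$, the update performed on $\cell{c}$ is either a self-update $\update{\cell{c}}{\cell{c}}$ or an update $\update{\cell{c}}{\funcTerm}$ occurring syntactically in $\varphi$. This set of allowed updates is already finite, so a BSA that reads exactly such executions can avoid $\fresh$ altogether.

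First, I would modify Step (1) of the previous construction by fixing the set of atomic propositions as
\[
AP_{\mathit{fin}} := \allUpdates{\varphi} \cup \{ \update{\cell{c}}{\cell{c}} \mid \cell{c} \in \cells \} \cup \allPredicates{\varphi}.
\]
The LTL approximation $\varphi_{LTL}^{\mathit{fin}} := \varphi'_{LTL} \land \varphi_\mathit{least} \land \varphi_\mathit{most}$ is defined exactly as before, with the two conjuncts $\varphi_\mathit{least}$ and $\varphi_\mathit{most}$ now quantifying over the finite set $AP_{\mathit{fin}}$ so that in every step exactly one update per cell — either a self-update or an update from $\varphi$ — is performed. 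Step (2) is identical: translate $\varphi_{LTL}^{\mathit{fin}}$ to an equivalent NBA $\mathcal{A}_{\varphi_{LTL}^{\mathit{fin}}}$. In Step (3), I would construct the BSA $\mathcal{B}_\varphi$ by setting the update terms to
\[
\updateTerms := \allUpdates{\varphi} \cup \{ \update{\cell{c}}{\cell{c}} \mid \cell{c} \in \cells \},
\]
while keeping the guards $\guards := \allPredicates{\varphi}$, and lifting transitions from $\mathcal{A}_{\varphi_{LTL}^{\mathit{fin}}}$ to $\mathcal{B}_\varphi$ via the same map $\mathrm{ap}$ as in the previous lemma. Because $\fresh \notin \updateTerms$, the resulting BSA is finitary by definition.

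For correctness, I would argue both inclusions. For the ``only if'' direction, suppose $\symbolicExecution \in \mathcal{L}(\mathcal{B}_\varphi)$. Then every transition assigns to each cell $\cell{c}$ a term from $\updateTerms$, which is either a self-update or an update from $\varphi$; hence $\finitary{\varphi}{\computation}$ holds. Satisfaction $\computation \TSLSatisfaction{\mathcal{V}}{\assign} \varphi$ then follows by precisely the same argument as in \Cref{lem:tsl_to_bsa_non-fini}, because the $\varphi'_{LTL}$-subformula is validated by the accepting run and the predicate/update semantics are handled identically. For the ``if'' direction, suppose $\computation \TSLSatisfaction{\mathcal{V}}{\assign} \varphi$ and $\finitary{\varphi}{\computation}$. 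Finitariness guarantees that at every step the updates of $\computation$ lie in $\updateTerms$, so the sequence of updates and predicate valuations induces a word over $AP_{\mathit{fin}}$ satisfying $\varphi_{LTL}^{\mathit{fin}}$; the corresponding accepting run of $\mathcal{A}_{\varphi_{LTL}^{\mathit{fin}}}$ lifts to an accepting run of $\mathcal{B}_\varphi$ witnessing $\symbolicExecution \in \mathcal{L}(\mathcal{B}_\varphi)$.

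The only subtlety — and the part I would check most carefully — is to verify that restricting the atomic propositions from the universal set of updates to $AP_{\mathit{fin}}$ really does preserve equivalence of $\varphi_{LTL}^{\mathit{fin}}$ and $\varphi$ on finitary executions. The key observation here is that $\varphi'_{LTL}$ depends only on the atomic propositions corresponding to updates and predicates that already appear in $\varphi$, all of which remain in $AP_{\mathit{fin}}$; the self-update atoms are used solely by $\varphi_\mathit{least}$ to record ``no effective change'' in a syntactically available way. Once this is verified, the rest of the argument reduces to reusing the proof of \Cref{lem:tsl_to_bsa_non-fini} verbatim.
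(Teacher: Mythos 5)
Your proposal is correct and follows essentially the same route as the paper: the paper's own proof also reuses the three-step construction of \Cref{lem:tsl_to_bsa_non-fini} verbatim, changing only the set of atomic propositions to $\allUpdates{\varphi} \cup \{ \update{\cell{c}}{\cell{c}} \mid \cell{c} \in \cells \} \cup \allPredicates{\varphi}$ so that $\fresh$ never appears in $\updateTerms$ and only finitary executions are accepted. Your additional discussion of the two inclusions and of why restricting $AP$ is harmless is more explicit than the paper's one-paragraph argument but does not diverge from it.
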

\begin{proof}
Let $\varphi$ be a TSL formula over $\Sigma_F$, $\Sigma_P$, and $\cells$. We construct a Büchi Stream automaton $\mathcal{B}_\varphi$ similar to the non-finitary case, \ie, as in the proof of \Cref{lem:tsl_to_bsa_non-fini}. The only difference is that we define the set of atomic propositions for $\varphi_{LTL}$ to be \[AP := \allUpdates{\varphi} \cup \{ \update{\cell{c}}{\cell{c}} \mid \cell{c} \in \cells \} \cup \allPredicates{\varphi}\] to only allow updates occurring in $\varphi$ and self updates. By construction of $\mathcal{B}_\varphi$, an update with $\fresh$ does not appear in any transition. Thus, by definition, $\mathcal{B}_\varphi$ is finitary. Furthermore, by construction of $AP$, transitions of $\mathcal{B}_\varphi$ may only contain updates occurring in $\varphi$ or self updates. Hence, $\mathcal{B}_\varphi$ only accepts finitary executions.
\end{proof}

With these automata constructions, we can now show the claim of \Cref{thm:tsl_to_bsa}: For every TSL formula $\varphi$, there is a (finitary) BSA $\mathcal{B}_\varphi$ such that the (finitary) satisfiability of $\varphi$ in a theory $T$ can be solved by checking the emptiness of the language of $\mathcal{B}_\varphi$ with respect to $T$.

Let $\varphi$ be a TSL formula and let $T$ be a theory. As in the proofs of \Cref{lem:tsl_to_bsa_non-fini,lem:tsl_to_bsa_fini}, we can construct a (finitary) BSA $\mathcal{B}_\varphi$ such that $\mathcal{B}_\varphi$ accepts exactly those (finitary) executions that satisfy $\varphi$. By definition of the language of a BSA with respect to $T$, we have $\theoryLanguage{T}{\mathcal{B}_\varphi} \neq \emptyset$ if, and only if, there exists an execution $\symbolicExecution \in \theoryLanguage{\varphi}{\mathcal{B}}$ such that $(\mathcal{V},\assign) \in \allModels{T}$. By \Cref{lem:tsl_to_bsa_non-fini}, we have $\computation \TSLSatisfaction{\mathcal{V}}{\assign} \varphi$ since $\symbolicExecution \in \mathcal{L}(\mathcal{B}_\varphi)$. \Cref{lem:tsl_to_bsa_fini} adds that $\computation$ is finitary. Thus, by definition of (finitary) TSL satisfiability, $\varphi$ is (finitary) satisfiable in~$T$.


\subsection{Reduction to Finitary Satisfiability (Proof of \Cref{lem:reduction_to_finitary})}

Let $\varphi$ be a TSL formula over $\Sigma_F$, $\Sigma_P$, and $\cells$.
We define the desired TSL formula~$\varphi_\mathit{fin}$ as follows: 
\begin{align*}
    \varphi_\mathit{fin} := 
    & \Globally \; \update{\cell{n}}{\mathit{new}(\cell{n})} \; \land \\
    & \Globally \bigwedge_{\cell{c}\in\allCells{\varphi}} \left(\left( \bigvee_{\update{\cell{c}}{t_f}  \in \allUpdates{\varphi}} \update{\cell{c}}{t_f} \right) \lor \update{\cell{c}}{\mathit{pick}_\cell{c}(\cell{n})} \right)
\end{align*}
where $\allCells{\varphi}$ and $\allUpdates{\varphi}$ denote the sets of all cells and all updates occurring in $\varphi$, respectively.
Note that, in contrast to $\varphi$, $\varphi_\mathit{fin}$ is a TSL formula over function symbols $\Sigma_F \cup \{ \mathit{new} \} \cup \{ \mathit{pick}_\cell{c} : \cell{c} \in \cells \}$, where $\mathit{new}, \mathit{pick}_\cell{c} \not\in \Sigma_F$ for any $\cell{c} \in \cells$, predicate symbols $\Sigma_P$, and cells $\cells \cup \{\cell{n}\}$, where $\cell{n} \not\in \cells$. Intuitively, $\mathit{pick}_\cell{c}$ represents the ``unknown update'' depending on the time step, represented by $n$.

Now, we show that $\varphi_\mathit{fin}$ satisfies the required property, \ie, that $\varphi$ is satisfiable in $\tu$ if, and only if, $\varphi \land \varphi_\mathit{fin}$ is finitary satisfiable in~$\tu$.

First, consider the right-to-left direction. That is, let $\varphi \land \varphi_\mathit{fin}$ be finitary satisfiable in~$\tu$. Then, by definition, there exists an execution $(\computation,\mathcal{V},\assign)$ such that $\computation \TSLSatisfaction{\mathcal{V}}{\assign} \varphi \land \varphi_\mathit{fin}$ holds. Hence, $\computation \TSLSatisfaction{\mathcal{V}}{\assign} \varphi$ follows with the semantics of conjunction. Thus, by \Cref{lem:canonical_model_tu}, $\varphi$ is satisfiable in $\tu$.

Next, consider the left-to-right direction. That is, let $\varphi$ be satisfiable in $\tu$. Without loss of generality, let $\cells = \allCells{\varphi}$\footnote{This can easily be generalized to the case where $\cells \neq \allCells{\varphi}$ by adding self-updates for all cells $\cell{c} \in \cells \setminus \allCells{\varphi}$ to $\varphi$ since $\varphi$ does not restrict them.}. 
Since $\varphi$ is satisfiable in $\tu$ by assumption, there exists an execution $(\computation,\mathcal{V},\assign)$ such that $\computation \TSLSatisfaction{\mathcal{V}}{\assign} \varphi$ holds.
We construct an execution $(\computation_\mathit{fin}, \mathcal{V}_\mathit{fin}, \assign_\mathit{fin})$ for $\varphi_\mathit{fin}$ in the following:
Let $\computation_\mathit{fin}$ be a computation defined by \[ \atPos{\computation_\mathit{fin}}{t}(\cell{c}) := \begin{cases}
    \mathit{new}(\cell{n}) & \text{if } \cell{c} = \cell{n} \\
		\atPos{\computation}{t}(\cell{c}) & \text{if } \cell{c} \neq \cell{n} \text{ and } \update{\cell{c}}{\atPos{\computation}{t}(\cell{c})} \in \allUpdates{\varphi} \cup \update{\cell{c}}{\cell{c}} \\
        \mathit{pick}_\cell{c}(\cell{n}) & \text{otherwise}
	\end{cases} \]
This computation $\computation_\mathit{fin}$ is a copy of $\computation$ that replaces every non-finitary part of $\computation$ by a new term as described by $\varphi_\mathit{fin}$. Next, let $\mathcal{V}_\mathit{fin} := \mathcal{V} \cup \{ \mathit{new}^m(\cell{n}) : m \in \mathbb{N}_0 \}$. Last, we define an assignment function $\assign_\mathit{fin}$ that behaves as $\assign$. 
\[\langle f \rangle_\mathit{fin} := 
    \begin{cases}
        \lambda v.
        \begin{cases}        
            \recursiveAssign{\etaFunction{\computation}{m+1}{\cell{c}}}   
                & \text{if } v = \mathit{new}^m(\cell{n}) \\
            v   & \text{otherwise}
        \end{cases}
                          & \text{if } f = \mathit{pick}_\cell{c} \\
        \lambda v.\mathit{new}(v)   & \text{if } f = \mathit{new} \\
        \cell{n}                    & \text{if } f = \cell{n} \\
        \langle f \rangle           & \text{otherwise} \\
    \end{cases}
\]

We first show that both $\assign_\mathit{fin}$ and $\assign$ indeed behave similar:
\begin{lemma}\label{lem:correct_construction_fin}
	For all predicate terms~$t_p\in\predTerms$ occurring in $\varphi$ and all points in time~$t\in\mathbb{N}_0$, $\recursiveAssignFin{\etaFunction{\computation_\mathit{fin}}{t}{t_p}} = \recursiveAssign{\etaFunction{\computation}{t}{t_p}}$ holds.
\end{lemma}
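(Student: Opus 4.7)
\begin{proofsketch}
The plan is to prove a stronger statement by simultaneous induction, namely that for every function term $\tau_f \in \funcTerms$ over $\Sigma_F$, $\Sigma_P$, and $\cells$ occurring in $\varphi$ (not just predicate terms), and every point in time $t \in \mathbb{N}_0$, the equality $\recursiveAssignFin{\etaFunction{\computation_\mathit{fin}}{t}{\tau_f}} = \recursiveAssign{\etaFunction{\computation}{t}{\tau_f}}$ holds. The original claim for predicate terms then follows as an immediate instance, since predicate terms are just function terms whose outermost symbol is a predicate symbol in $\Sigma_P \subseteq \Sigma_F$, and the assignment functions $\assign$ and $\assign_\mathit{fin}$ agree on all symbols from $\Sigma_F \cup \Sigma_P$.

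First, I would prove an auxiliary fact: for every $t \in \mathbb{N}_0$, $\etaFunction{\computation_\mathit{fin}}{t}{\cell{n}} = \mathit{new}^t(\cell{n})$ and $\recursiveAssignFin{\mathit{new}^t(\cell{n})} = \mathit{new}^t(\cell{n})$. The first equality follows by induction on $t$ using the definition of $\eta$ together with $\atPos{\computation_\mathit{fin}}{t}(\cell{n}) = \mathit{new}(\cell{n})$. The second equality follows because $\langle \mathit{new} \rangle_\mathit{fin} = \lambda v.\mathit{new}(v)$ and $\langle \cell{n} \rangle_\mathit{fin} = \cell{n}$, so the recursive evaluation just rebuilds the syntactic term. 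In particular, the arguments handed to $\langle \mathit{pick}_\cell{c} \rangle_\mathit{fin}$ at distinct time steps are pairwise distinct, which is what makes $\langle \mathit{pick}_\cell{c} \rangle_\mathit{fin}$ a well-defined function.

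Next, I would proceed by induction on the pair $(t, \tau_f)$ with lexicographic ordering that is well-founded on the structural depth of $\tau_f$ at each fixed $t$, together with a nested induction on $t$ when we unfold a cell. The structural step on a composite term $f(\tau_0,\ldots,\tau_m)$ is immediate: $\eta$ commutes with function application and $\assign_\mathit{fin}$ agrees with $\assign$ on $f \in \Sigma_F$, so the induction hypothesis on each $\tau_i$ closes the case. The base case is a cell $\cell{c} \in \cells$. For $t=0$, both evaluations reduce to $\langle \cell{c} \rangle = \langle \cell{c} \rangle_\mathit{fin}$. For $t>0$, we case split on the construction of $\computation_\mathit{fin}$:
\begin{itemize}
\item If $\update{\cell{c}}{\atPos{\computation}{t-1}(\cell{c})}$ is in $\allUpdates{\varphi}$ or is a self-update, then $\atPos{\computation_\mathit{fin}}{t-1}(\cell{c}) = \atPos{\computation}{t-1}(\cell{c})$ and the claim follows from the induction hypothesis applied at time $t-1$ to the function term $\atPos{\computation}{t-1}(\cell{c})$.
\item Otherwise $\atPos{\computation_\mathit{fin}}{t-1}(\cell{c}) = \mathit{pick}_\cell{c}(\cell{n})$, so $\etaFunction{\computation_\mathit{fin}}{t}{\cell{c}} = \mathit{pick}_\cell{c}(\etaFunction{\computation_\mathit{fin}}{t-1}{\cell{n}}) = \mathit{pick}_\cell{c}(\mathit{new}^{t-1}(\cell{n}))$ by the auxiliary fact. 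Applying $\recursiveAssignFin{\cdot}$ and unfolding the definition of $\langle \mathit{pick}_\cell{c} \rangle_\mathit{fin}$ yields exactly $\recursiveAssign{\etaFunction{\computation}{t}{\cell{c}}}$, which is the right-hand side.
\end{itemize}

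The main obstacle is the second sub-case, which is where the construction of $\assign_\mathit{fin}$ was tailored. Getting the indices right (so that $\mathit{new}^{t-1}(\cell{n})$ maps back to the evaluation at time $t$ under $\computation$) is the delicate part, but once the auxiliary fact is isolated the bookkeeping is routine. All other cases are essentially by unfolding of definitions and direct appeal to the induction hypothesis.
\end{proofsketch}
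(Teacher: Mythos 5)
Your proposal is correct and follows essentially the same route as the paper's proof: the paper likewise reduces the claim to the cell case (via the observation that the two assignments agree on all terms not containing $\mathit{new}$, $\mathit{pick}_\cell{c}$, or $\cell{n}$), inducts on $t$, and splits into the same two sub-cases, closing the $\mathit{pick}_\cell{c}$ case with the fact that $\etaFunction{\computation_\mathit{fin}}{t-1}{\cell{n}} = \mathit{new}^{t-1}(\cell{n})$ and the definition of $\langle \mathit{pick}_\cell{c} \rangle_\mathit{fin}$. Your explicit generalization to all function terms and the remark on well-definedness of $\langle \mathit{pick}_\cell{c} \rangle_\mathit{fin}$ are only presentational refinements of the same argument.
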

\begin{proof}
    First observe that for all function terms $t_f \in \funcTerms$ that do not contain $\mathit{new}$, $\mathit{pick}_\cell{c}$, or~$\cell{n}$, $\recursiveAssignFin{t_f} = \recursiveAssign{t_f}$ holds (Observation X). 
    Since $\chi$ only recursively evaluates terms, we obtain that if each recursive call is equal, then the results of $\assign_\mathit{fin}$ and $\assign$ are also equal. 
    In addition, $t_p$ does not contain $\mathit{new}$, $\mathit{pick}_\cell{c}$, or $\cell{n}$.
    Hence, it suffices to show that \[\recursiveAssignFin{\etaFunction{\computation_\mathit{fin}}{t}{\cell{c}}} = \recursiveAssign{\etaFunction{\computation}{t}{\cell{c}}}\] holds for $\cell{c} \in \cells$. 
    We prove this by induction on $t$:
    For $t = 0$, clearly \[\etaFunction{\computation}{t}{\cell{c}} = \cell{c} = \etaFunction{\computation_\mathit{fin}}{t}{\cell{c}}\] holds.
    Now let $t > 0$.  We distinguish two cases.
    \begin{itemize}
    \item  
    If $\atPos{\computation_\mathit{fin}}{t-1}(\cell{c}) = \atPos{\computation}{t-1}(\cell{c})$ holds, then
    \begin{align*}
         & \recursiveAssignFin{\etaFunction{\computation_\mathit{fin}}{t}{\cell{c}}} \\
        =& ~\recursiveAssignFin{\etaFunction{\computation_\mathit{fin}}{t-1}{\atPos{\computation_\mathit{fin}}{t-1}(\cell{c})}} & \text { Definition} \\
        =& ~\recursiveAssignFin{\etaFunction{\computation_\mathit{fin}}{t-1}{\atPos{\computation}{t-1}(\cell{c})}} & \text{ Assumption}\\
        =& ~\recursiveAssign{\etaFunction{\computation}{t-1}{\atPos{\computation}{t-1}(\cell{c})}} & \text{ Observation (X) and IH}\\
        =& ~\recursiveAssign{\etaFunction{\computation}{t}{\cell{c}}} & \text{ Definition} 
    \end{align*}
    \item
        If $\atPos{\computation_\mathit{fin}}{t-1}(\cell{c}) \neq \atPos{\computation}{t-1}(\cell{c})$ holds, then $\atPos{\computation_\mathit{fin}}{t-1}(\cell{c}) = \mathit{pick}_\cell{c}(\cell{n})$ holds as well by construction.
        Hence, $\recursiveAssignFin{\etaFunction{\computation_\mathit{fin}}{t}{\cell{c}}} = \langle \mathit{pick}_\cell{c} \rangle (\etaFunction{\computation_\mathit{fin}}{t-1}{\cell{n}})$. Since, by construction, $\etaFunction{\computation_\mathit{fin}}{t-1}{\cell{n}} = \mathit{new}^{t-1}(\cell{n})$ holds, we have 
        \[
            \langle \mathit{pick}_\cell{c} \rangle (\etaFunction{\computation_\mathit{fin}}{t-1}{\cell{n}})
            = \langle \mathit{pick}_\cell{c} \rangle (\mathit{new}^{t-1}(\cell{n}))
            = \recursiveAssign{\etaFunction{\computation}{t}{\cell{c}}}.\]
    Hence $\recursiveAssignFin{\etaFunction{\computation_\mathit{fin}}{t}{\cell{c}}} = \recursiveAssign{\etaFunction{\computation}{t}{\cell{c}}}$ follows.
    \end{itemize}
\end{proof}

To prove that $\varphi \land \varphi_\mathit{fin}$ is finitary satisfiable in $\tu$, by \Cref{lem:canonical_model_tu}, it suffices to show that $\computation_\mathit{fin} \TSLSatisfaction{\mathcal{V}_\mathit{fin}}{\assign_\mathit{fin}} \varphi \land \varphi_\mathit{fin}$ holds and that $\computation_\mathit{fin}$ is finitary.
We start with the second claim and prove it for all three cases of the definition of $\computation_\mathit{fin}$ separately:
\begin{enumerate}
    \item 
        Let $\atPos{\computation_\mathit{fin}}{t}(\cell{c}) = \mathit{new}(\cell{n})$ for $\cell{c}=\cell{n}$. Clearly, in this case $\computation_\mathit{fin}$ is finitary since we have $\update{\cell{n}}{\mathit{new}(\cell{n})} \in \allUpdates{\varphi_\mathit{fin}}$.
    \item 
        Let $\atPos{\computation_\mathit{fin}}{t}(\cell{c}) = \atPos{\computation}{t}(\cell{c})$ for $c \in \cells$. Clearly, in this case $\computation_\mathit{fin}$ is finitary since it requires $\update{\cell{c}}{\atPos{\computation}{t}(\cell{c})} \in \allUpdates{\varphi} \cup \update{\cell{c}}{\cell{c}}$.
    \item 
        Let $\atPos{\computation_\mathit{fin}}{t}(\cell{c}) = \mathit{pick}_\cell{c}(\cell{n})$. Clearly, also in this last case $\computation_\mathit{fin}$ is finitary since we have $\update{\cell{c}}{\mathit{pick}_\cell{c}(\cell{n})} \in \allUpdates{\varphi_\mathit{fin}}$.
\end{enumerate}	

Second, we show that $\computation_\mathit{fin} \TSLSatisfaction{\mathcal{V}_\mathit{fin}}{\assign_\mathit{fin}} \varphi \land \varphi_\mathit{fin}$ holds.
By the structure of $\varphi_\mathit{fin}$, we have $\computation_\mathit{fin} \TSLSatisfaction{\mathcal{V}_\mathit{fin}}{\assign_\mathit{fin}} \varphi_\mathit{fin}$ if $\atPos{\computation_\mathit{fin}}{t}(\cell{n}) = \update{\cell{n}}{\mathit{new}(\cell{n})}$ and for $\cell{c} \in \cells$ either $\atPos{\computation_\mathit{fin}}{t}(\cell{c}) = \mathit{pick}_\cell{c}(\cell{n})$ holds or, if there is an update $\update{\cell{c}}{t_f}$ occurring in $\varphi$, $\atPos{\computation_\mathit{fin}}{t}(\cell{c}) = t_f$ holds. 
By construction of $\computation_\mathit{fin}$, this is clearly the case. 
Thus, it remains to show that $\computation_\mathit{fin} \TSLSatisfaction{\mathcal{V}_\mathit{fin}}{\assign_\mathit{fin}} \varphi$ holds. Instead of proving it directly, we show the more general statement $ \forall t.~ (\computation, t \TSLSatisfaction{\mathcal{V}}{\assign} \varphi) \rightarrow (\computation_\mathit{fin}, t \TSLSatisfaction{\mathcal{V}_\mathit{fin}}{\assign_\mathit{fin}} \varphi) $.The original claim then follows by instantiation of $t$ with zero and the fact that $\computation \TSLSatisfaction{\mathcal{V}}{\assign} \varphi$ holds.
We prove the generalized statement by induction on $\varphi$:
\begin{itemize}
	\item Let $\varphi = \true$. By the semantics of TSL, $\computation_\mathit{fin} \TSLSatisfaction{\mathcal{V}_\mathit{fin}}{\assign_\mathit{fin}} \true$ always holds.
	\item Let $\varphi = \psi_1 \land \psi_2$. Let $t \in \mathbb{N}_0$. Then, if $\computation, t \TSLSatisfaction{\mathcal{V}}{\assign} \psi_1 \land \psi_2$, then both $\computation, t \TSLSatisfaction{\mathcal{V}}{\assign} \psi_1$ and $\computation, t \TSLSatisfaction{\mathcal{V}}{\assign} \psi_2$ hold. Thus, by induction hypothesis, we have $\computation_\mathit{fin}, t \TSLSatisfaction{\mathcal{V}_\mathit{fin}}{\assign_\mathit{fin}} \psi_1$ and $\computation_\mathit{fin}, t \TSLSatisfaction{\mathcal{V}_\mathit{fin}}{\assign_\mathit{fin}} \psi_2$ and therefore $\computation_\mathit{fin}, t\TSLSatisfaction{\mathcal{V}_\mathit{fin}}{\assign_\mathit{fin}} \psi_1 \land \psi_2$ follows.
	\item Let $\varphi = \psi_1 \Until \psi_2$. Let $t \in \mathbb{N}_0$. Then, if $\computation, t \TSLSatisfaction{\mathcal{V}}{\assign} \psi_1 \Until \psi_2$, then there exists a point in time $i \geq 0$ such that both $\computation, t+i \TSLSatisfaction{\mathcal{V}}{\assign} \psi_2$ and for all points in time~$j$ with $0 \leq j < i$, $\computation, t+j \TSLSatisfaction{\mathcal{V}}{\assign} \psi_1$ holds. Thus, by induction hypothesis, there exists a point in time $i \geq 0$ such that both $\computation_\mathit{fin}, t+i \TSLSatisfaction{\mathcal{V}_\mathit{fin}}{\assign_\mathit{fin}} \psi_2$ and for all points in time~$j$ with $0 \leq j < i$, $\computation_\mathit{fin}, t+j \TSLSatisfaction{\mathcal{V}_\mathit{fin}}{\assign_\mathit{fin}} \psi_1$ holds. Therefore, $\computation_\mathit{fin}, t\TSLSatisfaction{\mathcal{V}_\mathit{fin}}{\assign_\mathit{fin}} \psi_1 \Until \psi_2$ follows.
	\item Let $\varphi = t_p$. Let $t \in \mathbb{N}_0$. Then, if $\computation, t \TSLSatisfaction{\mathcal{V}}{\assign} t_p$, then $\recursiveAssign{\etaFunction{\computation}{t}{t_p}} = \true$. By \Cref{lem:correct_construction_fin}, we have $\recursiveAssign{\etaFunction{\computation}{t}{t_p}} = \recursiveAssignFin{\etaFunction{\computation_\mathit{fin}}{t}{t_p}}$ and therefore $\recursiveAssignFin{\etaFunction{\computation_\mathit{fin}}{t}{t_p}} = \true$ follows. Hence, $\computation_\mathit{fin}, t \TSLSatisfaction{\mathcal{V}_\mathit{fin}}{\assign_\mathit{fin}} t_p$ holds.
	\item Let $\varphi = \update{\cell{c}}{t_f}$. Let $t \in \mathbb{N}_0$. Then, if $\computation, t \TSLSatisfaction{\mathcal{V}}{\assign} \update{\cell{c}}{t_f}$, then $\atPos{\computation}{t}(\cell{c}) = t_f$. Since $\cell{c}$ occurs in $\varphi$, \ie, $\cell{c} \not\in \cells^\mathit{new}$, and since $\update{\cell{c}}{t_f}$ occurs in $\varphi$, then, by definition of $\computation_\mathit{fin}$, we have $\atPos{\computation_\mathit{fin}}{t}(\cell{c}) = \atPos{\computation}{t}(\cell{c})$. Thus, since $\atPos{\computation}{t}(\cell{c}) = t_f$, $\atPos{\computation_\mathit{fin}}{t}(\cell{c}) = t_f$ follows. Hence, $\computation_\mathit{fin}, t \TSLSatisfaction{\mathcal{V}_\mathit{fin}}{\assign_\mathit{fin}} \update{\cell{c}}{t_f}$ holds.
\end{itemize}

To conclude, we have proven that $\computation_\mathit{fin} \TSLSatisfaction{\mathcal{V}_\mathit{fin}}{\assign_\mathit{fin}} \varphi \land \varphi_\mathit{fin}$ and $\finitary{\varphi \land \varphi_\mathit{fin}}{\computation_\mathit{fin}}$ hold. Hence, by \Cref{lem:canonical_model_tu}, $\varphi \land \varphi_\mathit{fin}$ is finitary satisfiable in $\tu$.


\subsection{Correctness of \Cref{alg:TSL_TU_SAT_idea} (Proof of \Cref{thm:correctness_algorithm})}

We prove the correctness of \Cref{alg:TSL_TU_SAT_idea} in three steps: First, we prove that if the algorithm terminates with \sat~on the TSL formula $\varphi$, then there exists a respective finitary execution satisfying $\varphi$.
Second, we show that if \Cref{alg:TSL_TU_SAT_idea} terminates with \unsat~on the TSL formula $\varphi$, then there is no finitary execution that satisfies $\varphi$.
Third, we combine both results to conclude the proof of \Cref{thm:correctness_algorithm}.

\begin{lemma}\label{lem:correctness_algorithm_sat}
	Let $\varphi$ be a TSL formula.
	If \Cref{alg:TSL_TU_SAT_idea} terminates on $\varphi$ with \sat, then there exists an execution $\symbolicExecution$ such that $\computation \TSLSatisfaction{\mathcal{V}}{\assign} \varphi$ and $\finitary{\varphi}{ctr}$.
\end{lemma}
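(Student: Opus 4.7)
The plan is to show that when \textsc{SatSearch} succeeds on a lasso $\pref.\rec^\omega$, the SMT witness can be lifted into a full satisfying execution of $\varphi$ whose computation follows the lasso and is finitary by construction. Concretely, I would proceed as follows.

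First, I would fix notation: when the algorithm returns \sat, the outer \textbf{for} loop has produced an accepting run of the finitary BSA $\mathcal{B}$ of the form $\pref.\rec^\omega$, together with $(v_p,\_) = \executionEffect{\pref}$ and $(v_r,P) = \executionEffect{\pref.\rec}$, such that the SMT query
\[
 \Phi \;:=\; \Bigl(\!\bigwedge_{(\predTerm,v)\in P} (\predTerm \text{ if } v=\true,\; \neg\predTerm \text{ else})\Bigr)\;\land\;\bigwedge_{\cell{c}\in\cells} v_p(\cell{c}) = v_r(\cell{c})
\]
is satisfiable in the theory of equality. By soundness of the SMT solver there is a model $(\mathcal{V},\assign)$ over the signature of $\varphi$ which satisfies $\Phi$; this yields the candidate assignment function of the execution.

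Next I would define the computation. Because $\mathcal{B}$ is finitary (\Cref{lem:tsl_to_bsa_fini}), each transition of $\mathcal{B}$ carries only cell updates drawn from $\allUpdates{\varphi}\cup\{\updcell{c}{c}\mid \cell{c}\in\cells\}$. I would then define $\computation\in\mathcal{C}^\omega$ to be the unique computation whose $t$-th step is the update component of the $t$-th transition of $\pref.\rec^\omega$. By construction $\finitary{\varphi}{\computation}$ holds, and $\symbolicExecution$ is an execution \emph{for} the run $\pref.\rec^\omega$ in the sense of the BSA-execution definition, provided we can verify the two requirements: the update condition holds by construction, and the predicate condition is exactly the content of the constraint trace being consistent and compatible with $\assign$.

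The main obstacle, and therefore the heart of the proof, is to extend the finite consistency witnessed by the SMT to the full $\omega$-run. I would argue this in two steps. (i) The constraint trace $\constraintTrace$ of $\computation$ and $\pref.\rec^\omega$ can be written as the increasing union of constraint sets $\atPos{\constraintTrace}{|\pref|+k|\rec|}$ obtained after $k$ iterations of $\rec$. Using the identity $\recursiveAssign{\etaFunction{\computation}{|\pref|+k|\rec|}{\cell{c}}}$ equals $\recursiveAssign{\etaFunction{\computation}{|\pref|}{\cell{c}}}$ (provable by induction on $k$ from the equalities $v_p(\cell{c})=v_r(\cell{c})$, which are guaranteed by $\Phi$ and by the homomorphism properties of $\chi_{\assign}$ on function terms), one shows that each additional $\rec$-iteration produces exactly the same evaluated predicate constraints as the first. (ii) The first iteration is governed by $P$, and $\assign\vDash\Phi$ gives $\recursiveAssign{\predTerm}=b$ for every $(\predTerm,b)\in P$. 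Together this yields $\recursiveAssign{\predTerm}=b$ for every $(\predTerm,b)\in\bigcup_{t}\atPos{\constraintTrace}{t}$, so $\constraintTrace$ is consistent and $\symbolicExecution$ is an execution for $\pref.\rec^\omega$.

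Finally, since $\pref.\rec^\omega$ is an accepting run of $\mathcal{B}$, we have $\symbolicExecution\in\symbolicLanguage{\mathcal{B}}$, and by \Cref{thm:tsl_to_bsa} this gives $\computation\TSLSatisfaction{\mathcal{V}}{\assign}\varphi$. Combined with $\finitary{\varphi}{\computation}$ established above, this completes the proof.
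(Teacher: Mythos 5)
Your proof is correct, and its core -- extending the finite consistency certificate to the whole $\omega$-run via the loop equalities $v_p(\cell{c})=v_r(\cell{c})$ together with the homomorphism property of $\chi_{\assign}$ -- is the same periodicity argument the paper uses. The one genuine structural difference is where the witnessing assignment comes from. You extract a concrete model $(\mathcal{V},\assign)$ of the SMT query from the solver and verify requirement (2) of the execution definition directly: every $\rec$-iteration contributes constraints that evaluate under $\assign$ exactly as the corresponding constraints of the first iteration, all of which lie in $P$. The paper uses only the \emph{satisfiability} of the query: it first proves by induction on $t$ that the constraint trace $\constraintTrace$ is consistent, and then builds a Herbrand-style term model ($\mathcal{V}:=\funcTerms$, functions interpreted freely, predicates read off $\bigcup_{t}\atPos{\constraintTrace}{t}$). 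Your route is a little more direct but leans on the solver returning a model; the paper's is self-contained and additionally produces the explicit consistency statement. Two minor points: the fact you need at the end -- that a specific execution in $\symbolicLanguage{\mathcal{B}}$ satisfies $\varphi$ and is finitary -- is \Cref{lem:tsl_to_bsa_fini} rather than \Cref{thm:tsl_to_bsa} (the theorem only relates language nonemptiness to satisfiability); and the predicate condition is not ``exactly'' consistency of the constraint trace but the stronger semantic compatibility with $\assign$, which is what you in fact verify, so nothing is lost.
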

\begin{proof}
	Let $\mathcal{B}$ be the BSA that \Cref{alg:TSL_TU_SAT_idea} constructs for $\varphi$. Assume that the algorithm terminates with \sat. Then, the SMT query is satisfiable and hence there exists some accepting run $\pref.\rec^\omega$ such that $\projection{\executionEffect{\pref.\rec}}{2}$ is consistent. Since $\mathcal{B}$ is finitary, $\pref.\rec^\omega$ has a unique, finitary computation $\computation$.
	Since the SMT query is satisfiable, we have \[ \forall \cell{c} \in \cells.~ \etaFunction{\computation}{\abs{\pref\,}}{\cell{c}} = \etaFunction{\computation}{\abs{\pref\,}+\abs{\rec}}{\cell{c}} \]
	by definition of the execution effect.
	In the theory of equality, function applications preserve equality. Therefore, the execution of identical transitions after equal terms preserves equality as well. Since $\pref.\rec^\omega$ is lasso-shaped, \ie, $\rec$ is repeated infinitely often, we thus obtain \[ \forall t > \abs{\pref\,}.~ \forall \cell{c} \in \cells.~ \etaFunction{\computation}{t}{\cell{c}} = \etaFunction{\computation}{t+\abs{\rec}}{\cell{c}}. \]
	
	We now show that the constraint trace $\constraintTrace$ of $\computation$ is consistent, \ie, that there is no predicate term $\predTerm \in \predTerms$ such that both $(\predTerm,\true)$ and $(\predTerm,\false)$ are contained in $\bigcup_{t \in \mathbb{N}_0}\atPos{\constraintTrace}{t}$. Since $\atPos{\constraintTrace}{t-1} \subseteq \atPos{\constraintTrace}{t}$ holds by construction of $\constraintTrace$, it suffices to show that for all $t \in \mathbb{N}_0$, there is not predicate term $\predTerm \in \predTerms$ with $(\predTerm,\true),(\predTerm,\false)\in\atPos{\constraintTrace}{t}$. We prove this claim by induction on the considered point in time $t$:
	\begin{itemize}
		\item Let $t \leq \abs{\pref.\rec}$. We have $\projection{\executionEffect{\pref.\rec}}{2} = \atPos{\constraintTrace}{\abs{\pref\,}+\abs{\rec}}$ by definition of the execution effect. Thus, since $\projection{\executionEffect{\pref.\rec}}{2}$ is consistent as explained above, $\atPos{\constraintTrace}{\abs{\pref\,}+\abs{\rec}}$ is consistent. Therefore, by construction of $\constraintTrace$, $\atPos{\constraintTrace}{t'}$ is consistent for all points in time $t'$ with $0 \leq t' \leq \abs{\pref\,}+\abs{\rec} = \abs{\pref.\rec}$ and thus, since $t \leq \abs{\pref.\rec}$ holds by assumption, for all points in time $t'$ with  $0 \leq t' \leq t$.
		\item Let $t > \abs{\pref.\rec}$. Towards a contradiction, suppose that there is a predicate term $\predTerm \in \predTerms$ with $(\predTerm,\true),(\predTerm,\false)\in\atPos{\constraintTrace}{t}$. By induction hypothesis, not both $(\predTerm,\true)$ and $(\predTerm,\false)$ can be contained in $\atPos{\constraintTrace}{t-1}$. Hence, at least one of them is contained in $\atPos{\constraintTrace}{t} \setminus \atPos{\constraintTrace}{t-1}$. Hence, by construction of $\constraintTrace$, at least one of them is contained in $\{(\etaFunction{\computation}{t-1}{\predTerm},\projection{c_{t-1}}{\predTerm'}) \mid \predTerm' \in \guards \}$, where $c$ is the run of $\mathcal{B}$, \ie, $c = \pref.\rec^\omega$. Therefore, $\predTerm = \etaFunction{\computation}{t-1}{\predTerm'}$ for some $\predTerm' \in \guards$. By assumption, $t>\abs{\pref.\rec}$ and hence $\etaFunction{\computation}{t-1}{\predTerm'} = \etaFunction{\computation}{t-1-\abs{rec}}{\predTerm'}$ holds. Therefore, $\predTerm = \etaFunction{\computation}{t-1-\abs{rec}}{\predTerm'}$. Thus, for $v$ with $(\predTerm,v)\in(\atPos{\constraintTrace}{t}\setminus\atPos{\constraintTrace}{t-1})$, we have $(\predTerm,v)\in\atPos{\constraintTrace}{t-\abs{rec}}$. But then $\atPos{\constraintTrace}{t-\abs{rec}}$ is inconsistent as $\abs{\rec}>0$, contradicting the induction hypothesis.
	\end{itemize}
	
    Hence, we have shown that $\constraintTrace$ is consistent. Therefore, we can define an assignment function $\assign$ such that $(\computation,\mathcal{V},\assign)$ is an execution for $\pref.\rec^\omega$ as follows: $\mathcal{V} := \funcTerms$, $\langle \cell{c} \rangle := \cell{c}$ for $\cell{c} \in \cells$, $\langle f \rangle := \lambda (x_1, \dots, x_n).~f(x_1, \dots, x_n)$ for $f \in \funcLiterals$, and $\langle p \rangle := \lambda (x_1, \dots, x_n). \left(p(x_1, \dots, x_n)\in\bigcup_{t\in\mathbb{N}_0}\atPos{\constraintTrace}{t}\right)$ for $p \in \predLiterals$.
    This implies $\recursiveAssign{\predTerm} := \true$ if, and only if $(\predTerm,\true)\in\bigcup_{t\in\mathbb{N}_0}\atPos{\constraintTrace}{t}$.
	Therefore, $\symbolicExecution \in \symbolicLanguage{\mathcal{B}}$ holds and thus, by \Cref{lem:tsl_to_bsa_fini}, we have both $\computation \TSLSatisfaction{\mathcal{V}}{\assign} \varphi$ and $\finitary{\varphi}{\computation}$.
\end{proof}

\begin{lemma}\label{lem:correctness_algorithm_unsat}
	Let $\varphi$ be a TSL formula.
	If \Cref{alg:TSL_TU_SAT_idea} terminates on $\varphi$ with \unsat, then for all executions $\symbolicExecution$ with $\finitary{\varphi}{ctr}$, we have $\computation \not\TSLSatisfaction{\mathcal{V}}{\assign} \varphi$.
\end{lemma}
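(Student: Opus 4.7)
The plan is to prove the contrapositive: assuming there exists an execution $(\computation,\mathcal{V},\assign)$ with $\finitary{\varphi}{\computation}$ and $\computation \TSLSatisfaction{\mathcal{V}}{\assign} \varphi$, I will show that \Cref{alg:TSL_TU_SAT_idea} cannot terminate on $\varphi$ with \unsat.

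First, I would invoke the finitary version of \Cref{thm:tsl_to_bsa}. Since $(\computation,\mathcal{V},\assign)$ is a finitary execution satisfying $\varphi$, it lies in the symbolic language $\symbolicLanguage{\mathcal{B}}$ of the finitary BSA $\mathcal{B}$ constructed by the algorithm, so there exists an accepting run $c$ of $\mathcal{B}$ such that $(\computation,\mathcal{V},\assign)$ is an execution for $c$. By the definition of an execution of a BSA, the constraint trace $\constraintTrace$ of $\computation$ and $c$ must then be \emph{consistent}: there is no predicate term $\predTerm$ for which both $(\predTerm,\true)$ and $(\predTerm,\false)$ occur in $\bigcup_{t} \atPos{\constraintTrace}{t}$ (otherwise the requirement $\recursiveAssign{\predTerm}=b$ could not be satisfied).

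Next, I would establish the key lemma that consistency of $\constraintTrace$ prevents the run $c$ from ever being discarded by $\mathit{UnsatSearch}$. Concretely, for every finite subword $c_s$ of $c$, the constraint-component $P$ of $\executionEffect{c_s}$ corresponds, after accounting for the cell terms inherited from the preceding prefix of $c$, to a subset of $\bigcup_{t} \atPos{\constraintTrace}{t}$: each pair $(\predTerm,b)$ produced by the subword lifts, via the compositional nature of the symbolic evaluation function $\eta$, to a pair $(\etaFunction{\computation}{t}{\predTerm},b)$ appearing in $\constraintTrace$. Hence a conflict $(\predTerm,\true),(\predTerm,\false) \in P$ would entail a conflict in $\constraintTrace$ on the corresponding lifted term, contradicting consistency. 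Therefore the test on line~12 never fires for any subword of $c$, so $c$ stays in $\compBSA$ throughout the execution of $\mathit{UnsatSearch}$. Since $c$ is accepting, the set $\{c' \in \compBSA \mid \accBSA{c'}\}$ is never empty, so $\mathit{UnsatSearch}$ never returns \unsat. As $\mathit{SatSearch}$ can only return \sat, the parallel composition cannot terminate with \unsat, yielding the desired contradiction.

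The main obstacle is the middle step: making precise the relationship between the ``local'' execution effect of a subword (computed by the algorithm in isolation) and the ``global'' constraint trace along $c$. This requires a substitution-style lemma showing that the symbolic terms $\etaFunction{\computation}{t}{\predTerm}$ generated along $c$ factor through the cell terms at the start of each subword, so that syntactic equality between local predicate-term instances coincides with syntactic equality between their global counterparts. Since $\mathcal{B}$ is finitary, $c$ determines the computation $\computation$ uniquely, which is what makes this substitution well-defined; the details are the same kind of bookkeeping as in the proof of \Cref{lem:correctness_algorithm_sat} and are carried out in \Cref{app:partial_decider}.
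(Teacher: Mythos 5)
Your proposal is correct and is essentially the contrapositive of the paper's own argument: both rest on the finitary TSL-to-BSA correspondence (\cref{lem:tsl_to_bsa_fini}) together with the observation that the execution effect of a finite subword lifts, via substitution through the cell terms accumulated by the preceding prefix, to a subset of the (unique) constraint trace of the run, so that an inconsistent subword forces an inconsistent constraint trace and vice versa. The ``substitution-style lemma'' you flag as the main obstacle is exactly the step the paper also relies on (stated there as ``$\projection{\executionEffect{c_{pr}.c_p}}{2}$ is thus inconsistent for every additional prefix $c_{pr}$''), so no new idea is missing.
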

\begin{proof}
	Let $\mathcal{B}$ be the BSA that \Cref{alg:TSL_TU_SAT_idea} constructs for $\varphi$. Let $c$ be an accepting run of $\mathcal{B}$. Assume that the algorithm terminates with \unsat. Then, $c \not \in \compBSA$ holds for the last computed set $\compBSA$ in \Cref{alg:TSL_TU_SAT_idea}. Hence, there exists a subword $c_p$ of~$c$ such that $\projection{\executionEffect{c_p}}{2}$ is not consistent. By definition of the execution effect, $\projection{\executionEffect{c_{pr}.c_p}}{2}$ is thus inconsistent for every additional prefix $c_{pr}$.
	Recall that finitary BSAs have unique computations and constraint traces for a run. Let $\computation$ and $\constraintTrace$ be these for $c$ and $\mathcal{B}$.
	Let $c_{pr}$ be a finite run such that $c_{pr}.c_p$ is a prefix of~$c$. Then, $\projection{\executionEffect{c_{pr}.c_p}}{2} = \atPos{\constraintTrace}{\abs{c_{pr}} + \abs{c_p}}$.
	Hence, $\constraintTrace$ is inconsistent and, since $\constraintTrace$ is the unique constraint trace of the only computation $\computation$ of $c$, there is no execution for $c$.
	Thus, no execution of $\mathcal{B}$ has a respective accepting run and hence $\symbolicLanguage{\mathcal{B}} = \emptyset$. Thus, by \Cref{lem:tsl_to_bsa_fini}, $\computation \not \TSLSatisfaction{\mathcal{V}}{\assign} \varphi$ for all executions $\symbolicExecution$ with $\finitary{\varphi}{\computation}$.
\end{proof}

Using these results, we can now show the correctness of \Cref{alg:TSL_TU_SAT_idea}.
First, assume that \Cref{alg:TSL_TU_SAT_idea} terminates with \sat~on the TSL formula $\varphi$. Then, by \Cref{lem:correctness_algorithm_sat}, there exists an execution $\symbolicExecution$ such that both $\computation \TSLSatisfaction{\mathcal{V}}{\assign} \varphi$ and $\finitary{\varphi}{\computation}$ holds. Hence, by \Cref{lem:canonical_model_tu}, $\varphi$ is finitary satisfiable in $\tu$.

Second, assume that \Cref{alg:TSL_TU_SAT_idea} terminates with \unsat~on the TSL formula $\varphi$. Then, by \Cref{lem:correctness_algorithm_unsat}, we have $\computation \TSLSatisfaction{\mathcal{V}}{\assign} \varphi$ for all executions $\symbolicExecution$ with $\finitary{\varphi}{\computation}$. That is, there is no  execution $\symbolicExecution$ with both $\computation \TSLSatisfaction{\mathcal{V}}{\assign} \varphi$ and $\finitary{\varphi}{\computation}$. Hence, by definition of TSL satisfiability, $\varphi$ is unsatisfiable in $\tu$.


\section{Undecidability of TSL modulo $\tu$ Satisfiability}\label{app:undecidability}

\subsection{Encoding Numbers in TSL modulo $\tu$ (Proof of \Cref{lem:tu_numbers})}

We construct $\varphi_\mathit{num}$ as follows: The formula consists of two conjuncts $\varphi_1$ and $\varphi_2$, \ie, $\varphi_\mathit{num} = \varphi_1 \land \varphi_2$. The first conjunct ensures the left-to-right direction of the claim, while the second one ensures the right-to-left direction.
	
	The first conjunct is defined by \[\varphi_1 := \update{e}{\encRes} \land \Next \Globally\left(\update{e}{\encInc(e)} \land e \encEq e\right).\] Clearly, if $\varphi_1$ is entailed by an execution, then $\encInc^n(\encRes) \encEq \encInc^n(\encRes)$ holds for all $n \in \mathbb{N}_0$ in all models. This concludes the left-to-right direction.	
	The second conjunct is defined by \[\varphi_2 := \update{x}{z} \land \update{b}{z} \land \Next \Globally (\varphi_\mathit{eq} \land \varphi_\mathit{neq}),\] where
	\begin{align*}
		\varphi_\mathit{eq} &:= \left( x \encEq b \right) \rightarrow \left( \update{x}{z} \land \update{b}{\encInc(b)} \land \neg\left( b \encEq \encInc(b) \right) \land \neg\left( \encInc(b) \encEq b \right) \right)\\
		\varphi_\mathit{neq} &:= \neg\left( x \encEq b \right) \rightarrow \left( \update{x}{\encInc(x)} \land \update{b}{b} \land \neg\left( x \encEq \encInc(b) \right) \land \neg\left( \encInc(b) \encEq x \right) \right) 
	\end{align*}
	First, note that by the structure of $\varphi_2$, cell $b$ can only contain terms of the form $\encInc^n(\encRes)$ for $n \in \mathbb{N}_0$. Furthermore, starting with $n=0$, $n$ increases in every execution entailing $\varphi_2$, \ie, the terms appear in the order $\encRes, \encInc(\encRes), \encInc^2(\encRes), \dots$.
	To prove the right-to-left direction of the claim, we show that every execution entailing $\varphi_2$ requires from its models that for all $a,b \in \mathbb{N}_0$, if $a \neq b$ holds, then $\neg(\encInc^a(\encRes) \encEq \encInc^b(\encRes))$. Since we consider infinite traces, it suffices to show that (1) If $b$ contains the value $\encInc^n(\encRes)$, then for all $k_1, k_2 \in \mathbb{N}_0$ with $0 \leq k_1, k_2 \leq n$, if $k_1 \neq k_2$, then $\neg(\encInc^{k_1}(\encRes) \encEq \encInc^{k_2}(\encRes))$ holds, and (2) for all $n \in \mathbb{N}_0$,~$b$ will eventually reach the value $\encInc^n(\encRes)$.
	We prove these claims by induction on~$n$:
	
	Let $n=0$: Since $k_1 = k_2$ for all $k_1, k_2 \in \mathbb{N}_0$ with $0 \leq k_1, k_2 \leq n$, claim~(1) follows directly. Since $b$ is initially set to $\encRes$ in $\varphi_2$, claim (2) holds as well.
	
	Assume that the claims holds for $n \in \mathbb{N}_0$ with $n>0$. Consider the very first point in time at which $b$ is set to~$\encInc^n(\encRes)$. 
	Since $n>0$, the definitions of $\varphi_\mathit{eq}$ and $\varphi_\mathit{neq}$ ensure that then $x \encEq b$ holds. Hence, $x$ is set to $\encRes$.
	By induction hypothesis, we have $\neg(\encInc^k(\encRes) \encEq \encInc^n(\encRes))$ for all $k \in \mathbb{N}_0$ with $0 \leq k < n$. Hence, $\neg(x \encEq b)$ holds until $x$ is set to $\encInc^n(\encRes)$. By construction of $\varphi_2$ and particularly $\varphi_\mathit{neq}$, $x$ is thus updated with $\encInc(x)$ until~$x$ reaches $\encInc^n(\encRes)$. During this update process, $b$ cannot change its value and the last two conjuncts of the conclusion of $\varphi_\mathit{eq}$ enforce that both $\neg(\encInc^k(\encRes) \encEq \encInc^{n+1}(\encRes))$ and $\neg(\encInc^{n+1}(\encRes) \encEq \encInc^k(\encRes))$ hold for all $k$ with $0 \leq k < n$.
	When $x$ reaches $\encInc^n(\encRes)$, $\varphi_1$ enforces that $x \encEq b$ holds. Thus, $\varphi_\mathit{eq}$ ensures that~$b$ is set to $\encInc^{n+1}(\encRes)$. Hence claim (2) holds for $n+1$. Furthermore, $\varphi_\mathit{eq}$ enforces that both $\neg(\encInc^n(\encRes) \encEq \encInc^{n+1}(\encRes))$ and $\neg(\encInc^{n+1}(\encRes) \encEq \encInc^n(\encRes))$ hold. Therefore, together with the previous predicate enforcements, we obtain that $\neg(\encInc^k(\encRes) \encEq \encInc^{n+1}(\encRes))$ and $\neg(\encInc^{n+1}(\encRes) \encEq \encInc^k(\encRes))$ hold for all $k$ with $0 \leq k \leq n$. Hence, together with the induction hypothesis, this yields that for all $k_1,k_2 \in \mathbb{N}_0$ with $0 \leq k_1, k_2 \leq n+1$, if $k_1 \neq k_2$, then $\neg(\encInc^{k_1}(\encRes) \encEq \encInc^{k_2}(\encRes))$, concluding the induction step.


\subsection{Reducing GOTO Programs to TSL Formulas (Proof of \Cref{thm:undecidability_tsl_modulo_tu})}

In this section, we reduce \goto~programs to TSL formulas. This is used to prove the undecidability of the satisfiability of a TSL formula modulo the theories of uninterpreted functions, equality, and Presburger arithmetic.

\paragraph*{\goto~programs.}
First, we introduce \goto~programs. Note that our definition slightly differs from classical ones. However, they are equivalent in the sense that \goto~programs defined according to one definition can be simulated by a \goto~program defined according to the other one. 
We will prove the equivalence of our definition to classical ones later in this section.

\begin{definition}[\gotoNSC]\label{def:goto}
	A \emph{\goto~program} is tuple $\mathcal{G} = (L, V, \lambda)$, where 
	\begin{itemize}
		\item $L = \{ l_0, l_1,\dots,l_m \}$ is an ordered set of locations, $l_0$ is called starting location and $l_m$ is called terminating location. Each execution of $\mathcal{G}$ starts in $l_0$ and terminates when $l_m$ is reached.
		\item $V = \{v_0,\dots,v_n\}$ is a set of variables over the natural numbers. Variable $v_0$ is used to provide inputs to $\mathcal{G}$.
		\item $\lambda: L\setminus\{l_m\} \rightarrow (\{\gotoifeq\} \times V \times V \times L) \cup (\{\inc,\res\} \times V)$ assigns a so-called action to each location. Since $l_m$ is the terminating, it does not have an action. If $\lambda(l_i) = (\inc,v_i)$, then variable $v_i$ is incremented and the action at location $l_{i+1}$ is executed. If $\lambda(l_i) = (\res,v_i)$, then variable $v_i$ is set to zero and the action at location $l_{i+1}$ is executed. If $\lambda(l_i) = (\gotoifeq,v_j,v_k,l_l)$, then, if $v_j = v_k$, the action at location $l_l$ is executed. Otherwise, the action at location $l_{i+1}$ is executed.
	\end{itemize}
\end{definition}

We denote \goto~programs as $l_0 : \lambda(l_0); ~\dots;~l_{m-1} : \lambda(l_m)$. We omit a location if the program does not contain a jump to this particular locations.

Note that \goto~programs as described in \Cref{def:goto} differ from classical ones. Yet, they are equivalent in the sense that actions in our definition can be simulated by classic \goto~programs and vice versa:

Classic \goto~programs have assignments like $v_i := v_j \pm c$ for some constant $c \in \mathbb{N}_0$ and unconditional jumps like $\goto~l$, which the \goto~programs according to \Cref{def:goto} do not have. In the following, we show how they can be simulated in our version of \goto~programs:
\begin{itemize}
\item   $\goto~l$ can be simulated by $(\rec,v_f);(\gotoifeq,v_f,v_f,l)$ for an unused variable $v_f$.
\item   $v_i := v_j$ can be simulated through 
            \[(\res, v_i); l_s: (\gotoifeq, v_i,v_j,l_n); (\inc, v_i); \goto~l_s; l_n: \dots\]
\item   A decrementation $(\dec,v_i)$, where decrementing zero yields zero, can be simulated by
		\begin{align*}
			&v_f := v_i; (\res,v_i); (\gotoifeq, v_i, v_f, l_n);\\
			&l_s: v_g := v_i; (\inc, v_g); (\gotoifeq, v_f, v_g, l_n); (\inc, v_i); \goto~l_s; l_n: \dots
		\end{align*}
        for unused variables $v_f, v_g$.
\item   $v_i := v_j \pm c$ can be simulated using $v_i := v_j$ and applying $c$-times $\inc$ or $\dec$ on
        $v_i$ for $+$ or $-$, respectively.
\end{itemize}

On the other hand, classic \goto~programs can only jump on an equality check with a constant, i.e., something like $\textsc{If}~v_i = c~\goto~l$ for some constant~$c$.
The action $(\gotoifeq, v_i, v_j, l)$ of our modified \goto~programs as defined in \Cref{def:goto} can be simulated through
	\begin{align*}
		&v_f := v_j - v_i; ~ v_g := v_i - v_j; ~\textsc{If}~v_f = 0~\goto~l_s;\\
		&\goto~l_n; ~l_s: \textsc{If}~v_g = 0~\goto~l; ~l_n: \dots 
	\end{align*}
where $v_i := v_j - v_k$ is syntactic sugar for 
	\begin{align*}
		&v_i := v_j + 0; ~ v_h := v_k; ~l_s: \textsc{If}~v_h = 0~\goto~l_m;\\
		&v_h := v_h - 1; ~ v_i := v_j - 1; ~ \goto~l_s; l_m: \dots
	\end{align*}
for unused variables $v_f$, $v_g$, and $v_h$.

The question, whether some program terminates on every possible input is called the \emph{universal halting problem}.
The universal halting problem of Turing machines is neither semi-decidable nor co-semi-decidable.
Turing machines and \goto~programs have equivalent computation power. Therefore, we obtain the following theorem:

\begin{theorem}[High Undecidability of the Universal Halting Problem of GOTO programs]\label{thm:high_undecidability_universal_halting_problem}
	Let
        \[ H_\goto := \{ \mathcal{G} \in \goto \mid \forall i \in \mathbb{N}_0.~\mathcal{G}~\text{terminates on}~i \} \]
    and let $\mathcal{G} \in \goto$. Checking whether $\mathcal{G} \in H_\goto$ holds highly undecidable, \ie, neither semi-decidable nor co-semi-decidable.
\end{theorem}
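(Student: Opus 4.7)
The plan is to transfer the known non-(semi-)decidability of the universal halting problem for Turing machines to the modified \goto~programs of \Cref{def:goto} via a computability-equivalence argument. First I would recall the classical result that the universal halting problem $\mathrm{Tot}_\mathrm{TM} = \{ M \mid \forall i \in \mathbb{N}_0.\, M \text{ halts on } i \}$ for Turing machines is $\Pi_2$-complete in the arithmetical hierarchy, which gives immediately that $\mathrm{Tot}_\mathrm{TM}$ is neither semi-decidable (otherwise $\Pi_2 \subseteq \Sigma_1$, collapsing the hierarchy) nor co-semi-decidable (otherwise $\Pi_2 \subseteq \Pi_1$, also a collapse). Alternatively, one can present direct many-one reductions of the halting problem $K$ to the complement of $\mathrm{Tot}_\mathrm{TM}$ (diagonalisation) and of $\overline{K}$ to $\mathrm{Tot}_\mathrm{TM}$ (via a machine that, on input $\langle M,n\rangle$, runs $M$ on its own input for $n$ steps and halts iff $M$ has not yet stopped).

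Second, I would establish that the modified \goto~programs of \Cref{def:goto} have exactly the computational power of Turing machines, proceeding in two steps. The forward direction is given by the translations already sketched in the preceding text of the excerpt: unconditional jumps, variable copies $v_i := v_j$, decrementations, general assignments $v_i := v_j \pm c$, and conditional jumps on arbitrary variable equalities are all simulable using only $\inc$, $\res$, and $\gotoifeq$. The reverse direction is trivial, since each of our three action types is directly expressible in any classical \goto~calculus. Composing with the textbook equivalence between classical \goto~programs and Turing machines yields the desired effective equivalence: from any Turing machine $M$ one can uniformly compute a program $\mathcal{G}_M \in \goto$ (in the sense of \Cref{def:goto}) that, started with the input~$i$ stored in $v_0$ and all other variables set to zero, terminates iff $M$ halts on $i$.

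Third, I would observe that this equivalence is a total computable reduction: $M \in \mathrm{Tot}_\mathrm{TM}$ if and only if $\mathcal{G}_M \in H_\goto$. Hence both non-semi-decidability and non-co-semi-decidability lift from $\mathrm{Tot}_\mathrm{TM}$ to $H_\goto$, which completes the proof.

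The main obstacle is not the undecidability argument itself, which is essentially a routine appeal to the $\Pi_2$-hardness of the universal halting problem, but rather verifying carefully that the macros sketched in the excerpt compose into a sound, uniformly computable translation from Turing machines to the restricted instruction set $\{\inc, \res, \gotoifeq\}$ without introducing spurious non-termination or accidental variable reuse. In particular, the simulated subroutines $v_i := v_j \pm c$ and the decrementation macro must use fresh auxiliary variables consistently across nested calls, so one has to treat the construction at a sufficient level of schematic rigor to ensure halting behaviour is preserved exactly.
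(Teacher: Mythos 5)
Your proposal is correct and follows essentially the same route as the paper, which likewise derives the result from the known non-(co-)semi-decidability of the universal halting problem for Turing machines together with the computational equivalence of Turing machines and the restricted \goto{} instruction set established via the macro simulations given just before the theorem. Your write-up is simply a more detailed elaboration (via $\Pi_2$-completeness and an explicit computable reduction $M \mapsto \mathcal{G}_M$) of the paper's two-sentence justification.
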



\paragraph*{Constructing TSL formulas from \goto~programs.}
Given a \goto~program $\mathcal{G}$, we construct a TSL formula $\tslgoto{T}$ from $\mathcal{G}$ for a theory $T$ as follows:
Let $\mathcal{G} = (L, V, \lambda)$ be a \goto~program where $L = \{l_0, l_1, \dots, l_m \}$ and $V = \{v_0, \dots, v_n\}$.
Let $\varphi^\mathit{encnum}_T$ be a TSL formula encoding numbers in theory $T$. In fact, assume that $\varphi^\mathit{encnum}_T$ ensures that an update of a cell with $z$ corresponds to resetting the cell to zero, the function~$\encInc$ corresponds to incrementation, and the predicate $\encEq$ corresponds to equality. For instance, for $\tu$, $\varphi_\mathit{num}$ is such a TSL formula.
This is used in $\tslgoto{T}$ to ``simulate'' $\mathcal{G}$ on different inputs.
Then, $\tslgoto{T}$ is a TSL-formula 
    over $\cells := V \cup \{l, i, x\}$, where $i, l, x \not\in V$, $\Sigma_F := \{f, z\} \cup L$, where $f,g$ are unary and $z, l_0, \dots, l_m$ are constants, and $\Sigma_P := \{\encEq\} \cup \{p_i~|~0 \leq i \leq m\}$ where $p_1, \dots, p_m$ are unary.
The formula $\tslgoto{T}$ is defined by $\tslgoto{T} = \varphi^\mathit{encnum}_T \land \bigwedge_{1 \leq j \leq 5} \varphi_j$, where:

\begin{description}
    \item[$\varphi_1$:] \[ \bigwedge_{0 \leq a \leq m} \left( p_a(l_a()) \land \bigwedge_{0 \leq b \leq m, a \neq b} \neg p_a(l_b()) \right)\]
    \item[$\varphi_2$:] \[ \Next \Globally \left( \bigwedge_{0 \leq k < m} \left(p_k(l) \to \upd{i}{i} \land \mathit{a}(l_k)\right)\right),\]
            where 
            \begin{equation*}\begin{array}{c}
                \mathit{a}(l_k) :=  
                \begin{cases}
                        \upd{l}{l_{k + 1}()} \land \upd{v_j}{f(v_j)} &\\ \land\bigwedge_{0 \leq k \leq n, k \neq j}\upd{v_k}{v_k} 
                        &\text{if}~\lambda(l_k) = (\inc,v_j)\\\\

                        \upd{l}{l_{k + 1}()} \land \upd{v_j}{z()} &\\ \land \bigwedge_{0 \leq k \leq n, k \neq j}\upd{v_k}{v_k}
                        &\text{if}~\lambda(l_k) = (\res,v_j)\\\\

                        (\neg (v_i \encEq v_j) \to \upd{l}{l_{k + 1}()})&\\ 
                        \land (v_i \encEq v_j \to \upd{l}{l_a()}) &\\ 
                        \land \bigwedge_{0 \leq k \leq n}\upd{v_k}{v_k}
                        &\text{if}~\lambda(l_k) = (\gotoifeq,v_i, v_j, l_a) \\
            \end{cases}
            \end{array}\end{equation*}
        \item[$\varphi_3$:] \[ \Next\Globally \left(\!p_m(l) \to \upd{i}{f(i)} \land \upd{l}{l_0()} 
            \land \update{v_0}{i} \land \!\left( \bigwedge_{1 \leq k \leq n}\!\upd{v_k}{z()} \right) \!\!\right)\]
        \item [$\varphi_4$:]\[ \upd{i}{z()} \land \upd{l}{l_m()} \]
        \item [$\varphi_5$:]\[ \Globally\Eventually p_m(l) \]
 \end{description}

The cell $l$ is used to simulate the program counter of $\mathcal{G}$. 
The cells $v_1, \dots, v_n$ simulate the different variables of $\mathcal{G}$.
If $\varphi_1$ is satisfied, it implies that the constants $l_0, \dots, l_m$ are pairwise distinct and can be used to simulate the locations by checking if their respective predicate holds.
If satisfied, $\varphi_2$ simulates the execution of $\mathcal{G}$: $\varphi_2$ states that every time (the initial time step is an exception) the location is not the terminating location, the location and variables are modified according to the program execution of $\mathcal{G}$. More precisely: If the location's statement is to increase (reset) some variable, the variable is increased (reset) and the location is set to the next location. If the location's statement is an equality check, depending on whether the variables to check are equal or not, the location is either set to the target location or to the next one.

The cell $i$ is used to simulate the inputs given to $\mathcal{G}$. During the execution simulation, which is done by $\varphi_2$, $i$ is not modified. However, if the simulation reaches the halting location, $\varphi_3$ enforces that a new simulation of $\mathcal{G}$ on $i$ is started~--~by setting $l$ to the starting location, $v_0$ (the input variable) to $i$, and the other variables to zero -- and $i$ is incremented. In addition, $\varphi_4$ initially sets $i$ to zero and~$l$ to the halting location. Therefore, $\varphi_2$ and $\varphi_3$ have an initial $\Next$.  

This implies that $\mathcal{G}$ is first simulated on zero and then, if it terminates, on one, then two and so on.   
Lastly, $\varphi_5$ requires that $\mathcal{G}$ has to terminate infinitely often and in conclusion to indeed terminate on all inputs.
So overall, $\tslgoto{T}$ models that infinitely often $\mathcal{G}$ is simulated on some input (starting with zero)
    and if the simulation terminates, then the next simulation with the incremented input is done. The fact that $\varphi_5$ enforces that each of these simulations has to terminate, will allow us to show that the construction of $\tslgoto{T}$ indeed allows us to reduce $H_\goto$ to the satisfiability of a TSL formula in $T$:

\begin{lemma}\label{lem:correctness_translation_goto_tsl}
	Let $\mathcal{G}$ be a \goto~program and let $\tslgoto{T}$ be constructed as described above. Then, $\mathcal{G} \in H_\goto$ if, and only if, $\tslgoto{T}$ is satisfiable in $T$.
\end{lemma}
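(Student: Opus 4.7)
The plan is to prove the two directions separately, with the main technical content being the correspondence between the symbolic evolution of the TSL cells and the stepwise execution of $\mathcal{G}$ on inputs $0, 1, 2, \ldots$.

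For the ``if'' direction, I would argue contrapositively: assume $\mathcal{G} \notin H_\goto$ and show $\tslgoto{T}$ is unsatisfiable in $T$. Take any execution $\symbolicExecution$ that satisfies $\tslgoto{T}$. From $\varphi^\mathit{encnum}_T$, we know $z$ represents zero, $\encInc$ represents incrementation, and $\encEq$ represents equality in every model of $T$. From $\varphi_1$, the constants $l_0, \dots, l_m$ must be pairwise distinguishable under the assignment, and the predicates $p_k$ distinguish them. Using $\varphi_4$, the computation starts with $l = l_m$ and $i = z$. I would then show by induction on time steps that between two consecutive points at which $p_m(l)$ holds, the values assigned to cells $v_0, \dots, v_n, l$ exactly mirror the step-by-step execution of $\mathcal{G}$ on the current input stored in $i$: $\varphi_2$ implements each \inc, \res, and \gotoifeq~action faithfully (this uses $\varphi^\mathit{encnum}_T$ crucially, so that equality of cells under $\encEq$ matches numerical equality), while $\varphi_3$ restarts the simulation with the incremented input. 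Finally, $\varphi_5$ forces the halting location to be reached infinitely often, so $\mathcal{G}$ must halt on every $n \in \mathbb{N}_0$, contradicting $\mathcal{G} \notin H_\goto$.

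For the ``only if'' direction, assume $\mathcal{G} \in H_\goto$. I would construct an explicit execution $\symbolicExecution \in \allModels{T}$ satisfying $\tslgoto{T}$. Pick any model $(\mathcal{V}, \assign) \in \allModels{T}$ of $T$ that additionally satisfies the requirements encoded by $\varphi^\mathit{encnum}_T$ (such a model exists because $\varphi^\mathit{encnum}_T$ is assumed to be a correct number encoding in $T$), and in which the constants $l_0, \dots, l_m$ are assigned pairwise distinct values with each $p_k$ interpreted as ``equals $\langle l_k\rangle$''. Then build the computation $\computation$ by concatenating, for each input $n \in \mathbb{N}_0$, the finite sequence of cell updates corresponding to running $\mathcal{G}$ on $n$: between two halting events, update $l$ and the $v_j$'s according to $\lambda$, keeping $i$ fixed; at the halting step apply $\varphi_3$'s update to reload $i$ incremented, reset the other variables, and jump back to $l_0$. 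Verifying $\computation \TSLSatisfaction{\mathcal{V}}{\assign} \tslgoto{T}$ is then a conjunct-by-conjunct check.

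The main obstacle is the inductive correspondence in the ``if'' direction: one must argue that the function terms stored symbolically in the cells $v_j$, when evaluated by $\assign$, are exactly the natural numbers held by $\mathcal{G}$'s variables at the corresponding program step, and that $v_i \encEq v_j$ evaluates to true precisely when those numbers coincide. This is where $\varphi^\mathit{encnum}_T$ does the real work, since without it the uninterpreted functions $f$ and $z$ would admit interpretations collapsing distinct inputs and thereby falsifying the reduction. A minor but worth-noting subtlety is handling the initial step: $\varphi_4$ sets $l = l_m$ and $i = z$, so $\varphi_3$ fires immediately in the next step and starts the simulation on input $0$; this explains the leading $\Next$ in $\varphi_2$ and $\varphi_3$ and must be reflected in the base case of the induction. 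Once this correspondence is established, combining it with $\varphi_5$ yields the equivalence $\mathcal{G} \in H_\goto \iff \tslgoto{T}$ is satisfiable in $T$.
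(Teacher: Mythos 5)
Your proposal is correct and follows essentially the same route as the paper's proof: one direction builds the satisfying computation by concatenating the terminating runs of $\mathcal{G}$ on inputs $0,1,2,\dots$ in a model where $\varphi^{\mathit{encnum}}_T$ fixes $z$, $\encInc$, $\encEq$ as zero, successor, and equality, and the other direction reads off from any satisfying execution that $\varphi_2$--$\varphi_4$ force a faithful step-by-step simulation while $\varphi_5$ forces every simulated run to halt. Your contrapositive framing of the latter direction and your explicit attention to the inductive cell-term/number correspondence are only presentational refinements of the argument the paper sketches.
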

\begin{proof}
	First, let $\mathcal{G} \in H_\goto$. Then, for each $I \in \mathbb{N}$, $\mathcal{G}$ has a termination execution $ l_{t_{1,I}}l_{t_{2,I}} l_{t_{3,I}} \cdots  l_{t_{p_I,I}}$
    on $I$ such that both $l_{t_{1,I}} = l_0$ and $l_{t_{p_I, I}} = l_m$ hold. Hence,~$\varphi_\mathcal{G}$ is satisfiable in $T$.
    As described above, a satisfying execution first simulates an execution of $\mathcal{G}$ on zero ($z$ in the formula).
    Since, due to $\varphi_3$, this execution terminates, the executions of $\mathcal{G}$ on one, two, three, 
    \dots~are simulated as every execution terminates.
    Thus, the location cell of the computation has the form
        $l_m, l_{t_{1,0}} \cdots l_{t_{p_0,0}}, l_{t_{1,1}} \cdots l_{t_{p_1,1}}, l_{t_{1,2}} \cdots l_{t_{p_2,2}}, l_{t_{1,3}} \cdots $.
     With this, $\varphi_1$, $\varphi_2$, $\varphi_3$, and $\varphi_4$ are satisfied. As $\mathcal{G}$ always terminates, every simulation in the computation eventually reaches $l_m$ and so $\varphi_5$ is also satisfied.

	Second, let $\tslgoto{T}$ be satisfiable in $T$. Then, there exists some computation such that -- as described above --
    $\mathcal{G}$ is simulated on some input~$i$ (starting with zero); and if this simulation terminates, $\mathcal{G}$ is simulated on the incremented input.
    As $\tslgoto{T}$ also requires that every one of these executions terminate, $\mathcal{G}$ is simulated on every input (as the trace is infinitely long) and terminates on each of these inputs, i.e., $\mathcal{G} \in H_\goto$.
\end{proof}

With \Cref{lem:correctness_translation_goto_tsl} and \Cref{thm:high_undecidability_universal_halting_problem}, it now follows that the satisfiability of a TSL formula in a theory allowing for encoding numbers in TSL is highly undecidable:

\begin{theorem}\label{thm:high_undecidability}
	Let $T$ be a theory that allows for encoding resetting a variable to zero, incrementation, and equality with a TSL formula $\varphi^\mathit{encnum}_T$. Then, the satisfiability of a TSL formula in $T$ is highly undecidable.
\end{theorem}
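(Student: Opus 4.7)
The plan is to obtain the theorem as a direct consequence of the machinery already assembled in this appendix, via a reduction from the universal halting problem for \goto-programs to TSL satisfiability in $T$. Concretely, given any \goto-program $\mathcal{G}$, I would associate with it the TSL formula $\tslgoto{T}$ defined earlier in the section, plugging in the hypothesized $\varphi^\mathit{encnum}_T$ in place of the $\varphi_\mathit{num}$ used for $\tu$. Since $\varphi^\mathit{encnum}_T$ is assumed to enforce the intended behaviour of the symbols $z$, $\encInc$, and $\encEq$ (reset, incrementation, and equality), the simulation conjuncts $\varphi_1,\dots,\varphi_5$ can refer to these symbols exactly as in the $\tu$ case.

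Next, I would invoke \Cref{lem:correctness_translation_goto_tsl}. The statement and proof of that lemma are given for a generic theory~$T$ together with an encoding formula $\varphi^\mathit{encnum}_T$; the only property of the encoding actually used in the argument is that in every model a cell updated with $z$ stores zero, $\encInc$ behaves as successor, and $\encEq$ realises equality. Hence the lemma applies verbatim and yields the equivalence
\[ \mathcal{G} \in H_\goto \iff \tslgoto{T}\ \text{is satisfiable in } T. \]
In particular, the map $\mathcal{G} \mapsto \tslgoto{T}$ is a computable many-one reduction from $H_\goto$ to TSL-$T$-SAT, and from the co-problem of $H_\goto$ to the co-problem of TSL-$T$-SAT.

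Finally, I would combine this reduction with \Cref{thm:high_undecidability_universal_halting_problem}, which states that $H_\goto$ is neither semi-decidable nor co-semi-decidable. A semi-decision procedure for TSL-$T$-SAT would, by composing with the reduction, yield a semi-decision procedure for $H_\goto$; dually for co-semi-decidability. Both contradict \Cref{thm:high_undecidability_universal_halting_problem}, so neither exists, which is precisely the desired high undecidability.

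Since all heavy lifting (the construction of $\tslgoto{T}$, its correctness in \Cref{lem:correctness_translation_goto_tsl}, and the high undecidability of $H_\goto$) is already in place, there is no genuine obstacle; the only small point to check is that the proof of \Cref{lem:correctness_translation_goto_tsl} never exploited any feature of $\tu$ beyond the existence of a number-encoding formula, which is exactly the hypothesis of the present theorem. The theorems of undecidability for $\te$ and $\tp$ given earlier are then just instances of this general result, since $\tp$ natively provides such an encoding and $\te$ admits one via the auxiliary formula $\varphi_\mathit{enc}$.
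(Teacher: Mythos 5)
Your proposal is correct and follows essentially the same route as the paper: the paper's proof consists precisely of combining \Cref{lem:correctness_translation_goto_tsl} (whose statement and argument are already phrased for a generic theory $T$ equipped with $\varphi^\mathit{encnum}_T$) with \Cref{thm:high_undecidability_universal_halting_problem}, so that the computable map $\mathcal{G} \mapsto \tslgoto{T}$ transfers the failure of both semi-decidability and co-semi-decidability from $H_\goto$ to TSL-$T$-SAT. Your additional observation that the lemma's proof uses nothing about $\tu$ beyond the number encoding is exactly the point the paper relies on implicitly.
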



\subsection{Undecidability of TSL modulo $\tp$ and $\te$ Satisfiability (Proofs of \Cref{thm:undecidability_tsl_modulo_tp,thm:undecidability_tsl_modulo_te})}

Using the reduction from the universal halting problem for \goto~programs to the satisfiability of a TSL formula, we can now prove the high undecidability of the satisfiability of a TSL formula in the theories of equality $\te$ and Presburger arithmetic $\tp$.
It remains to show that both theories allow for encoding resetting a variable to zero, incrementing, and equality.

These operations are already included in the theory of Presburger arithmetic. Hence, high undecidability of the satisfiability of a TSL formula modulo Presburger arithmetic directly follows with \Cref{thm:high_undecidability}.

For the theory of equality $\te$, we can encoding the required operations with a TSL formula $\varphi_\mathit{enc}$:
\begin{lemma}\label{lem:encoding_te}
	Let $f$ and $g$ be unary functions and let $z$ be a constant. Let $f^x(z)$ and $g^x(z)$ correspond to applying $f$ and $g$ $x$-times to $z$, respectively. There exists a TSL formula $\varphi_\mathit{enc}$ such that every execution entailing $\varphi_\mathit{enc}$ requires from its models that for all $a,b \in \mathbb{N}_0$, if $a \neq b$ holds, then $f^a(x) \neq f^b(z)$ holds.
\end{lemma}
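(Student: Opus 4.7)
}
The plan is to adapt the $\tu$ encoding $\varphi_\mathit{num}$ of Lemma~\ref{lem:tu_numbers} almost verbatim, using that in $\te$ the symbol $=$ is interpreted as genuine equality. This actually simplifies the construction: the reflexive conjunct $\varphi_1$ of $\varphi_\mathit{num}$ (which existed only to force the uninterpreted predicate $\encEq$ to behave reflexively and hence fire the implication $x \encEq b \to \ldots$) becomes unnecessary, because $x = b$ is automatically satisfied in any $\te$-model whenever $x$ and $b$ are assigned syntactically equal terms.

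Concretely, I would use two cells $x$ and $b$ together with the constant $z$, the unary function $f$, and equality, and set $\varphi_\mathit{enc} := \update{x}{z} \land \update{b}{z} \land \Next\Globally(\varphi_\mathit{eq}' \land \varphi_\mathit{neq}')$, where
\begin{align*}
\varphi_\mathit{eq}' &:= (x = b) \to \bigl(\update{x}{z} \land \update{b}{f(b)} \land b \neq f(b) \land f(b) \neq b\bigr), \\
\varphi_\mathit{neq}' &:= \neg(x = b) \to \bigl(\update{x}{f(x)} \land \update{b}{b} \land x \neq f(b) \land f(b) \neq x\bigr).
\end{align*}
The auxiliary function $g$ in the statement is not needed for the encoding itself; it appears only because the lemma is phrased uniformly with the $\goto$-program construction of $\varphi_\mathcal{G}$ that later consumes both $f$ and a second symbol.

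The correctness argument mirrors the one in Lemma~\ref{lem:tu_numbers}. Let $(\computation,\mathcal{V},\assign)$ be any execution with $\computation \TSLSatisfaction{\mathcal{V}}{\assign} \varphi_\mathit{enc}$ and $(\mathcal{V},\assign) \in \allModels{\te}$. I would prove by induction on $n \in \mathbb{N}_0$ the two joint claims (1) cell $b$ eventually takes the value $f^n(z)$, and (2) whenever $b$ holds $f^n(z)$, the model satisfies $f^k(z) \neq f^\ell(z)$ for all $0 \le k < \ell \le n$. The base case $n=0$ is immediate from the initial updates $\update{x}{z}$ and $\update{b}{z}$. For the inductive step, the IH and the conjuncts $b \neq f(b),\,f(b) \neq b$ of $\varphi_\mathit{eq}'$ extend the inequality family to include $f^n(z) \neq f^{n+1}(z)$ as $b$ is incremented; then, by the semantics of $\te$, the implication $\varphi_\mathit{neq}'$ must keep firing until $x$ and $b$ carry semantically equal terms, which, combined with the freshly asserted $x \neq f(b)$, $f(b) \neq x$ constraints fired at each intermediate step (instantiating to $f^k(z) \neq f^{n+1}(z)$ for $k = 0,1,\dots,n$), forces $x$ to progress through exactly $z, f(z), \ldots, f^{n+1}(z)$. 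Once $x$ reaches $f^{n+1}(z) = b$, $\varphi_\mathit{eq}'$ fires again, yielding $b = f^{n+2}(z)$ and closing the induction.

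The one delicate point, and I expect it to be the main obstacle to write up cleanly, is arguing in~(1) that $x$ \emph{must} eventually equal $b$ in the $\te$-model rather than being allowed to diverge: this relies on the fact that while $\varphi_\mathit{neq}'$ keeps incrementing $x$ by $f$, the inequality constraints accumulated in the constraint trace rule out $x$ coinciding with any $f^k(z)$ for $k < n+1$, so the only way the execution can remain consistent and continue satisfying $\Globally(\varphi_\mathit{eq}' \lor \ldots)$ is for the model to eventually identify $x$'s value with $b$'s value, which is syntactically enforced once $x$ becomes $f^{n+1}(z)$. Having established (1) and (2) for all $n$, the conclusion $a \neq b \Rightarrow f^a(z) \neq f^b(z)$ follows by taking $n = \max(a,b)$ and applying (2).
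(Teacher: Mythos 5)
Your proposal is correct, but it takes a genuinely different route from the paper. The paper does not adapt the two-cell ``counter race'' of Lemma~\ref{lem:tu_numbers} at all; instead it uses a single cell $e$ running through $z, f(z), f^2(z), \dots$ and asserts, at every step, $e = g(f(e))$ and $\neg(f(e) = z)$. This forces $g$ to act as a left inverse of $f$ along the chain, so that a hypothetical collision $f^a(z) = f^b(z)$ with $a < b$ can be cancelled down to $z = f^{b-a}(z)$ by applying $g$ $a$ times and using the congruence axioms of $\te$ --- contradicting $f^n(z) \neq z$. So $g$ is not a phrasing artifact of the GOTO reduction, as you guessed; it is the engine of the paper's argument. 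Your construction trades that short algebraic derivation for the heavier temporal induction you outline, but in exchange it needs no auxiliary function symbol and only exploits reflexivity of $=$ (to fire $\varphi_\mathit{eq}'$ on syntactically identical terms) rather than congruence. The ``delicate point'' you flag is in fact already discharged by your own induction hypothesis: when $b$ first holds $f^{n+1}(z)$, the inequalities $f^k(z) \neq f^{n+1}(z)$ for all $k \le n$ have already been forced during the preceding phase (by the conjuncts $x \neq f(b)$ of $\varphi_\mathit{neq}'$ for $k < n$ and $b \neq f(b)$ of $\varphi_\mathit{eq}'$ for $k = n$), so the model has no freedom to identify $x$ with $b$ early; you should make that bookkeeping explicit in a write-up. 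Both proofs are valid; the paper's is shorter and more self-contained, yours is more economical in the signature and reuses the $\tu$ machinery.
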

\begin{proof}
	We construct $\varphi_\mathit{enc}$ as follows: \[\varphi_\mathit{enc} := \update{e}{z} \land \Next \Globally \left( \update{e}{f(z)} \land (e = g(f(e))) \land \neg (f(e) = z) \right) \]
	If $\varphi_\mathit{enc}$ is satisfied, then clearly $f^n(z) = g(f(f^n(z)))$ holds for all $n \in \mathbb{N}_0$ and thus we have $f^n(z) = g(f^{n+1}(z))$ for all $n \in \mathbb{N}_0$. Furthermore, for all $n \in \mathbb{N}_0$ with $n > 0$, $f^n(z) \neq z$ holds.
	
	Towards a contradiction, suppose that $a \neq b$ and $f^a(z) = f^b(z)$ hold for some $a,b \in \mathbb{N}_0$. Without loss of generality, let $a < b$. By the axioms of $\te$, we have $g^a(f^a(z)) = g^a(f^b(z))$ since $f^a(z) = f^b(z)$ holds by assumption. With $f^n(z) = g(f^{n+1}(z))$ for all $n \in \mathbb{N}_0$, it thus follows that $z = f^{b-a}(z)$ holds. Since $a < b$ and $a \neq b$ by assumption, we have $b-a>0$. Hence, $z = f^{b-a}(z)$ contradicts $f^n(z) \neq z$ for all $n \in \mathbb{N}_0$ with $n>0$.
\end{proof}
Hence, with \Cref{lem:encoding_te} and \Cref{thm:high_undecidability}, the high undecidability of the satisfiability of a TSL formula modulo the theory of equality follows.


\section{(Semi-)Decidable Fragments}\label{app:fragments}

\subsection{Reachability Formulas (Proof of \Cref{lem:semi-decidability_reachability})}

Since $\varphi$ is satisfiable, a BSA for $\varphi$ has an accepting state $q$ that is reachable by a consistent transition sequence $s$.
Since $\varphi$ lies in the reachability fragment, everything that is read after reaching $q$ does not influence the acceptance. In particular, only performing self-updates after reaching $q$ leads to an accepting lasso.
Such a self-update lasso satisfies the equality constraint generated by \Cref{alg:TSL_TU_SAT_idea}. Since the algorithm enumerates all lassos, this specific accepting lasso starting with $s$ and performing only self-updates afterwards will always be found.

\subsection{Formulas With a Single Top-Level Eventually-Operator (Proof of \Cref{lem:TSL-TU-SAT_termination_top-level_eventually})}

	Since $\varphi$ lies in the reachability fragment as well, the claim follows directly with \Cref{lem:semi-decidability_reachability} if $\varphi$ is finitary satisfiable. 
	Next, let $\varphi$ not be finitary satisfiable. By assumption, $\varphi$ is either time-bounded itself, or it is of the form $\Eventually \varphi'$, where~$\varphi'$ is time-bounded.
	The suitable BSA for $\varphi$ has a single accepting state $q_f$ with self-loops for all combinations of updates and predicate evaluations. It has a single initial state $q_0$. If $\varphi = \Eventually \varphi'$, then $q_0$ has the same kinds of self-loops as~$q_f$. In between, there might be intermediate states and edges, leading from $q_0$ to $q_f$ without any cycles. This corresponds to the inner time-bounded part of the formula.
	If $\varphi = \Eventually \varphi'$, then $\varphi'$ is unsatisfiable as well, \ie, the source of the unsatisfiability is always the time-bounded part.
	Hence, since there are only finitely many inner transition sequences as the inner part of the BSA is cycle-free, the exclusion mechanism of \Cref{alg:TSL_TU_SAT_idea} excludes all of them.
	Then, there is no path from the $q_0$ to $q_f$ anymore and hence \Cref{alg:TSL_TU_SAT_idea} terminates.

\subsection{Single-Cell Formulas (Proof of \Cref{lem:TSL-TU-SAT_termination_one_cell})}

Let $\varphi$ be a TSL formula such that $\abs{\cells} = 1$ with $\cells = \{x\}$ and assume that $\varphi$ is finitary satisfiable in $\tu$. Let $\mathcal{B} = (Q, Q_0, F, \fresh, \guards, \updateTerms, \delta)$ be the finitary BSA of $\varphi$ which is constructed in \Cref{alg:TSL_TU_SAT_idea}.
Then, by \Cref{thm:tsl_to_bsa}, $\theoryLanguage{\tu}{\mathcal{B}} \neq \emptyset$.
Hence, there exists some accepting run $c$ with an execution $\symbolicExecution$. Note that, since $\mathcal{B}$ is finitary, $\finitary{\varphi}{\computation}$ holds. To prove that \Cref{alg:TSL_TU_SAT_idea} terminates, we show that we can construct some lasso on $c$ that is consistent if the start and end of the recurrent part are equal.

Let $\operatorname{depth}: \funcTerms \to \mathbb{N}$ be the depth of the syntax tree of a given function or predicate term. 
Note that, as $\computation$ is finitary, the number of possible terms built up by $\computation$ with a maximum depth $d$, i.e.\ $\{\etaFunction{\computation}{t}{x} \mid t \in \mathbb{N}_0, \depth{\etaFunction{\computation}{t}{x}} = d\}$, is finite since for each new layer in the tree we only have a finite number of possibilities.
We define $maxDepth_{ctr}: \mathbb{N} \to \mathbb{N}$ which calculates the maximum depth of the term of $x$ in $\computation$ until a certain time step as follows:
        \[ \maxDepth{\computation}{t} := \max \{ \depth{\etaFunction{\computation}{k}{x}} \mid 0 \leq k \leq t \} \]
        
We distinguish two cases:

\begin{enumerate}
	\item There exists a bound $b$ such that for all $t \in \mathbb{N}_0$, $\maxDepth{\computation}{t} < b$ holds, \ie, there is a total bound $b$ for the depths of the terms of $x$. Then, the number of terms of $x$, i.e.\ $\{\etaFunction{\computation}{t}{x} \mid t \in \mathbb{N}_0\}$, is finite. In addition, the number of states in $\mathcal{B}$ is finite. Therefpre, by the pigeon-hole principle, there exist $i, j \in \mathbb{N}_0$, such that (1) $\etaFunction{\computation}{i}{x} = \etaFunction{\computation}{j}{x}$ and (2) $\projection{\atPos{c}{i}}{1} = \projection{\atPos{c}{j}}{1} \in F$ hold. Due to (1), there is a lasso. This lasso is consistent as the original execution effect with the same constraints is. Due to (2), the lasso visits an accepting state infinitely often and hence it is accepting. Thus, \Cref{alg:TSL_TU_SAT_idea} terminates.
   	 \item There is no total bound for the depths of the terms of $x$, \ie, for all $b$, there is a $t\in\mathbb{N}_0$ such that $\maxDepth{\computation}{t} > b$. Hence, the term of $x$ grows infinitely often. Let 
        \[M := \{ t + 1 \mid \depth{\etaFunction{\computation}{t+1}{x}} > \maxDepth{\computation}{t}, t \in \mathbb{N}_0 \}\]
    	be the set of time steps $t$ where $\etaFunction{\computation}{t}{x}$ is deeper than all previous seen terms.
   		Since the term of $x$ grows infinitely often, $\abs{M} = \infty$. Let 
	    \begin{align*}
	        \mathcal{T}_{col} :=  \{ t_p \in \predTerms \mid &\depth{t_p} \leq \max\{\depth{p} \mid p \in \allPredicates{\varphi}\}, \\
	                                & ~~ t_p~\text{has only symbols from}~\varphi \}
	    \end{align*}
    	be the set of all predicate terms with symbols from $\varphi$ that are not deeper than the predicates of $\varphi$. These predicate terms are, later in the proof, possible ``collisions'' of constraints. Note that $\mathcal{T}_{col}$ is finite.
    	
    	Let $\sim \subseteq \mathbb{N}_0 \times \mathbb{N}_0$ be an equivalence relation on time steps: $i \sim j$ holds if, and only if both $\projection{\atPos{c}{i}}{1} = \projection{\atPos{c}{j}}{1}$ and $\recursiveAssign{\etaFunction{\computation}{i}{t_p}} = \recursiveAssign{\etaFunction{\computation}{j}{t_p}}$ hold for all $t_p \in \mathcal{T}_{col}$.
    	Since $\mathcal{B}$ has only finitely many states and since $\mathcal{T}_{col}$ is finite, $\sim$ has only finitely many equivalence classes.
    	
    	Assume that we can find points in time $i, j \in \mathbb{N}_0$ with $i <  j$, such that $j \in M$, $i \sim j$ and there exists a $k$ with $i \leq k < j$ such that $\projection{\atPos{c}{k}}{1} \in F$, then $\pref.\rec^\omega$ is an accepting lasso, where $\pref = \atPos{c}{0}.\atPos{c}{1}\dots \atPos{c}{i - 1}$ and $\rec = \atPos{c}{i}.\atPos{c}{i + 1}\dots \atPos{c}{j - 1}$. Hence, $\pref.\rec^\omega$ causes $\mathit{SatSearch}$ and thus \Cref{alg:TSL_TU_SAT_idea} to terminate:
    	Since $i \sim j$ implies that $\projection{\atPos{c}{i}}{1} = \projection{\atPos{c}{j}}{1}$, $\pref.\rec^\omega$ is indeed a run of $\mathcal{B}$.
   		 In addition, since there is a $k$ with $i \leq k < j$ such that $\projection{\atPos{c}{k}}{1} \in F$, $\pref.\rec^\omega$ is accepting.
   		 Let $\constraintTrace$ be the constraint trace of $\computation$.
   		 Using the properties of the execution effect, it remains to show that, 
        \[ \psi_Q := \left( \bigwedge_{(t_p, v) \in \atPos{\constraintTrace}{j}} \begin{cases} t_p&\text{if}~v = true\\\neg t_p&\text{if}~v = \mathit{false}\end{cases} 
            \right)\land (\etaFunction{\computation}{i}{x} = \etaFunction{\computation}{j}{x})\]
   		 is satisfiable in $T_E$. 
   		 Note that $\atPos{\constraintTrace}{j}$ itself is already consistent.
         Since $j \in M$, $\etaFunction{\computation}{j}{x}$ is the deepest term until the time step $j$, \ie, we have \[ \depth{p} \leq \depth{\etaFunction{\computation}{j}{x}} + \max\{ \depth{t_p} \mid t_p \in \allPredicates{\varphi} \} \] for all $(p, v) \in P$, where $P := \projection{\executionEffect{\pref.\rec}}{2}$. Furthermore, no more than $\abs{\mathcal{T}_{col}}$ constraints in $P$ include $\etaFunction{\computation}{j}{x}$. Therefore, contradictions in $\psi_Q$ can only arise between predicate terms of the form $\{ \etaFunction{\computation}{i}{t_p} \mid t_p \in \mathcal{T}_{col}\}$ and predicate terms of the form $\{ \etaFunction{\computation}{j}{t_p} \mid t_p \in \mathcal{T}_{col}\}$. However, $i \sim j$ implies that $\recursiveAssign{\etaFunction{\computation}{i}{t_p}} = \recursiveAssign{\etaFunction{\computation}{j}{t_p}}$ holds for all $t_p \in \mathcal{T}_{col}$.
    	Therefore, $\psi_Q$ is satisfiable in $T_E$.
		
		Hence, to prove that \Cref{alg:TSL_TU_SAT_idea} terminates, it suffices to show that we can actually find $i,j \in \mathbb{N}_0$ with the above properties. Towards a contradiction, suppose there are not such points in time.
		Since $M$ is infinite, but $\sim$ has only finitely many equivalence classes, there exists an infinite set $M' \subseteq M$, such that $a \sim b$ holds for all $a, b \in M'$. In addition, let $m' := \min(M)$.
		Since there are no points in time with the above properties by assumption, for all $i, j \in M'$ with $ i < j$, there is no $k$ with $i \leq k < j$ such that $\projection{\atPos{c}{k}}{1} \in F$. Thus, since $M'$ is infinite and unbounded (by setting $i:= m'$), for all $b \in \mathbb{N}_0$, there is a $j \in M'$ with $j>k$ such that there is no $k$ with $m' \leq k < j$ and $\projection{\atPos{c}{k}}{1} \in F$. Therefore, there is a point in time $m' \in \mathbb{N}_0$ such that for all $b > m'$ we have $\projection{\atPos{c}{b}}{1} \not\in F$. This is a contradiction to the fact that $c$ is accepting.
\end{enumerate}


\section{Extended Description of the Algorithm for Checking TSL-$\tu$-SAT}\label{app:fulldescription}

The presentation of the algorithm for checking TSL-$\tu$-SAT in \Cref{alg:TSL_TU_SAT_idea} ignores how to realize the enumeration of accepting loops, the enumeration of finite subwords of runs, and the implementation of the infinite set~$\compBSA$. 
In this section, we present an extended algorithm description that addresses these issues.

\begin{algorithm}[t]
    \SetKwFunction{SMT}{SMT} 
    \SetKwProg{Fn}{Function}{}{}
    \SetKwProg{Rec}{record}{}{}
 
    \Rec{Block}{
        state $\in Q$;
        parent $\in \{$ Block$\times \delta$, $\bot\}$
    }
    \Fn{Trace (b:Block): $\delta^*$}{
        \eIf{b.parent = $\bot$}{\Return $\varepsilon$}{(parentBlock, trans) := b.parent\; \Return (Trace(parentBlock)).trans}
    }
    \Fn{FindAcceptingLoops(b:Block): Set(Block)}{
        S := $\emptyset$\;
        p := b.parent\;
        seenAcceptingState := false\;
        \While{p $\neq\bot$}{(b', \_) := p\; \If{b'.state $\in F$}{seenAcceptingState := true}\If{b'.state = b.state $\land$ seenAcceptingState}{S := S.add(b')}
        p := b'.parent\;
        }
        \Return S
    }
    \Fn{Successors(b:Block): Set(Block)}{
        S := $\emptyset$\;
        \For{trans $\in \delta$, b.state = $\pi_1$(trans)} {
            S.add(Block\{state := $\pi_4$(trans), parent := (b, trans)\})
        }
        \Return S
    }
    \caption{Tree Data Structure}\label{alg:block}
\end{algorithm}

The enumeration is done with a breadth-first search in a tree-shaped data structure.
This data structure can be found in~\Cref{alg:block}. 
\emph{Block} is a node of this data structure. 
It consists of the current state and a parent. A parent consists of the parent node and the transition in the BSA from the parent node to the current one. 
\emph{Trace} computes the finite run that is represented by some \emph{Block}.
This means that all \emph{Blocks} together represent a set of trees, where each path from a root to some node is a finite subword or prefix of a run.
\emph{FindAcceptingLoops} finds all previous \emph{Blocks} which are in the same state as the argument such that there is an accepting state in between. That is, \emph{FindAcceptingLoops} searches for all possible lasso-shaped runs that ``end'' in the current \emph{Block}.
\emph{Successors} computes all successor \emph{Blocks} of the current \emph{Block}, \ie the \emph{Blocks} with states that can be reached by one transition.
These \emph{Blocks} have as parent the respective transition and the current \emph{Block}.

The set of all accepting runs of $\mathcal{B}$, \ie, $\compBSA \cap \accBSA{\mathcal{B}}$ is realized by a Büchi automaton $\mathcal{B}_\text{Büchi} = (Q_B, \Sigma_B, \Delta, Q_{0B}, F_B)$ over the transitions of the BSA $\mathcal{B}$ for~$\varphi$ and an exclusion set $\mathcal{E}$ of finite words over transitions.
To construct $\mathcal{B}_\text{Büchi}$ it suffices to initialize it with the same structure as $\mathcal{B}$ (i.e.\ with the same states and transitions, but the original transitions of $\mathcal{B}$ are used as transition labels).
All transition sequences with a subword in $\mathcal{E}$ should be excluded. 
To this end, we build a Büchi automaton 
    \[\mathcal{B}^\mathcal{E}_\text{Büchi} := (Q_B \times \mathcal{P}(\Sigma_B^*), \Sigma_B, \Delta_\mathcal{E}, Q_{0B} \times \{\emptyset \}, F_{0B} \times \mathcal{P}(\Sigma_B^*))\]
with
    \[\Delta_\mathcal{E} := \{ ((q, e), t, (q', e')) \mid (q,t,q') \in \Delta, e' = \{ w \mid t.w \in \mathcal{E} \cup e \} , \varepsilon \not\in e' \}\]
such that $\mathcal{L}(\mathcal{B}^\mathcal{E}_\text{Büchi}) = \mathcal{L}(\mathcal{B}_\text{Büchi}) \setminus \{ a.b.c \in \delta^\omega \mid b \in \mathcal{E}\}$.
Note that this construction can be done on-the-fly.

\begin{algorithm}[t]
    \SetKwFunction{SMT}{SMT} 
    \SetKwProg{Fn}{Function}{}{}
    \SetKwProg{Rec}{record}{}{}
     \Fn{Exclude}{
        Queue := Queue.new()\;
        \For{$q \in Q$}{Queue.append(Block\{state := $q$, parent := $\bot$\})}
        \While{$\neg$(Queue.empty)}{
            b := Queue.pop()\;
            (\_, $P$):= $\executionEffect{Trace(b)}$\;
            \eIf{$\exists t_p.~(t_p, \true), (t_p, \false) \in P$}{$\mathcal{E}$ := $\mathcal{E} \cup $ Trace(b)\;}{Queue.append(Successors(b))}
        }
    }
    \caption{Finite Transition Sequence Exclusion}\label{alg:exclude}
\end{algorithm}

\emph{Exclude} in \Cref{alg:exclude} realizes the exclusion of inconsistent finite runs. 
This is done by a breadth-first search in all \emph{Block} trees starting in any state. 
For some \emph{Block}, the finite run is the \emph{Trace}. 
If the execution effect of this subword is inconsistent, then the word is added to the exclusion set $\mathcal{E}$.
The \emph{Successors} of the respective \emph{Block} are only searched if the execution effect is consistent, as if it is not, all \emph{Traces} of the successors are already excluded.
Note that the depth of the tree of the searched \emph{Block} corresponds to the parameter $n$ of \emph{UnsatSearch} in \Cref{alg:TSL_TU_SAT_idea}.

\begin{algorithm}[t]
    \SetKwFunction{SMT}{SMT} 
    \SetKwProg{Fn}{Function}{}{}
    \SetKwProg{Rec}{record}{}{}

    \Fn{Search}{%
        Queue := Queue.new()\;
        \For{$q_0 \in Q_0$}{Queue.append(Block\{state := $q_0$, parent := $\bot$\})}
        \While{$\neg$(Queue.empty)}{%
            b := Queue.pop()\;
            Construct $\mathcal{B}^\mathcal{E}_\text{Büchi}$\;
            \If{$\exists w \in \delta^\omega.~$(Trace(b).$w) \in \mathcal{L}(\mathcal{B}^\mathcal{E}_\text{Büchi})$}{
            \For{b' $\in$ FindAcceptingLoops(b)}{%
                ($v_p$, \_):= $\executionEffect{Trace(b')}$\;
                ($v_r$, $P$):=$\executionEffect{Trace(b)}$\;
                \If{\SMT $\!\!\left(\!\!\left(\bigwedge\limits_{(t_p, v) \in P} \!\begin{cases} t_p&\!\text{if}~v = \true\\\neg t_p&\!\text{if}~v = \false\end{cases} \!\right) 
                    \!\land\! \bigwedge\limits_{\cell{c} \in \cells} v_p(\cell{c}) = v_r(\cell{c}) \!\!\right)\!=$ \sat}{%
                    \Return~\sat
                }%
            }
            Queue.append(Successors(b))%
            }
        }
        \Return~\unsat
    }
\caption{Search}\label{alg:search}
\end{algorithm}

\emph{Exclude} does neither satisfiability nor unsatisfiability checking. 
These are done by the method \emph{Search} of \Cref{alg:search} running in parallel to \emph{Exclude}. 
Similar to \emph{Exclude}, \emph{Search} does a breadth-first search on \emph{Blocks}. 
However, it starts only in accepting states. 
For some \emph{Block} $b$, it first checks whether it is possible to use the \emph{Trace} of $b$ as a still valid prefix.
If this is not possible, every run with the \emph{Trace} of $b$ as a prefix is inconsistent. 
Therefore, \emph{Search} does not consider $b$ or its successors as candidates for a satisfiable loop. 
If every \emph{Block} of the queue is not possible anymore, $\mathcal{B}_\text{Büchi}$ is empty and \emph{Search} returns \unsat. 
On the other hand, if $b$ is possible, \emph{FindAcceptingLoops} is used to calculate all lasso-shaped runs. 
For some possible beginning \emph{Block} of a lasso $b'$, in the context of \Cref{alg:TSL_TU_SAT_idea}, $\pref$ is the \emph{Trace} of $b'$ and $\pref.\rec$ is the \emph{Trace} of $b$. 
As in \Cref{alg:TSL_TU_SAT_idea}, if the SMT query is satisfiable, \emph{Search} returns \sat.

The top-level part of the algorithm is given \Cref{alg:full}.
There, the BSA $\mathcal{B}$ for $\varphi$ and $\mathcal{B}_\text{Büchi}$ are generated.
Then, \emph{Exclude} and \emph{Search} are run in parallel.
 
\begin{algorithm}[t]
    \SetKwFunction{ExecSymbolic}{ExecSymbolic} 
    \SetKwInOut{Input}{input}
    \SetKwInOut{Output}{output}

    \SetKwFunction{SMT}{SMT} 
    \SetKwProg{Fn}{Function}{}{}
    \SetKwProg{Rec}{record}{}{}
    \Input{$\varphi$: TSL Fromula}
    \Output{\sat, \unsat}
    $\mathcal{B}$ := Finitary BSA for $\varphi$ according to Theorem~\ref{thm:tsl_to_bsa}\;
    $(Q, Q_0, F, \fresh, \guards, \updateTerms, \delta) := \mathcal{B}$\;
    $\mathcal{B}_\text{Büchi}$ := Büchi automaton for the computations of $\mathcal{B}$\;
    $\mathcal{E}$ := $\emptyset$\;
    \Return parallel(Search, Exclude)
\caption{Algorithm for Checking TSL-$\tu$-SAT (Extended Description)}\label{alg:full}
\end{algorithm}

\section{Application Benchmarks}\label{app:other_benchmarks}

\paragraph*{Gamemodechooser Assumptions.}
In this benchmark, we verify assumptions in the \textit{Gamemodechooser} module of the Syntroids FPGA arcade game~\cite{GeierH0F19} which is fully specified in (and synthesized from) TSL. The assumptions $\spec{a}$ that we want to verify state mutual exclusion of different predicates on a stream that is generated by the module. They are used in the module itself as well as in two other modules that depend on the \emph{Gamemodechooser}. To do so, we state some facts~$\spec{fact}$ on constants. The module is modeled in $\spec{sys}$. It differs slightly from the original one as we had to "shift" the computation by one to set an initial value inside the formula (which is done otherwise for synthesis) and that we abstracted the computation of $\mathit{rot}$.
\allowdisplaybreaks
\begin{align*}
    \spec{fact} & := \Globally \left( \mathit{isSM}(\mathit{score}()) \land \lnot \mathit{isSM}(\mathit{radar}()) \land \lnot \mathit{isSM}(\mathit{cockpit}()) \land \right. \\
     &~~~~\lnot \mathit{isRM}(\mathit{score}()) \land \mathit{isRM}(\mathit{radar}()) \land \lnot \mathit{isRM}(\mathit{cockpit}()) \land\\
     &~~~~\left.\lnot \mathit{isCM}(\mathit{score}()) \land \lnot \mathit{isCM}(\mathit{radar}()) \land \mathit{isCM}(\mathit{cockpit}()) \right)\\
    \spec{sys} & := (\Globally \updappl{rot}{f}{rot}) \land \updconst{gamemode}{cockpit} \land \\
               &~~~~\Next\Globally (((\mathit{isSM} \mathit{gamemode}  \land \mathit{lt}(\mathit{rot}, \mathit{neg}(gms()))) \lor \\
               &~~~~~~~~~~~~(\mathit{isRM}(\mathit{gamemode}) \land \mathit{gt}(rot, gms())))\\
               &~~~~~~~~~~\leftrightarrow \updconst{gamemode}{cockpit}) \land \\
               &~~~~\Next\Globally (\mathit{isCM}(\mathit{gamemode}) \land \mathit{lt}(\mathit{rot}, \mathit{neg} (gms())) \\
               &~~~~~~~~~~ \leftrightarrow \updconst{gamemode}{radar})\\
               &~~~~\Next\Globally (\mathit{isCM}(\mathit{gamemode}) \land \mathit{gt}(\mathit{rot}, gms()) \leftrightarrow \updconst{gamemode}{score})\\
    \spec{a} & := \Next \Globally (\lnot (\mathit{isSM}(\mathit{gamemode})   \land \mathit{isRM}(\mathit{gamemode}))) \land\\
             &~~~~\Next\Globally (\lnot (\mathit{isRM}(\mathit{gamemode})   \land  \mathit{isCM}(\mathit{gamemode}))) \land\\
             &~~~~\Next\Globally (\lnot (\mathit{isCM}(\mathit{gamemode})  \land  \mathit{isSM}(\mathit{gamemode})))
\end{align*}
To prove these assumptions to be correct, we verified unsatisfiability of \[\spec{fact} \land \spec{sys} \land \lnot \spec{a}.\]

\paragraph*{Inductive Assumption.}
In this benchmark, we state an (inductive) assumption~$\spec{a}$ that we want to verify, \ie, we check for its validity. Using duality, we thus check whether $\neg \spec{a}$ is unsatisifiable:

\begin{align*}
    \spec{a} & := \predconst{zero} \land \updconst{x}{zero} \land (\Next\Globally \updappl{x}{\mathit{inc}}{x}) \land (\Globally (\predcell{x} \to p(\mathit{inc}(x)))) \\
             &~~~~~\to \Next\Globally p(x)
\end{align*}

\paragraph*{Invariant Holding.}
In this benchmark, we consider a property $p$ that is invariant under certain changes ($\spec{inv}$). The system that we want to check does one of these changes non-deterministically ($\spec{sys}$) under a changing input ($\spec{inp}$) while the property holds initially. We want to check that the property indeed always holds ($\spec{spec}$).
\begin{align*}
    \spec{inv} & := \Globally (p(x) \to p (f(x))) \land \Globally (p(x) \land p(y) \to p(g(x,y))) \\
    \spec{inp} & := \Globally (\update{y}{h(y)} \land p(y)) \\
    \spec{sys} & := p(x) \land \Globally (\update{x}{x} \lor \update{x}{f(x)} \lor \update{x}{g(x,y)}) \\
    \spec{spec} & := \Globally p(x) 
\end{align*}
We query satisfiability of $\spec{inv} \land \spec{inp} \land \spec{sys} \land \lnot \spec{spec}$.

\paragraph*{One Of Two.}
This benchmark describes a system with a single cell that non-deterministically updates this cell with one of two constant, or leaves it unchanged ($\spec{sys}$). These constants can be distinguished by respective predicates ($\spec{fact}$). We want to check whether one of these predicates always holds ($\spec{spec}$).
\begin{align*}
    \spec{fact} & := \Globally \left( \mathit{isMode}_1 (m_1()) \land \mathit{isMode}_2 (m_2()) \land\right. \\
    			&~~~~~~~\left.\lnot \mathit{isMode}_1 (m_2()) \land \lnot \mathit{isMode}_2 (m_1()) \right) \\
    \spec{sys} & := \upd{mode}{m_1()} \land \\
    			&~~~~~~~ \Next\Globally \left( \upd{mode}{m_1()} \lor \upd{mode}{m_2()} \lor \updcell{mode}{mode} \right) \\
    \spec{spec} & := \Next\Globally (\mathit{isMode}_1(mode) \lor \mathit{isMode}_2(mode))
\end{align*}
To show this, we query satisfiability of $\spec{fact} \land \spec{sys} \land \lnot \spec{spec}$.

\paragraph*{One Of Three.} 
This benchmark is essentially the same as the previous benchmark \textit{One Of Two} but with three instead of two constants:
\begin{align*}
    \spec{fact} & := \Globally \left( \bigwedge_{i \in \{1,2,3\}} \mathit{isMode}_i (m_i()) \land \bigwedge_{i, j \in \{1,2,3\}, i \neq j} \lnot \mathit{isMode}_i (m_j()) \right) \\
    \spec{sys} & := \upd{mode}{m_1()} \land \Next\Globally \left( \right. \\ &~~~~ \left. \upd{mode}{m_1()} \lor \upd{mode}{m_2()} \lor \right. \\ &~~~~ \left. \upd{mode}{m_3()} \lor \updcell{mode}{mode} \right) \\
    \spec{spec} & := \Next\Globally (\mathit{isMode}_1(mode) \lor \mathit{isMode}_2(mode) \lor \mathit{isMode}_3(mode))
\end{align*}
We query satisfiability of $\spec{fact} \land \spec{sys} \land \lnot \spec{spec}$.

\paragraph*{Scheduler.}
This benchmark describes a scheduler that ``executes'' two programs represented by functions $f_1$, $f_2$. It has one memory cell and one cell to store the content for the program that is idle ($\spec{sys}$). We know that both programs hold some invariant property $p$ throughout each execution step ($\spec{inv}$). We want to check whether this property then always holds in both memory cells ($\spec{spec}$).
\begin{align*}
    \spec{init} & := \updconst{mem}{init} \land \updconst{mem}{init} \\
    \spec{run}  & := \Next\Globally (\updcell{swap}{swap} \lor \updcell{swap}{mem}) \land \\
                &~~~~\Next\Globally (\updcell{mem}{swap} \lor \updappl{mem}{f_1}{mem} \lor \updappl{mem}{f_2}{mem}) \\
    \spec{safe} & := \Globally (\updappl{mem}{f_1}{mem} \lor \updappl{mem}{f_2}{mem} \to \updcell{swap}{swap}) \land \\
                &~~~~\Globally (\updcell{mem}{swap} \leftrightarrow \updcell{swap}{mem}) \\
                &~~~~\Globally (\updappl{mem}{f_1}{mem} \land \Next\Next \updappl{mem}{f_2}{mem} \to \Next \updcell{swap}{mem}) \\
                &~~~~\Globally (\updappl{mem}{f_2}{mem} \land \Next\Next \updappl{mem}{f_1}{mem} \to \Next \updcell{swap}{mem}) \\
    \spec{live} & := \Globally\Eventually \updappl{mem}{f_1}{mem} \land \Globally\Eventually \updappl{mem}{f_2}{mem} \\
    \spec{sys}  & := \spec{init} \land \spec{run} \land \spec{safe} \land \spec{live} \\
    \spec{inv}  & := \Globally \left(p(\mathit{init}) \land (p(\mathit{mem}) \leftrightarrow p(f_1(\mathit{mem}))) \land (p(\mathit{mem}) \leftrightarrow p(f_2(\mathit{mem}))) \right)\\
    \spec{spec} & := \Next\Globally (p(\mathit{mem}) \land p(\mathit{swap}))
\end{align*}
We query satisfiability of $\spec{sys} \land \spec{inv} \land \lnot \spec{spec}$.

\paragraph*{Injector.}
This benchmark describes a system ($\spec{sys}$) that injects values with some property $p$ into a stream of data fulfilling $p$. The inputs are modeled by~$\spec{inp}$. We want to verify that the output values of the system still satisfy $p$ ($\spec{spec}$).
\begin{align*}
    \spec{inp} & := \Globally (\updappl{stream}{f}{stream} \land \updappl{inject}{f}{inject} \land \predcell{stream}) \\
    \spec{sys} & := \Globally (\updcell{memory}{inject} \leftrightarrow \predcell{inject}) \\
               &~~~~\Globally (\predcell{inject} \to \Next\Eventually \updcell{out}{memory})\\
               &~~~~\Globally (\updcell{out}{memory} \lor \updcell{out}{stream})\\
    \spec{spec} &:= \Next\Globally \predcell{out}
\end{align*}
We query satisfiability of $\spec{inp} \land \spec{sys} \land \lnot \spec{spec}$.

\paragraph*{Holding Arbiter.}
This benchmark describes a system consisting of two equal client systems $\spec{sys1}$, $\spec{sys2}$ and an arbiter $\spec{arb}$ which copies data from the client systems on request. The client systems only pose requests if their data fulfills a property $p$. We check that the arbiter only outputs data with this property~($\spec{spec}$). $\spec{inp}$ is used for input simulation and $\spec{fact}$ models constant facts.
\begin{align*}
    \spec{inp}  & := \Globally \updappl{in1}{f}{in1} \land \Globally \updappl{in2}{f}{in2} \\
    \spec{fact} & := \Globally \predconst{default}\\
    \spec{sys1} & := \Globally (\predcell{in1} \to \updcell{out1}{in1} \land \updconst{req1}{on}) \land\\
                &~~~~\Globally (\lnot \predcell{in1} \to \updcell{out1}{out1} \land \updconst{req1}{off})\\
    \spec{sys2} & := \Globally (\predcell{in2} \to \updcell{out2}{in2} \land \updconst{req2}{on}) \land\\
                &~~~~\Globally (\lnot \predcell{in2} \to \updcell{out2}{out2} \land \updconst{req2}{off})\\
    \spec{arb}  & := \updconst{out}{default} \land  \Next\Globally \left( \right.\\
                &~~~~~~(\updconst{req1}{on} \to \Next\Eventually \updcell{out}{out1}) \land\\ 
                &~~~~~\left.(\updconst{req2}{on} \to \Next\Eventually \updcell{out}{out2}) \right)\\
    \spec{spec} & := \Next\Globally \predcell{out}\\
\end{align*}
We query satisfiability of $\spec{inp} \land \spec{fact} \land (\spec{sys1} \land \spec{sys2} \land \spec{arb}) \land \lnot \spec{spec}$.

\paragraph*{Small Holding Arbiter.}
This benchmark is a variation of the previous one where the systems are replaced by a formula describing their observed behavior.
\begin{align*}
    \spec{inp}  & := \Globally \updappl{in1}{f}{in1} \land \Globally \updappl{in2}{f}{in2} \\
    \spec{fact} & := \Globally \predconst{default}\\
    \spec{sys1} & := \Globally \left( \updconst{req1}{on} \to (\updcell{out}{out1}~\mathcal{R}~\predcell{out}) \right) \\
    \spec{sys2} & := \Globally \left( \updconst{req2}{on} \to (\updcell{out}{out2}~\mathcal{R}~\predcell{out}) \right) \\
    \spec{arb}  & := \updconst{out}{default} \land  \Next\Globally \left( \right.\\
                &~~~~~~(\updconst{req1}{on} \to \Eventually \updcell{out}{out1}) \land\\ 
                &~~~~~\left.(\updconst{req2}{on} \to \Eventually \updcell{out}{out2}) \right)\\
    \spec{spec} & := \Next\Globally \predcell{out}
\end{align*}

\paragraph*{Pass Through Arbiter.}
This benchmark describes an arbiter $\spec{sys}$ which copies data from two input streams if it fulfills a property $p$ and schedules it to be written to a single output stream (such that no stream is left out). The ``requests'' from a classic arbiter are the fulfilled properties. Again, we want to check whether the outputted data fulfills $p$~($\spec{spec}$). In addition, we use $\spec{inp}$ to simulate inputs and $\spec{init}$ to describe the initial state.
\begin{align*}
    \spec{inp} & := \Globally \updappl{in1}{f}{in1} \land \Globally \updappl{in2}{f}{in2} \\
    \spec{sys} & := \Globally (\predcell{c1} \to \Eventually \updcell{out}{c1}) \land\\
               &~~~~\Globally (\predcell{c2} \to \Eventually \updcell{out}{c2}) \land\\
               &~~~~\Globally (\updcell{out}{c1} \to \predcell{c1}) \land \Globally (\updcell{out}{c2} \to \predcell{c2})\land\\
               &~~~~\Globally (\updcell{c1}{in1} \leftrightarrow \predcell{in1}) \land \Globally (\updcell{c2}{in2} \leftrightarrow \predcell{in2})\\
    \spec{init} & := \predcell{c1} \land \predcell{c2} \land \predcell{out}\\
    \spec{spec} & := \Next\Globally \predcell{out}
\end{align*}
We query satisfiability of $\spec{inp} \land \spec{sys} \land \spec{init} \land \lnot \spec{spec}$.

\paragraph*{Approximate Pass Through Arbiter.}
This benchmark is a variation of the previous one where the granting behavior is replaced by a non-deterministic choice.
\begin{align*}
    \spec{inp} & := \Globally \updappl{in1}{f}{in1} \land \Globally \updappl{in2}{f}{in2} \\
    \spec{sys} & := \Globally (\updcell{out}{c1} \lor \updcell{out}{c2} \lor \updcell{out}{out}) \land\\
               &~~~~\Globally (\updcell{out}{c1} \to \predcell{c1}) \land \Globally (\updcell{out}{c2} \to \predcell{c2}) \\
               &~~~~\Globally (\updcell{c1}{in1} \leftrightarrow \predcell{in1}) \land \Globally (\updcell{c2}{in2} \leftrightarrow \predcell{in2})\\
    \spec{init} & := \predcell{c1} \land \predcell{c2} \land \predcell{out}\\
    \spec{spec} & := \Next\Globally \predcell{out}
\end{align*}

\end{document}